\newif\ifpictures
\newif\ifcomment
\newcommand{\cV}{\mathcal{V}}
\newcommand{\alp}{\alpha}
\newcommand{\lam}{\lambda}
\newcommand{\constraints}{\ensuremath{\mathcal{G}}}
\newcommand{\feasible}[1][\constraints]{\ensuremath{#1_{+}}}
\newcommand{\NNcone}[1]{\ensuremath{\mathcal{K}({#1})}}
\newcommand{\preprime}[1]{\ensuremath{\operatorname{Prep}(#1)}}
\newcommand{\HierPut}[3]{\ensuremath{\operatorname{Hier}_{#2}^{#3}({#1})}}
\newcommand{\HierSch}[3]{\ensuremath{\operatorname{Hier}_{\preprime{#2}}^{#3}({#1})}}
\newcommand{\Prog}[3]{\ensuremath{\mathbb{#1}_{#2}^{#3}}}
\newcommand{\ProgSch}[3]{\ensuremath{\mathbb{#1}_{\preprime{#2}}^{#3}}}
\newcommand{\BoundOptimal}[3]{\ensuremath{{#1}_{#2}^{#3}}}
\newcommand{\Bound}[4]{\ensuremath{{#1}_{#2, #3}^{#4}}}
\newcommand{\BoundSch}[4]{\ensuremath{{#1}_{#2, \preprime{#3}}^{#4}}}
\newcommand{\SizePrep}[1]{\ensuremath{m_{#1}}}
\newcommand{\KG}{\mathcal{K}_\mathcal{G}}
\newcommand{\sSOS}{\mathbb{SOS}}
\newcommand{\sSOSD}{\overline{\mathbb{SOS}}}
\newcommand{\PSOS}{\mathbb{SOS}}
\newcommand{\PSOSD}{\overline{\mathbb{SOS}}}
\newcommand{\TSOS}{\textsc{SOS}}
\newcommand{\BSOS}[3]{\ensuremath{{#1}_{\PSOS, #2}^{#3}}}
\newcommand{\BSOSD}[3]{\ensuremath{{\overline{#1}}_{\PSOS, #2}^{#3}}}
\newcommand{\sSDSOS}{\ensuremath{\mathbb{SDSOS}}} 
\newcommand{\sSDSOSD}{\ensuremath{\overline{\mathbb{SDSOS}}}} 
\newcommand{\PSDSOS}{\ensuremath{\mathbb{SDSOS}}}
\newcommand{\PSDSOSD}{\overline{\mathbb{SDSOS}}}
\newcommand{\TSDSOS}{\textsc{SDSOS}}
\newcommand{\BSDSOS}[3]{\ensuremath{{#1}_{\PSDSOS, #2}^{#3}}}
\newcommand{\BSDSOSD}[3]{\ensuremath{{\overline{#1}}_{\PSDSOS, #2}^{#3}}}
\newcommand{\closSDSOS}{\ensuremath{\operatorname{cls}(\mathbb{SDSOS})}}
\newcommand{\sSA}{\ensuremath{\mathbb{SA}}} 
\newcommand{\sSAD}{\ensuremath{\overline{\mathbb{SA}}}}
\newcommand{\PSA}{\ensuremath{\mathbb{SA}}}
\newcommand{\PSAD}{\overline{\mathbb{SA}}}
\newcommand{\TSA}{\textsc{SA}}
\newcommand{\BSA}[3]{\ensuremath{{#1}_{\PSA, #2}^{#3}}}
\newcommand{\BSAD}[3]{\ensuremath{{\overline{#1}}_{\PSA, #2}^{#3}}}
\newcommand{\sSONC}{\ensuremath{\mathbb{SONC}}} 
\newcommand{\PSONC}{\ensuremath{\mathbb{SONC}}}
\newcommand{\TSONC}{\textsc{SONC}}
\newcommand{\BSONC}[3]{\ensuremath{{#1}_{\PSONC, #2}^{#3}}}
\newcommand{\closSONC}{\ensuremath{\operatorname{cls}(\mathbb{SONC})}}
\newcommand{\sSOCP}{\ensuremath{\mathbb{SOCP}}} 
\newcommand{\PSOCP}{\ensuremath{\mathbb{SOCP}}}
\newcommand{\BSOCP}[3]{\ensuremath{{#1}_{\PSOCP, #2}^{#3}}}
\newcommand{\sGEN}{\ensuremath{\mathbb{GEN}}}
\newcommand{\PGEN}{\mathbb{GEN}}
\title[]{New Dependencies of Hierarchies in Polynomial Optimization}
\author{ Adam Kurpisz}
\address{Adam Kurpisz, ETH Z\"urich, Department of Mathematics, R\"amistrasse 101,	8092 Z\"urich, Switzerland \medskip}
\email{adam.kurpisz@ifor.math.ethz.ch}
\author{Timo de Wolff}
\address{Timo de Wolff, Technische Universit\"at Berlin, Institut f\"ur Mathematik, Sekr. MA 6-2, Stra{\ss}e des 17.~Juni 136, 10623 Berlin,
	Germany\medskip}
\email{dewolff@math.tu-berlin.de}
\subjclass[2010]{Primary: 14P10, 68Q25, 90C60; Secondary: 14Q20; \textit{ACM Subject Classification:
{Theory of computation $\rightarrow$ Proof complexity}
{Theory of computation $\rightarrow$ Linear programming,~Semidefinite programming,~Convex optimization}}}
\keywords{Hierarchy, nonnegativity, polynomial comparable, polynomial optimization, Sherali Adams, sum of diagonally dominant polynomials, sum of nonnegative circuit polynomials, sum of squares}
\begin{document}

\begin{abstract}
    We compare four key hierarchies for solving Constrained Polynomial Optimization Problems (CPOP): \emph{Sum of Squares (SOS)}, \emph{Sum of Diagonally Dominant Polynomials (SDSOS)}, \emph{Sum of Nonnegative Circuits (SONC)}, and the \emph{Sherali Adams (SA)} hierarchies. We prove a collection of dependencies among these hierarchies both for general CPOPs and for optimization problems on the Boolean hypercube. Key results include for the general case that the (SONC) and (SOS) hierarchy are polynomially incomparable, while (SDSOS) is contained in (SONC). A direct consequence is the non-existence of a Putinar-like Positivstellensatz for SDSOS. On the Boolean hypercube, we show as a main result that Schm\"udgen-like versions of the hierarchies SDSOS$^*$, SONC$^*$, and SA$^*$ are polynomially equivalent. Moreover, we show that SA$^*$ is contained in \emph{any} Schm\"udgen-like hierarchy that provides a $O(n)$ degree bound.

	
\end{abstract}

	\maketitle

	\section{Introduction}

	A \struc{\emph{Constrained Polynomial Optimization Problem (CPOP)}}  is of the form
	\begin{eqnarray*}
	& & \min f(\mathbf{x}), \\
	& \text{ subject to} & g_1(\mathbf{x}),\ldots,g_m(\mathbf{x}) \geq 0, \label{Equ:HypercubeOptimizationProblem} 
	\end{eqnarray*}
	where $f(\mathbf{x})$ and $g_1(\mathbf{x}),\ldots,g_m(\mathbf{x})$ are $n$-variate real polynomials. Solving CPOP is a crucial nonconvex optimization problem, which lies at the core of both theoretical and applied computer science.
	A special case of CPOP is a \struc{\emph{Binary Constrained Polynomial Optimization Problem (BCPOP)}} where the polynomials $\pm(x_i^2-x_i)$ are among the polynomials defining the feasibility set. 
	Many important optimization problems belong to the BCPOP class. 
	However, solving these is NP-hard in general. 
	
	A CPOP can be equivalently seen as the problem of maximizing a real $\lambda$ such that $f-\lambda$ is nonnegative over the \struc{\emph{semialgebraic}} set defined by the polynomials $g_1(\mathbf{x}),\ldots,g_m(\mathbf{x})$. 
	This is an interesting perspective since various techniques form real algebraic geometry provide methods for certifying nonnegativity of a real polynomial over semialgebraic sets. The class of such theorems is called \struc{\emph{Positivstellens\"atze}}. 
	These theorems state that, under some assumptions, a polynomial $f$, which is positive (or nonnegative) over the feasibility set, can be expressed in a particular algebraic way. 
	Typically, this algebraic expression is a sum of nonnegative polynomials from a chosen \struc{\textit{ground set}} of nonnegative polynomials 
	multiplied by the polynomials defining the feasibility set.
	Choosing a proper ground set of nonnegative polynomials is crucial from the perspective of optimization. 
	Ideally, both testing membership in the ground set and deciding nonnegativity of a polynomial in the ground set should be efficiently doable. 
	Moreover, fixing the maximum degree of polynomials in the ground sets, used for a representation of $f$, provides a family of algorithms parameterized by an integer $d$, which gives a sequence of lower bounds for the value of CPOP. 
	If the ground set of polynomials is chosen properly, then the sequence of lower bounds converges in $d$ to the optimal value of CPOP.
	
	One of the most successful approaches for constructing theoretically efficient algorithms is the \struc{\textit{Sum of Squares (SOS)}} method~\cite{GrigorievV01,Nesterov00,parrilo00,schor87}, known as \struc{\textit{Lasserre relaxation}} \cite{Lasserre01}. 
	The method relies on Putinar's Positivstellensatz~\cite{Putinar93} using sum of squares of polynomials as the ground set. 
	Finding a degree $d$ SOS certificate for nonnegativity of $f$ can be performed by solving a \struc{\emph{semidefinite programming} (SDP)} formulation of size $\binom{n+d}{d}^{O(1)}$. 
	Finally, for every (feasible) $n$-variate hypercube optimization problem, with constraints of degree at most $d$, there exists a degree $2(n +\lceil d/2\rceil )$ SOS certificate, see e.g.,~\cite{BarakS16}. 
	
	The SOS algorithm is a frontier method in algorithm design. 
	It was used to provide the best available algorithms for a variety of combinatorial optimization problems. 
	The Lov\'{a}sz $\theta$-function~\cite{Lovasz79} for the \textsc{Independent Set} problem is implied by the SOS algorithm of degree 2. 
	Moreover, the Goemans-Williamson relaxation~\cite{GoemansW95} for the \textsc{Max Cut} problem and the Goemans-Linial relaxation for the \textsc{Sparsest Cut} problem (analyzed in~\cite{AroraRV09}) can be obtained by the SOS algorithm of degree 2 and 6, respectively. 
	SOS was also proven to be a successful method for \textsc{Maximum Constraint Satisfaction} problems (\textsc{Max CSP}). For \textsc{Max CSP}, the SOS algorithm is as powerful as any SDP relaxation of comparable size~ \cite{LeeRagSteu15}. 
	Furthermore, SOS was applied to problems in \textsc{dictionary learning}~\cite{BarakKS15,SchrammS17}, \textsc{tensor completion and decomposition}~\cite{BarakM16,HopkinsSSS16,PotechinS17}, and \textsc{robust estimation}~\cite{KothariSS18}. 
	For other applications of the SOS method see e.g.,~\cite{BarakRS11,BateniCG09,Chlamtac07,ChlamtacS08,DBLP:conf/soda/CyganGM13,VegaK07,GuruswamiS11,MagenM09,Mastrolilli17,RaghavendraT12}, and the surveys~\cite{Chla12,Laurent03,laurent09}.
	

	From a practical perspective however, solving SDP problems is known to be very time consuming.
	Moreover, from a theoretical point of view, it is an open problem whether an SDP of size $n^{O(d)}$ can be solved in time $n^{O(d)}$~\cite{ODonnell16,RaghavendraW17}. Hence, various methods have been proposed to choose different ground sets of polynomials to make a resulting problem easier to solve, but still effective. 
	
	In~\cite{AhmadiM14} Ahmadi and Majumdar propose an algorithmic framework by choosing the ground set of polynomials to be \struc{\emph{scaled diagonally-dominant polynomials (SDSOS)}}. 
	SDSOS polynomials can be seen as the binomial squares. Thus, the SDSOS algorithm is not stronger than the SOS algorithm. 
	However, searching for a degree-$d$ SDSOS certificate can be performed using \struc{\emph{Second Order Conic Program (SOCP)}} of size $\binom{n+d}{d}^{O(1)}$; see \cite{AhmadiM14}. Since, in practice, an SOCP can be solved much faster than an SDP, the algorithm attracted a lot of attention and has been used to solve problems in Robotics and Control~\cite{MajumdarAT14, Leong18,PeniP15, SootlaA16, ZhengFP18}, Option Pricing~\cite{AhmadiM14}, Power Flow~\cite{Kuang17, SalgadoSTL18}, and Discrete Geometry~\cite{DiasL16}.
	
	An alternative approach, that is a more tractable method than the SOS, was initiated by Sherali and Adams in~\cite{SheraliA90}. The technique was first introduced as a method to tighten the \struc{\emph{Linear Program (LP)}} relaxations for BCPOP problems and for such settings finding the degree $d$ certificate can be done by solving an LP of size 
	$\binom{n+d}{d}^{O(1)}$.
	The \struc{\emph{Sherali Adams (SA)}} algorithm arises from using the set of polynomials depending on at most $d$ variables, which are nonnegative on the Boolean hypercube. 
	These polynomials are called \struc{\textit{$d$-juntas}}. 
	The SA algorithm was used to construct some of the most prominent algorithms with good asymptotic running time in combinatorial optimization~\cite{ChanLRS16,LeveyR16,ThapperZ17}, logic~\cite{AtseriasM13}, and other fields of computer science.

	%
	%
	%
	
	Finally, a method independent from SOS was introduced in~\cite{Iliman:deWolff:Circuits} using \struc{\emph{Sum of Nonnegative Circuit Polynomials (SONC)}} as a ground set. These polynomials form a full dimensional cone in the cone of nonnegative polynomials, which is not contained in the SOS cone. 
	For example, the well-known Motzkin polynomial is a nonnegative circuit polynomial, but not an SOS.
	Moreover, SONCs generalize polynomials which are certified to be nonnegative via the arithmetic-geometric mean inequality \cite{Reznick:AGI}. 
	SONC certificates of degree $d$ can be computed via a convex optimization program called \textit{\struc{Relative Entropy Programming (REP)}} of size $\binom{n+d}{d}^{O(1)}$ \cite[Theorem 5.3]{Dressler:Iliman:deWolff:Positivstellensatz}; see also \cite{Chandrasekaran:Shah:SignomialOptimization,Chandrasekaran:Murray:Wierman}. Recently, an experimental comparison of SONC with the SOS method for unconstrained optimization was presented in~\cite{Seidler:deWolff:ExperimentalComparisonSONCandSOS}.

	For all presented algorithms, one can define a potentially stronger algorithm without changing the corresponding ground set of polynomials, by using a more general construction for the \struc{\textit{certificate}} of nonnegativity. Such a certificate expresses a polynomial, which is nonnegative over a given semialgebraic set, as a sum of polynomials from the ground set multiplied by the product of polynomials defining the semialgebraic set; see \cref{sec:Preliminaries} for further details. We call the resulting systems $\struc{SOS^*}$, $\struc{SDSOS^{*}}$, $\struc{SA^{*}}$ and $\struc{SONC^*}$. Some of these extensions were intensively studied in the literature, see e.g.,~\cite{GrigorievHP02,Worah15}. 
	\subsection*{Our Results}
	In this paper, we provide an extensive comparison of the presented semialgebraic proof systems. More precisely, following the definitions in e.g.,~\cite{BeameFIKPPR18}, we analyze their \struc{\textit{polynomial comparability}}:
	\begin{definition}
		Let $P$ and $Q$ be semialgebraic proof systems. $P$ \struc{\emph{contains}} $Q$  if for every semialgebraic set $\feasible$ and a polynomial $f$ admitting a degree $d$ certificate of nonnegativity over $\feasible$ in $Q$, $f$ admits also a degree $O(d)$ certificate in $P$.
		System $P$ \struc{\emph{strictly contains}} $Q$  if $P$ contains $Q$ but $Q$ does not contain $P$.
		Systems $P$ and $Q$ are \struc{\emph{polynomially equivalent}} if $P$ contains $Q$ and $Q$ contains $P$. Finally, systems $P$ and $Q$ are \struc{\emph{polynomially incomparable}} if neither $P$ contains $Q$ nor $Q$ contains $P$.
		\label{Def:PolynomialEquivalent}
	\end{definition}
	
	For a more detailed definition of proof systems and their comparability, see \cref{Sec:ComparingProofSystems}.
	
	
	In this article, we show the dependencies between the proof systems presented in \cref{fig:moti}.
	
	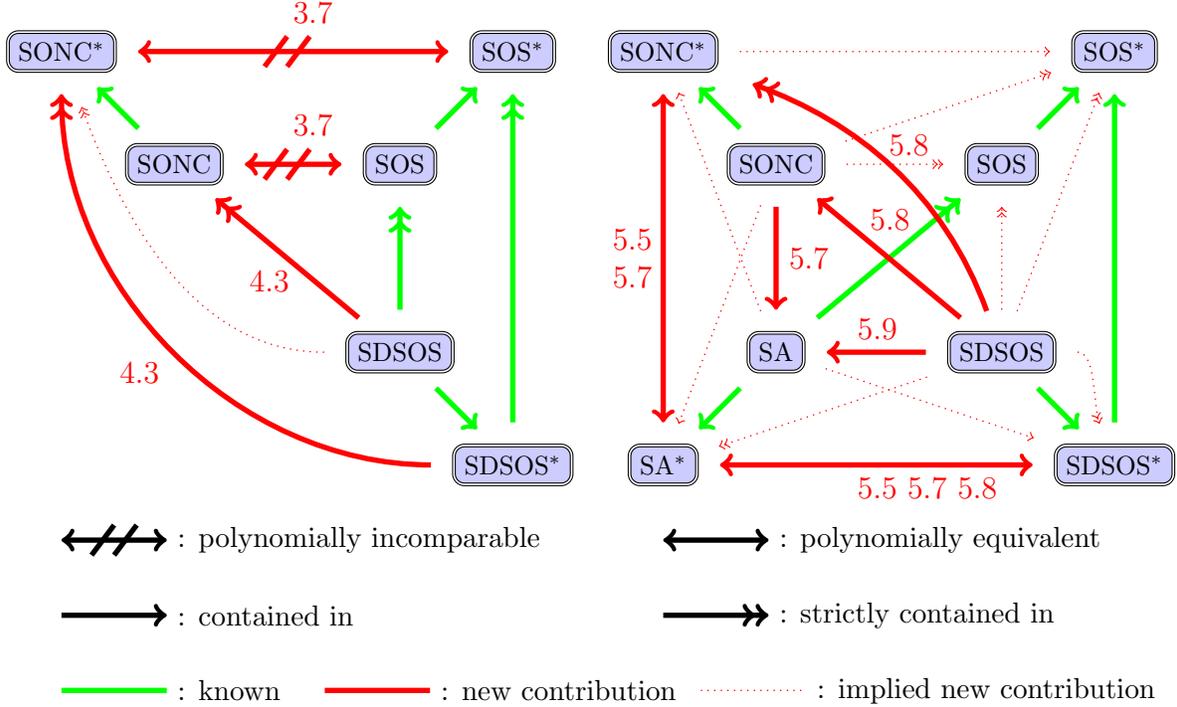
\begin{figure}[H]
		\begin{tikzpicture}
		\pgfmathsetmacro{\myheight}{2.5}
		\pgfmathsetmacro{\myshift}{0.8}

		
		\node [fill=blue!20,draw, double, rounded corners] (sonc*) at (0,5.5) {\footnotesize SONC$^*$};
		\node [fill=blue!20,draw, double, rounded corners] (sdsos*) at (6,0) {\footnotesize SDSOS$^*$};
		\node [fill=blue!20,draw, double, rounded corners] (sos*) at (6,5.5) {\footnotesize SOS$^*$};
		
		\draw[<->, line width = 2pt, red, shorten <=0.3cm, shorten >=0.3cm] (sonc*) --node [text width=0.0cm,midway,above=0.5em ]{\ref{cor:sonc_vs_sos}} (sos*) {{}};
		\draw[line width = 2pt, red] (2.7,5.3) -- (3,5.7);
		\draw[line width = 2pt, red] (3,5.3) -- (3.3,5.7);
		\draw[->>, line width = 2pt, green, shorten <=0.3cm, shorten >=0.3cm] (sdsos*) -- (sos*) {{}};
		\draw[->>, line width = 2pt, red, shorten <=0.3cm, shorten >=0.3cm] (sdsos*) to [out=180,in=270] node [text width=1.5cm,midway,below ] {\ref{cor:sdsos_vs_sonc}} (sonc*) {{}};

		\node [fill=blue!20,draw, double, rounded corners] (sonc) at (1.5,4) {\footnotesize SONC};
		\node [fill=blue!20,draw, double, rounded corners] (sdsos) at (4.5,1.5) {\footnotesize SDSOS};
		\node [fill=blue!20,draw, double, rounded corners] (sos) at (4.5,4) {\footnotesize SOS};
				
		\draw[<->, line width = 2pt, red, shorten <=0.3cm, shorten >=0.3cm] (sonc) -- node [text width=0.0cm,midway,above=0.5em ]{\ref{cor:sonc_vs_sos}} (sos) {{}};
		\draw[line width = 2pt, red] (2.7,3.8) -- (3,4.2);
		\draw[line width = 2pt, red] (3,3.8) -- (3.3,4.2);
		\draw[->>, line width = 2pt, green, shorten <=0.3cm, shorten >=0.3cm] (sdsos) -- (sos) {{}};
		\draw[->>, line width = 2pt, red, shorten <=0.3cm, shorten >=0.3cm] (sdsos) -- node [text width=1cm,midway,below ] {\ref{cor:sdsos_vs_sonc}} (sonc) {{}};

		\draw[->, line width = 2pt, green, shorten <=0.3cm, shorten >=0.3cm] (sdsos) -- (sdsos*) {{}};
		\draw[->, line width = 2pt, green, shorten <=0.3cm, shorten >=0.3cm] (sonc) -- (sonc*) {{}};
		\draw[->, line width = 2pt, green, shorten <=0.3cm, shorten >=0.3cm] (sos) -- (sos*) {{}};

		\draw[->>, dotted, line width = 0.5pt, red, shorten <=0.3cm, shorten >=0.5cm] (sdsos) to [out=180,in=290]  (sonc*) {{}};


		\node [fill=blue!20,draw, double, rounded corners] (sonc*2) at (8,5.5) {\footnotesize SONC$^*$};
		\node [fill=blue!20,draw, double, rounded corners] (sdsos*2) at (14,0) {\footnotesize SDSOS$^*$};
		\node [fill=blue!20,draw, double, rounded corners] (sos*2) at (14,5.5) {\footnotesize SOS$^*$};
		\node [fill=blue!20,draw, double, rounded corners] (sa*2) at (8,0) {\footnotesize SA$^*$};
	
		\draw[<->, line width = 2pt, red, shorten <=0.3cm, shorten >=0.3cm] (sonc*2) -- node [text width=0.5cm,midway,left ] {\ref{lem:sonc_vs_sa_0/1}~\\\ref{cor:sonc_vs_sa_0/1}} (sa*2) {{}};
		\draw[<->, line width = 2pt, red, shorten <=0.3cm, shorten >=0.3cm] (sdsos*2) -- node [text width=0.5cm,midway,below ] {\ref{lem:sonc_vs_sa_0/1}~\ref{cor:sonc_vs_sa_0/1}~\ref{thm:sonc_vs_sdsos_0/1}} (sa*2) {{}};
		\draw[->, line width = 2pt, green, shorten <=0.3cm, shorten >=0.3cm] (sdsos*2) -- (sos*2) {{}};
		
		\node [fill=blue!20,draw, double, rounded corners] (sonc2) at (9.5,4) {\footnotesize SONC};
		\node [fill=blue!20,draw, double, rounded corners] (sdsos2) at (12.5,1.5) {\footnotesize SDSOS};
		\node [fill=blue!20,draw, double, rounded corners] (sos2) at (12.5,4) {\footnotesize SOS};
		\node [fill=blue!20,draw, double, rounded corners] (sa2) at (9.5,1.5) {\footnotesize SA};

		\draw[->, line width = 2pt, red, shorten <=0.3cm, shorten >=0.3cm] (sonc2) -- node [text width=0.5cm,midway,right] {\ref{cor:sonc_vs_sa_0/1}}(sa2) {{}};
		\draw[->, line width = 2pt, red, shorten <=0.3cm, shorten >=0.3cm] (sdsos2) -- node [text width=0.5cm,midway,above ] {\ref{thm:SDSoSvsSA_0/1}} (sa2) {{}};
		\draw[->>, line width = 2pt, green, shorten <=0.3cm, shorten >=0.3cm] (sa2) -- (sos2) {{}};
		\draw[->, line width = 2pt, red, shorten <=0.3cm, shorten >=0.3cm] (sdsos2) -- node [text width=0.5cm,midway,above=0.5em ] {\ref{thm:sonc_vs_sdsos_0/1}}  (sonc2) {{}};		
								
		\draw[->, line width = 2pt, green, shorten <=0.3cm, shorten >=0.3cm] (sdsos2) -- (sdsos*2) {{}};
		\draw[->, line width = 2pt, green, shorten <=0.3cm, shorten >=0.3cm] (sonc2) -- (sonc*2) {{}};
		\draw[->, line width = 2pt, green, shorten <=0.3cm, shorten >=0.3cm] (sos2) -- (sos*2) {{}};
		\draw[->, line width = 2pt, green, shorten <=0.3cm, shorten >=0.3cm] (sa2) -- (sa*2) {{}};	
		\draw[->>, line width = 2pt, red, shorten <=0.3cm, shorten >=0.5cm] (sdsos2) to [out=110,in=340]  node [text width=0.0cm,midway,above ] {\ref{thm:sonc_vs_sdsos_0/1}} (sonc*2) {{}};	

		\draw[->>, dotted, line width = 0.5pt, red, shorten <=0.3cm, shorten >=0.3cm] (sonc2) -- (sos2) {{}};
		\draw[->, dotted, line width = 0.5pt, red, shorten <=0.3cm, shorten >=0.3cm] (sa2) -- (sonc*2) {{}};
		\draw[->, dotted, line width = 0.5pt, red, shorten <=0.3cm, shorten >=0.3cm] (sonc2) -- (sa*2) {{}};
		\draw[->, dotted, line width = 0.5pt, red, shorten <=0.3cm, shorten >=0.3cm] (sa2) -- (sdsos*2) {{}};		
		\draw[->>, dotted, line width = 0.5pt, red, shorten <=0.3cm, shorten >=0.3cm] (sdsos2) -- (sa*2) {{}};		
		\draw[->>, dotted, line width = 0.5pt, red, shorten <=0.3cm, shorten >=0.3cm] (sonc2) -- (sos*2) {{}};			
		\draw[->, dotted, line width = 0.5pt, red, shorten <=0.3cm, shorten >=0.3cm] (sonc*2) -- (sos*2) {{}};						
		
		\draw[->>, dotted, line width = 0.5pt, red, shorten <=0.3cm, shorten >=0.3cm] (sdsos2) -- (sos*2) {{}};									
		\draw[->>, dotted, line width = 0.5pt, red, shorten <=0.3cm, shorten >=0.3cm] (sdsos2) -- (sos2) {{}};
		
		\draw[->>, dotted, line width = 0.5pt, red, shorten <=0.3cm, shorten >=0.3cm] (sdsos2) to [out=0,in=110]  
		(sdsos*2) {{}};
		
		\draw[<->, line width = 2pt] (0,-1) -- (1.4,-1) {{}};
		\draw[line width = 2pt] (0.4,-1.2) -- (0.7,-0.8);
		\draw[line width = 2pt] (0.7,-1.2) -- (1,-0.8);
		\node[right] at (1.4,-1) {\small : polynomially incomparable};
		
		\draw[<->, line width = 2pt] (8,-1) -- (9.4,-1) {{}};
		\node[right] at (9.4,-1) {\small : polynomially equivalent};
		
		\draw[->, line width = 2pt] (0,-2) -- (1.4,-2) {{}};
		\node[right] at (1.4,-2) {\small : contained in};
	
		\draw[->>, line width = 2pt] (8,-2) -- (9.4,-2) {{}};
		\node[right] at (9.4,-2) {\small : strictly contained in};	
	
		\draw[line width = 2pt, green] (0,-3) -- (1.4,-3) {{}};
		\node[right] at (1.4,-3) {\small : known};
		
		\draw[line width = 2pt, red] (3.5,-3) -- (4.9,-3) {{}};
		\node[right] at (4.9,-3) {\small : new contribution};
		
		\draw[line width = 0.5pt, dotted, red] (8.5,-3) -- (9.9,-3) {{}};
		\node[right] at (9.9,-3) {\small : implied new contribution};
		\end{tikzpicture}
		
		\caption{A visualization of our results. Left hand side: CPOP. Right hand side: BCPOP. Labels on the arrows refer to theorems.}\label{fig:moti}
	\end{figure}

	In particular, in \cref{Sec:SONCvsSOS}, we show that for general CPOP problems the SOS proof system is polynomially incomparable with the SONC proof system. We also proved that the same relation holds for SOS$^*$ and SONC$^*$ proof systems;
	see \cref{cor:sonc_vs_sos}. So far, it was only known that the cones of SOS and SONC polynomials are not contained in each other \cite[Proposition 7.2]{Iliman:deWolff:Circuits} however, it has no direct implication on the relation between the SOS and the SONC methods for the CPOP optimization. Similarly, in a very recent result \cite{Chandrasekaran:Murray:Wierman}, the authors point out that the SONC cone contains SDSOS cone. In this paper, in \cref{Sec:SDSOSvsSONC}, we extend this result for CPOP problems by proving, that SONC certificate strictly contains the SDSOS certificate and the same relation holds for SONC$^*$ and SDSOS$^*$ certificates; see \cref{cor:sdsos_vs_sonc}. As a consequence, we conclude that there exists no Putinar-like Positivstellensatz for SDSOS; see \cref{Corollary:ThereExistsNoPutinarPositivstellensatzForSDSOS}. 
	
	For the BCPOP we provide a general, sufficient condition for the proof system to contain $SA^*$ proof system, see \cref{thm:General_proof_as_strong_as_SA}. This combined with the results from \cref{sec:sonc_vs_sa_0/1}, and \cref{sec:sonc_vs_sdsos_0/1} proves the polynomial equivalence of $SONC^*$, $SDSOS^*$, and $SA^*$ on the Boolean hypercube. Moreover, by proving some properties of SONC, SDSOS, and SA polynomials in \cref{lem:f_sdoss_is_sonc}, and~\cref{Lemma:CircuitsAreJuntas}, we prove additional dependencies between the hierarchies in \cref{sec:sonc_vs_sa_0/1}, \cref{sec:sonc_vs_sdsos_0/1}, and~\cref{sec:sa_vs_sdsos_0/1}.
	
	We remark that all results in this article concern the \textit{minimal degrees} for certificates in a particular proof system as these are the standard way to measure the complexity of algorithms in theoretical computer science. Our results do not directly imply a particular behaviour of actual runtimes in an experimental setting, as these depend on various further factors other than the degree.

	\subsection*{Acknowledgements}
	AK is supported by SNSF project PZ00P2$\_$174117,
	TdW is supported by the DFG grant WO 2206/1-1.
	
	\section{Preliminaries}
	\label{sec:Preliminaries}

	In this section, we introduce the proof systems used in this article. 
	Moreover, for the sake of clarity, we provide dual formulations for some of the presented proof systems for the BCPOP case.
	We begin with introducing basic notation.
	For any $n, d \in \N$ we denote $\struc{[n]} = \{1,\ldots,n\}$ and $\struc{\binom{n}{\leq d}}~:=~\sum_{i=0}^d \binom{n}{i}$. 
	Let $\struc{\N^*} = \N \setminus \{\mathbf{0}\}$ and $\struc{\R_{\geq 0}}$ ($\struc{\R_{> 0}}$) be the set of nonnegative (positive) real numbers. 
	Let $\struc{\mathbb{R}[\Vector{x}]} = \R[x_1,\ldots,x_n]$ be the ring of \textit{\struc{$n$-variate real polynomials}} and for every $f \in \R[\Vector{x}]$ we define the \textit{\struc{real zero set}} as $\struc{\cV(f)} = \{\Vector{x} \in \R^n \ | \ f(\Vector{x}) = 0\}$. 
	We denote the \textit{\struc{Newton polytope}} of $f$ by $\struc{\New(f)}$ and the \struc{\textit{vertices}} of $\New(f)$ by $\struc{\vertices{\New(f)}}$. 
	A lattice point is called \struc{\textit{even}} if it is in $(2\N)^n$, and a term $ f_{\boldsymbol{\alp}}\mathbf{x}^{\boldsymbol{\alp}}$ is called a \struc{\emph{monomial square}} if $f_{\boldsymbol{\alp}} \in \R_{\geq 0}$ and $\alp$ is even. 
	
	In what follows we introduce different proof systems and their notation. Next to the specific sources that we provide later in the section, we refer the reader to introductory literature like \cite{Blekherman:Parrilo:Thomas,laurent09,Lasserre:Book:IntroductionPolynomialOptimization,Marshall:Book} on the mathematical side, and~\cite{Razborov16,Rothvoss13} on the computer science side. Moreover, we fix the notation 
	\begin{align*}
	\struc{\constraints} \ := \ \{g_0:=1,g_1,\ldots,g_m:~g_i \in \mathbb{R}[\Vector{x}] \text{ for all } ~ i \in [m] \}
	\end{align*}
	for a set of polynomials. Throughout the paper we assume that the cardinality $\struc{m}$ of the set $\constraints$ is polynomial in the size of $n$. 
	For a given $\constraints$, we define the corresponding \textit{\struc{semi-algebraic set}}
	\begin{align*}
	\struc{\feasible} \ := \ \{ \Vector{x} \in \mathbb{R}^n~ |~ g(\Vector{x})\geq 0 \text{ for all } {g \in \constraints}  \} \subseteq \R^n.
	\end{align*}
	Furthermore, for any given semialgebraic set $\feasible \subseteq \R^n$, we consider the set of \struc{\textit{nonnegative polynomials}} with respect to $\feasible$
	\begin{align*}
	\struc{\NNcone{\feasible}} \ := \ \{f \in \R[\Vector{x}] \ | \ f(\Vector{x}) \geq 0 \text{ for all } \Vector{x} \in \feasible\}.
	\end{align*}
	For a given $f \in \R[\Vector{x}]$ and a set of constraints $\constraints$, we define the corresponding \struc{\textit{constrained polynomial optimization problem} (CPOP)} as (see e.g.,~\cite{Boyd04})
	\begin{align}
	\label{eq:intro_PP_2}
	\begin{aligned}
	\struc{\BoundOptimal{f}{\constraints}{*}} \ := \ \min\{f(\Vector{x})~ |~ \Vector{x} \in \feasible\} \ = \ \max\{\lambda \in \mathbb{R}~ |~ f-\lambda \in \NNcone{\feasible}\}.
	\end{aligned}
	\tag{CPOP}
	\end{align}
	Hence, $\feasible$ corresponds to the feasibility region of the program \eqref{eq:intro_PP_2}.
	
	The problem \eqref{eq:intro_PP_2} is NP-hard in general. Thus, one chooses proper subsets $\NNcone{\feasible}^{'} \subseteq \NNcone{\feasible}$ such that, on the one hand, the corresponding polynomial optimization problem provides a lower bound on the value of~\eqref{eq:intro_PP_2} and on the other hand, is computationally \emph{tractable}. Such subsets are called \textit{\struc{certificates of nonnegativity}}. The choice of a suitable certificate of nonnegativity is crucial for obtaining a good lower bound for the problem~\eqref{eq:intro_PP_2}. 
	
	\medskip
	
	Let us be more specific. For a given $\constraints$ the induced \textit{\struc{preprime}} is given by
	\begin{align*}
	\struc{\preprime{\constraints}} \ := \ \left\{\sum_{i = 0}^s c_i g_0^{k_{i,0}} \cdots g_m^{k_{i,m}} \ \middle| \ c_i \in \R_{\geq 0}, g_j \in \constraints, k_{i,j} \in \N\right\}.
	\end{align*} 
	Note that $\NNcone{\feasible} = \NNcone{\preprime{\constraints}_+}$.
	Throughout the paper we assume that for a given $\constraints$ $\struc{\SizePrep{2d}}$ is the cardinality of the set $\preprime{\constraints}$ restricted to polynomials of degree at most $2d$.
	In order to relax \eqref{eq:intro_PP_2} to a finite size optimization problem we introduce \struc{\textit{polynomial hierarchies}}.
	\begin{definition}
		\label{def:hierarchy_of_certificates}
		Let $\constraints$ be a collection of polynomials and let $\struc{\sGEN}$ be a subset of $\NNcone{\feasible}$. We define the following degree $d \in \N$  depending \struc{\textit{hierarchy of certificates of nonnegativity}}: 
		\begin{align*}
		\small
		\struc{\HierPut{\sGEN}{\constraints}{2d}} \ := \ \left\{\sum_{i = 1}^m s_i g_i \subseteq \R[\Vector{x}] \ \middle| \ s_i \in \sGEN, \ g_i \in \constraints, \text{ and }\deg(s_ig_i) \leq 2d \text{ for all } i \in [m] \right\}.
		\end{align*}
		In several contexts it is more useful to consider the preprime of the constraints, i.e., $\struc{\HierSch{\sGEN}{\constraints}{2d}}$.
		Every such hierarchy of polynomials yields a sequence of lower bounds given by the following optimization program:
		\begin{align}
		\label{eq:OptimizationProblemGeneralHierarchy}
		\begin{aligned}
		\struc{\Bound{f}{\sGEN}{\constraints}{2d}} \ := \ \max \left\{ \lambda \in \mathbb{R}~ |~ f-\lambda \in \HierPut{\sGEN}{\constraints}{2d} \right\} , \text{ and }
		\end{aligned}
		\tag{$\Prog{\sGEN}{\constraints}{2d}$}
		\end{align}
		\begin{align}
		\label{eq:OptimizationProblemGeneralHierarchySchmuedgen}
		\begin{aligned}
		\struc{\BoundSch{f}{\sGEN}{\constraints}{2d}} \ := \ \max\left\{\lambda \in \mathbb{R}~ |~ f-\lambda \in \HierSch{\sGEN}{\constraints}{2d}\right\} \text{ respectively.}
		\end{aligned}
		\tag{$\ProgSch{\sGEN}{\constraints}{2d}$}
		\end{align}
		\label{Def:PolynomialHierarchy}
	\end{definition}
	
	Throughout this paper we assume that the set $\mathcal{G}$ is chosen such that $\preprime{G}$ is \struc{\emph{Archimedean}}, a property which is e.g., implied by the compactness of $\mathcal{G}_+$. In what follows we occasionally enforce compactness of $\preprime{G}$ by adding \struc{\textit{box constraints}} $\struc{l_i} := x_i \pm N \geq 0$ to $\constraints$ with $N \in \N$ sufficiently large for $i \in [n]$.
	
	Under this assumption we obtain from Krivine's general Positivstellensatz \cite{Krivine:Positivstellensatz1,Krivine:Positivstellensatz2}, see also \cite[Theorem 5.4.4]{Marshall:Book}, the following \struc{\textit{Schm\"udgen-type Positivstellensatz}}; see \cite[Theorem 5.1]{Schweighofer:SchmuedgensPositivstellensatz}:
	
	\begin{theorem}
		\label{thm:Schmudgen}
		Let $\preprime{\constraints}$ be Archimedean and let $\sGEN \subseteq \NNcone{\feasible}$ such that $\sGEN$ is closed under addition. Let $f(\Vector{x}) > 0$ for all $\Vector{x} \in \feasible$. Then there exists a $d \in \N$ such that $\BoundSch{f}{\sGEN}{\constraints}{2d} = \BoundOptimal{f}{\constraints}{*}$.
	\end{theorem}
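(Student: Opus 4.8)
The plan is to establish the two inequalities separately for a single fixed level $d$. Soundness, $\BoundSch{f}{\sGEN}{\constraints}{2d} \leq \BoundOptimal{f}{\constraints}{*}$, holds at every level: if $f - \lambda \in \HierSch{\sGEN}{\constraints}{2d}$ then $f - \lambda = \sum_i s_i p_i$ with $s_i \in \sGEN \subseteq \NNcone{\feasible}$ and $p_i \in \preprime{\constraints}$, so $f-\lambda$ is a finite sum of products of polynomials nonnegative on $\feasible$ and hence $f - \lambda \geq 0$ on $\feasible$; thus $\lambda \leq \min_{\Vector{x} \in \feasible} f(\Vector{x}) = \BoundOptimal{f}{\constraints}{*}$, and taking the maximum over feasible $\lambda$ gives the claim for all $d$.

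For the reverse inequality it suffices to produce one $d$ with $f - \BoundOptimal{f}{\constraints}{*} \in \HierSch{\sGEN}{\constraints}{2d}$: then $\lambda = \BoundOptimal{f}{\constraints}{*}$ is feasible for the program defining $\BoundSch{f}{\sGEN}{\constraints}{2d}$, and combined with soundness this forces equality. Write $\lambda^* := \BoundOptimal{f}{\constraints}{*}$. Since $\preprime{\constraints}$ is Archimedean the set $\feasible$ is compact, so $\lambda^* = \min_{\feasible} f$ is attained and the hypothesis $f > 0$ on $\feasible$ gives $\lambda^* > 0$. For each $\lambda < \lambda^*$ the polynomial $f - \lambda$ is strictly positive on $\feasible$, so the Schm\"udgen--Krivine Positivstellensatz cited above provides a representation $f - \lambda = \sum_i s_i p_i$ with $p_i \in \preprime{\constraints}$, nonnegative coefficients $s_i \in \sGEN$, and $\deg(s_i p_i) \leq 2 d(\lambda)$ for a suitable $d(\lambda) \in \N$ (this is the step that uses that $\sGEN$ is a sufficiently rich, addition-closed cone of nonnegative polynomials, so that the coefficients occurring in the representation can be collected inside $\sGEN$). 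Hence $f - \lambda \in \HierSch{\sGEN}{\constraints}{2 d(\lambda)}$ and $\BoundSch{f}{\sGEN}{\constraints}{2 d(\lambda)} \geq \lambda$.

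The decisive step is to pass from these certificates for $\lambda < \lambda^*$ to one certifying $\lambda^*$ itself. The plan is to show that the degrees $d(\lambda)$ remain uniformly bounded as $\lambda \nearrow \lambda^*$, say $d(\lambda) \leq D$ on a left neighbourhood of $\lambda^*$. Granting this, every $f - \lambda$ with $\lambda$ close to $\lambda^*$ lies in $\HierSch{\sGEN}{\constraints}{2D}$, which is a closed convex cone inside the finite-dimensional space of polynomials of degree at most $2D$ (being a finite sum of the degree-truncated cones generated by $\sGEN$ and by the finitely many products of constraints of degree at most $2D$); letting $\lambda \nearrow \lambda^*$ and using closedness yields $f - \lambda^* \in \HierSch{\sGEN}{\constraints}{2D}$, whence $\BoundSch{f}{\sGEN}{\constraints}{2D} = \BoundOptimal{f}{\constraints}{*}$ by the first paragraph. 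I expect the uniform degree bound to be the main obstacle: quantitative Positivstellens\"atze in the spirit of Schweighofer produce bounds that deteriorate as $f - \lambda$ approaches its boundary value $\lambda^*$, so a uniform $D$ cannot come from positivity of $f-\lambda$ alone and must instead be extracted from additional structure --- closedness of the relevant truncated cone together with a constraint-qualification or boundary-Hessian condition at the minimizers of $f$ on $\feasible$, or the richness of the ground set $\sGEN$. Checking that the truncated cones are genuinely closed, so that the limiting argument is legitimate, is a second but more routine point that I would verify via the standard finite-dimensionality argument for cones spanned by finitely many degree-bounded generators.
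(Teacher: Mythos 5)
Your first two paragraphs are correct and coincide with what the paper's citation actually delivers: soundness of every level, and, for each fixed $\lambda<\lambda^{*}$, Krivine's Positivstellensatz (\cite[Theorem 5.4.4]{Marshall:Book}, \cite[Theorem 5.1]{Schweighofer:SchmuedgensPositivstellensatz}) applied to the strictly positive polynomial $f-\lambda$, giving $f-\lambda\in\preprime{\constraints}$ and hence a certificate of some degree $2d(\lambda)$. (One small point you gloss over: Krivine produces $f-\lambda=\sum_i c_i\, g_0^{k_{i,0}}\cdots g_m^{k_{i,m}}$ with \emph{constants} $c_i\geq 0$, and to view this as an element of $\HierSch{\sGEN}{\constraints}{2d}$ you need nonnegative constants to lie in $\sGEN$; closure under addition alone does not give this, though it holds for every ground set the paper actually uses.) The genuine gap is your third paragraph, and it is not merely a technical loose end: both pillars of the limiting argument fail in the stated generality. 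First, the uniform bound $d(\lambda)\leq D$ as $\lambda\nearrow\lambda^{*}$ is exactly what quantitative Positivstellens\"atze do \emph{not} provide --- the known degree bounds grow like a power of the reciprocal positivity margin $1/(\lambda^{*}-\lambda)$ --- and no alternative source for it is available under the theorem's hypotheses. Second, the closedness of $\HierSch{\sGEN}{\constraints}{2D}$ is unjustified: $\sGEN$ is only assumed to be a subset of $\NNcone{\feasible}$ closed under addition --- not under nonnegative scaling, not topologically closed, not finitely generated --- so the truncated set need not even be a convex cone, and your ``standard finite-dimensionality argument for cones spanned by finitely many degree-bounded generators'' has no generators to work with; even in the SOS case, closedness of truncated preorderings requires extra hypotheses (e.g., $\feasible$ having nonempty interior, Powers--Scheiderer).

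More fundamentally, the membership your limit is designed to produce can fail outright. Since $\preprime{\constraints}$ is Archimedean, $\feasible$ is compact and $f-\lambda^{*}$ is nonnegative on $\feasible$ with zeros at the minimizers; for $\sGEN=\sSOS$, results of Scheiderer show that nonnegative polynomials with zeros on $\feasible$ need not belong to the preordering at \emph{any} degree. So no argument along your lines can establish $f-\lambda^{*}\in\HierSch{\sGEN}{\constraints}{2D}$, and the extra structure you hope to invoke (boundary-Hessian conditions, richness of $\sGEN$) is simply not among the hypotheses. The paper itself does not attempt this: \cref{thm:Schmudgen} is quoted as a known consequence of Krivine's general Positivstellensatz, and the content those citations support is precisely your second paragraph, i.e., that $\BoundSch{f}{\sGEN}{\constraints}{2d}$ exceeds every $\lambda<\lambda^{*}$ for $d$ large enough, so that $\sup_{d}\BoundSch{f}{\sGEN}{\constraints}{2d}=\BoundOptimal{f}{\constraints}{*}$. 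Your attempted strengthening to exact attainment of $\lambda^{*}$ at a single finite level is the one claim that neither the citations nor any argument of the proposed type can deliver; the honest reading of the theorem (and the only one used later in the paper, e.g., in \cref{corollary:SOSCanBeArbitrarilyBetterThanSONC} and \cref{lem:sonc_vs_sa_0/1}) is the approximation statement you already proved.
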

	
	For the SOS hierarchy this theorem was first shown by Schm\"udgen in \cite{Schmuedgen:Positivstellensatz}.

	In the following subsections we introduce some of the most prominent inner approximations of the cone $\NNcone{\feasible}$.
	
	\subsection{Sum of Squares}
	
	The \textit{\struc{SOS method}} approximates the cone $\NNcone{\feasible}$ by using the set of \textit{\struc{sum of square polynomials}} instead of the entire set of nonnegative polynomials. 
	Let $\struc{\sSOS} \ :=\ \{s~|~s=\sum_{i = 1}^k f_i^2,~f \in \mathbb{R}[\Vector{x}], ~k \in \N^* \}$ be the set of \struc{\textit{(finite) sum of square polynomials (SOS)}}. 
	The \struc{\textit{SOS program of degree $2d$}} takes the following form:
	\begin{align}
	\label{eq:intro_PP_4}
	\begin{aligned}
	\struc{\BSOS{f}{\mathcal{G}}{2d}} \ := \ \quad \max\{\lambda \in \mathbb{R}~ |~ f-\lambda \in \HierPut{\sSOS}{\constraints}{2d}\};
	\end{aligned}
	\tag{$\Prog{\sSOS}{\constraints}{2d}$}
	\end{align}
	analogously the \struc{\textit{SOS$^{*}$ program of degree $2d$}} takes the form $\struc{\ProgSch{\sSOS}{\constraints}{2d}}$.
	For the SOS-hierarchy Putinar proved the following \emph{Positivstellensatz}, which is an improvement of Schm\"udgen's Positivstellensatz.
	\begin{theorem}[Putinar's Positivstellensatz; \cite{Putinar93}]
		\label{thm:intro_Putinar}
		Let $\constraints$ be a set of polynomial constraints with $\bigcup_{d \in \N}\HierPut{\sSOS}{\constraints}{2d}$ being Archimedean, and let $f \in \R[\Vector{x}]$ with $f(\Vector{x})>0$ for all $\Vector{x} \in \feasible$. Then there exists a $d \in \N$ such that $\Bound{f}{\PSOS}{\constraints}{2d} = \BoundOptimal{f}{\constraints}{*}$.		
	\end{theorem}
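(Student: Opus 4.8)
The plan is to reduce the stated convergence to the membership assertion at the heart of Putinar's theorem, namely that $f(\mathbf{x}) > 0$ on $\feasible$ forces $f$ to lie in the quadratic module $M := \bigcup_{d \in \N} \HierPut{\sSOS}{\constraints}{2d}$. Once a representation $f = \sum_i \sigma_i g_i$ with $\sigma_i \in \sSOS$ is established, the sum is finite and of bounded degree, so $f \in \HierPut{\sSOS}{\constraints}{2d_0}$ for $d_0 = \lceil \tfrac12 \max_i \deg(\sigma_i g_i)\rceil$; applying the membership statement to the strictly positive polynomials $f - \lambda$ for $\lambda < \BoundOptimal{f}{\constraints}{*}$ then shows that the lower bounds $\Bound{f}{\PSOS}{\constraints}{2d}$ converge to $\BoundOptimal{f}{\constraints}{*}$, which is the asserted equality. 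I would prove the membership statement by the classical moment/functional-analytic route rather than trying to convert the preordering certificate supplied by \cref{thm:Schmudgen}, since absorbing the products $g_i g_j$ of such a Schm\"udgen-type representation back into the module $M$ is itself delicate.

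First I would exploit the Archimedean hypothesis to note that $1$ is an order unit for the cone $M$: for every $p \in \R[\mathbf{x}]$ there is an $N \in \N$ with $N \pm p \in M$. Calling a linear functional $L \colon \R[\mathbf{x}] \to \R$ a \emph{state} if $L(1) = 1$ and $L \geq 0$ on $M$, the order-unit form of the Hahn--Banach theorem identifies the Archimedean closure
\begin{align*}
\overline{M} \ := \ \{p \in \R[\mathbf{x}] \mid p + \eps \in M \text{ for all } \eps > 0\}
\end{align*}
with $\{p \mid L(p) \geq 0 \text{ for every state } L\}$. Since each point evaluation $\delta_{\mathbf{x}_0}$ at $\mathbf{x}_0 \in \feasible$ is a state (as $\sigma_i(\mathbf{x}_0) \geq 0$ and $g_i(\mathbf{x}_0) \geq 0$), this already yields $\overline{M} \subseteq \NNcone{\feasible}$.

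The main obstacle is the reverse inclusion, i.e.\ the representation of every state by a measure supported on $\feasible$. Given a state $L$, the form $(p,q) \mapsto L(pq)$ is positive semidefinite because $L \geq 0$ on squares, so after quotienting by its kernel and completing one obtains a Hilbert space on which the multiplication operators $p \mapsto x_i p$ are symmetric. Here the Archimedean condition is essential: from $N - \sum_i x_i^2 \in M$ one bounds these operators, so they extend to commuting bounded self-adjoint operators, and the spectral theorem supplies a joint spectral measure whose expectation in the class of $1$ is a Borel measure $\mu$ with $L(p) = \int p \, d\mu$. Finally $L(g_j p^2) \geq 0$ for all $p$ forces $\mathrm{supp}(\mu) \subseteq \{g_j \geq 0\}$, hence $\mathrm{supp}(\mu) \subseteq \feasible$; integrating then gives $L(p) \geq 0$ whenever $p \in \NNcone{\feasible}$, so that $\overline{M} = \NNcone{\feasible}$.

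It then remains to remove the $\eps$. Since $\feasible$ is compact (a consequence of the Archimedean assumption) and $f > 0$ on it, $c := \min_{\feasible} f > 0$. Then $f - \tfrac{c}{2} \geq \tfrac{c}{2} > 0$ on $\feasible$, so $f - \tfrac{c}{2} \in \NNcone{\feasible} = \overline{M}$; choosing $\eps = \tfrac{c}{2}$ in the definition of $\overline{M}$ gives $f = (f - \tfrac{c}{2}) + \tfrac{c}{2} \in M$, completing the argument. I expect the spectral/measure-representation step to be the crux, both in the operator boundedness (where Archimedeanity enters) and in the support localization, whereas the separation identity for $\overline{M}$ and the final strict-positivity perturbation are routine.
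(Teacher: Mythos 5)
The paper does not actually prove this theorem: it is quoted as a classical result with a citation to \cite{Putinar93}, so there is no in-paper argument to compare against. Your proposal is a reconstruction of the standard functional-analytic proof of Putinar's Positivstellensatz, and as a sketch it is sound. The chain --- Archimedeanity gives $1$ as an order unit of the quadratic module $M := \bigcup_{d \in \N}\HierPut{\sSOS}{\constraints}{2d}$; the order-unit Hahn--Banach separation identifies the closure $\overline{M}$ with the polynomials nonnegative under all states; point evaluations at points of $\feasible$ are states; the GNS construction with multiplication operators bounded via $N - \sum_i x_i^2 \in M$ (here one also needs, and has, that $M$ is closed under multiplication by squares) yields a representing measure supported, by the localization argument $L(g_j p^2)\geq 0$, inside $\feasible$; and the final perturbation $f = (f - \tfrac{c}{2}) + \tfrac{c}{2}$ with $c = \min_{\feasible} f > 0$ removes the $\eps$ --- is exactly how the membership statement ``$f > 0$ on $\feasible$ implies $f \in M$'' is established in the literature. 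Your instinct not to derive this from \cref{thm:Schmudgen} is also correct: the products $g_ig_j$ appearing in a preordering certificate cannot in general be absorbed into the quadratic module, which is precisely why Putinar's theorem is a strengthening and not a corollary of Schm\"udgen's.

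The one genuine mismatch is in your last step, where you equate ``the bounds $\Bound{f}{\PSOS}{\constraints}{2d}$ converge to $\BoundOptimal{f}{\constraints}{*}$'' with ``the asserted equality.'' The theorem as stated in the paper asserts the existence of a \emph{finite} $d$ with $\Bound{f}{\PSOS}{\constraints}{2d} = \BoundOptimal{f}{\constraints}{*}$. Your argument (and Putinar's theorem itself) yields only that $f - \lambda \in \HierPut{\sSOS}{\constraints}{2d(\lambda)}$ for every $\lambda < \BoundOptimal{f}{\constraints}{*}$, where $d(\lambda)$ may grow without bound as $\lambda$ approaches the optimum; this gives $\sup_{d \in \N}\Bound{f}{\PSOS}{\constraints}{2d} = \BoundOptimal{f}{\constraints}{*}$ but not attainment at a finite level. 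Attainment would require a certificate for $f - \BoundOptimal{f}{\constraints}{*}$, which is merely nonnegative (it vanishes at a minimizer), and exact finite convergence is false in general: whenever $f - \BoundOptimal{f}{\constraints}{*}$ lies in no finite level of the quadratic module --- such instances exist by Scheiderer-type obstructions for nonnegative polynomials with zeros --- the bounds approach the optimum without ever reaching it. So what you prove is the correct content of Putinar's theorem, and indeed the form the paper actually uses later (membership of a strictly positive polynomial at some finite degree, as in \cref{corollary:SONCCanBeArbitrarilyBetterThanSOS}); the literal ``$=$'' at a finite $d$ is a looseness in the paper's formulation rather than something your proof could be repaired to deliver, but the discrepancy should be stated explicitly rather than glossed over in the final sentence.
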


	\cref{thm:intro_Putinar} provides a sequence of cones that approximate $\KG$ from the inside, such that the values of $\BSOS{f}{\mathcal{G}}{2d}$
	give a sequence of lower bounds that converges in $d$ to the optimal value of~\eqref{eq:intro_PP_2}.

	The program~\eqref{eq:intro_PP_4} can be solved using a \struc{\emph{semidefinite program} (SDP)} of size $\binom{n+d}{d}^{O(1)}$; see e.g.,~\cite{Lasserre01, Nesterov00, parrilo00, schor87}. 
	This is implied by the following fact; see e.g.,~\cite{parrilo00}.
	\begin{theorem}
		\label{thm:intro_sos_sdp}
		A polynomial $p \in \mathbb{R}[\Vector{x}]$ is a SOS of degree $d$ if and only if there exists a positive semidefinite matrix $G$, called the \textit{\struc{Gram matrix}}, such that $p=\Vector{z}^\top G \Vector{z}$, for $\Vector{z}$ being the vector of $n$-variate monomials of total degree at most $d$.
	\end{theorem}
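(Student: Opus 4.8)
This is the classical Gram-matrix characterization of sums of squares, so the plan is simply to prove the two implications separately; essentially all the content is linear algebra plus one point of degree bookkeeping. Throughout, let $\Vector{z} = (\Vector{x}^{\boldsymbol{\alp}})_{|\boldsymbol{\alp}| \le d}$ be the column vector of all $n$-variate monomials of total degree at most $d$, of length $\binom{n+d}{d}$, so that $\Vector{v}^\top \Vector{z}$ is a degree-$\le d$ polynomial for every coefficient vector $\Vector{v}$, and conversely every such polynomial arises this way. The direction ``$\Leftarrow$'' is then immediate: if $G \succeq 0$ and $p = \Vector{z}^\top G \Vector{z}$, I factor $G = \sum_i \Vector{v}_i \Vector{v}_i^\top$ (via its spectral decomposition, or a Cholesky-type factorization $G = L^\top L$ with the $\Vector{v}_i$ the rows of $L$) and substitute to obtain $p = \sum_i (\Vector{v}_i^\top \Vector{z})^2$, a finite sum of squares.

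For ``$\Rightarrow$'', write $p = \sum_{i=1}^k f_i^2$. The one genuine point---and where I expect the only real obstacle, albeit a standard one---is to argue that each $f_i$ may be taken of degree at most $d$, so that its coefficient vector $\Vector{c}_i$ is expressed in the basis $\Vector{z}$. A priori the $f_i$ could have large degree with top-degree parts cancelling in the sum; I would exclude this by passing to highest homogeneous parts. If some $f_i$ had degree $e > d$, the degree-$2e$ component of $\sum_i f_i^2$ equals $\sum_{i:\,\deg f_i = e}(\operatorname{lf}(f_i))^2$, where $\operatorname{lf}$ denotes the leading form; this is a sum of squares of nonzero forms and hence nonzero, since a sum of real squares vanishes only if each summand does. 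That would force $\deg p = 2e > 2d$, a contradiction; equivalently, $\New(f_i) \subseteq \tfrac{1}{2}\New(p)$ for every square appearing in an SOS decomposition.

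Once $\deg f_i \le d$ is secured, letting $\Vector{c}_i$ be the coefficient vector of $f_i$ with respect to $\Vector{z}$ gives $f_i = \Vector{c}_i^\top \Vector{z}$, and therefore
$$ p \ = \ \sum_{i=1}^k (\Vector{c}_i^\top \Vector{z})^2 \ = \ \Vector{z}^\top \Big( \sum_{i=1}^k \Vector{c}_i \Vector{c}_i^\top \Big) \Vector{z}, $$
so that $G := \sum_i \Vector{c}_i \Vector{c}_i^\top$, being a sum of rank-one positive semidefinite matrices, is a positive semidefinite Gram matrix for $p$. I would close with the remark that $G$ is not unique: since distinct monomial pairs $\Vector{x}^{\boldsymbol{\alp}}\Vector{x}^{\boldsymbol{\beta}}$ may coincide as $\Vector{x}^{\boldsymbol{\gamma}}$, matching coefficients of $p$ constrains only the sums $\sum_{\boldsymbol{\alp}+\boldsymbol{\beta}=\boldsymbol{\gamma}} G_{\boldsymbol{\alp},\boldsymbol{\beta}}$. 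The resulting intersection of an affine space with the positive semidefinite cone is precisely the feasible set of the SDP behind~\eqref{eq:intro_PP_4}, which is the practical payoff of the theorem; the non-uniqueness is irrelevant to the equivalence itself.
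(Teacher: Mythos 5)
Your proposal is correct, but there is no in-paper argument to compare it against: the paper states \cref{thm:intro_sos_sdp} as a known fact, citing \cite{parrilo00}, and gives no proof. What you wrote is exactly the classical argument from that literature: the ``$\Leftarrow$'' direction by factoring $G=\sum_i \Vector{v}_i\Vector{v}_i^\top$ and substituting, and the ``$\Rightarrow$'' direction by forming the rank-one sum $G=\sum_i \Vector{c}_i\Vector{c}_i^\top$ once the degree bound $\deg f_i\le d$ is secured. One small repair is needed in that degree step: your identity ``the degree-$2e$ component of $\sum_i f_i^2$ equals $\sum_{i:\,\deg f_i=e}(\operatorname{lf}(f_i))^2$'' holds only when $e$ is the \emph{maximal} degree occurring among the $f_i$; for a non-maximal $e$, cross terms and lower-order parts of higher-degree $f_i$'s also contribute in degree $2e$ (e.g.\ $f_1=x$, $f_2=x^2+1$ in degree $2$). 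So you should set $e:=\max_i \deg f_i$ and assume $e>d$ for contradiction, after which the argument--resting on the correct observation that a sum of squares of nonzero real forms cannot vanish identically--goes through verbatim. It is also worth noting that the paper's phrase ``SOS of degree $d$'' is ambiguous; under the reading in which the squares $f_i$ are themselves assumed to have degree at most $d$, your leading-form step is unnecessary, while under the reading you chose (only the degree of $p$ is controlled) it is precisely the right fix, so your proof covers the harder interpretation. Your closing remark on the non-uniqueness of $G$ and the affine-slice-of-the-PSD-cone picture is accurate and is indeed why the theorem reduces the search in \eqref{eq:intro_PP_4} to an SDP of size $\binom{n+d}{d}^{O(1)}$, which is the only use the paper makes of this statement.
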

	
	The size of the SDP program is $\binom{n+d}{d}^{O(1)}$. Moreover, for BCPOP problems, when \struc{\textit{hypercube constraints} $\pm(x_i^2- x_i)$} are incorporated in $\mathcal{G}$, it is known for $\struc{d_\mathcal{G}}:=\max\{\deg(g)~|~g \in \mathcal{G} \}$ that  $\PSOS^{2n+2\lceil d_\mathcal{G}/2 \rceil}$ solves the problem exactly, i.e., $\BSOS{f}{\mathcal{G}}{2n+2\lceil d_\mathcal{G}/2 \rceil}=\BoundOptimal{f}{\constraints}{*}$; see e.g.,~\cite{BarakS14}.

	\subsubsection{SOS - The dual perspective: Lasserre hierarchy}
	\label{sec:sos_dual}
	
	Consider a BCPOP. Let $\lambda \in \R$ be such that $f-\lambda \notin  \HierPut{\sSOS}{\constraints}{2d}$, for some $d \in \N^{*}$. By the hyperplane separation theorem for convex cones, there exists a hyperplane that separates $f-\lambda$ from $ \HierPut{\sSOS}{\constraints}{2d}$. Note that for BCPOP we can restrict to polynomials defined on the hypercube $\{0,1\}^n \to \R$, i.e., to the vector space of multi-linear polynomials. The hyperplane is represented by the polynomial $\struc{\mu}:\{0,1\}^n \to \R$, which is a normal vector to the hyperplane, such that for every polynomial $h \in \HierPut{\sSOS}{\constraints}{2d}$ we have $\sum_{\Vector{x}\in \{0,1\}^n}\mu(\Vector{x})\cdot h(\Vector{x})\geq 0$ and $\sum_{\Vector{x}\in \{0,1\}^n}(f-\lambda)(\Vector{x})\cdot h(\Vector{x}) < 0$. By scaling we can assume that $\sum_{\Vector{x}\in \{0,1\}^n}\mu(\Vector{x})=1$. 
	To every function $\mu$ we can associate a linear operator $\struc{\tilde{\mathbb{E}}_\mu}: \{0,1\}^n \to \R$ mapping polynomials to real numbers, defined by 
	\begin{align*}
		\tilde{\mathbb{E}}_\mu[h] \ := \ \sum_{\Vector{x}\in \{0,1\}^n}\mu(\Vector{x})\cdot h(\Vector{x}),
	\end{align*}
	 which is called the \struc{\emph{pseudoexpectation}}.
	 The dual problem to~\eqref{eq:intro_PP_4} is the \struc{$\overline{\text{SOS}}$ \textit{program of degree} $2d$}. It takes the form
	 \begin{align}
	 \label{eq:intro_PP_5}
	 {\small\begin{aligned}
	 \struc{\BSOSD{f}{\mathcal{G}}{2d}} \ := \ \quad \min\left\{\tilde{\mathbb{E}}_\mu[f]~ \middle|~ \mu:\{0,1\}^n\to \R,~ \tilde{\mathbb{E}}_\mu[1] = 1,~ \tilde{\mathbb{E}}_\mu[h] \geq 0,~\text{for all } h \in \HierPut{\sSOS}{\constraints}{2d}\right\}  
	 \end{aligned}},
	 \tag{$\Prog{\sSOSD}{\constraints}{2d}$}
	 \end{align}
	and is known as the \struc{\emph{Lasserre relaxation (of degree $2d$)} }.  It can be solved using an SDP of size $\binom{n}{\leq d}^{O(1)}$~\cite{Lasserre01}. Analogously, the \struc{$\overline{\text{SOS}}^{~*}$ \textit{program of degree} $2d$} takes the form $\struc{\ProgSch{\sSOSD}{\constraints}{2d}}$.

	Problem~\eqref{eq:intro_PP_5} can be reformulated in terms of \struc{\emph{moments /  localizing matrices}}. Consider $h= s^2\cdot g \in \HierPut{\sSOS}{\constraints}{2d}$, for $s \in \mathbb{R}[\Vector{x}]$ and $g \in \mathcal{G}$. Let $\struc{d^{'}}= \lfloor \frac{2d-\deg(g)}{2} \rfloor$  and $\struc{\Vector{v}}$ be the \textit{\struc{vector of coefficients}} of $h$, such that $h(\Vector{x})=\sum_{I \subseteq [n]} v_I \prod_{i \in I} x_i$.
	We can write 
	{\small
	\begin{equation}
	\label{eq:pseudoexpectation_to_matrix}
	 0\leq \tilde{\mathbb{E}}_\mu[h^2]= 
		\tilde{\mathbb{E}}_\mu\left[\left( \sum_{\substack{I,J \subseteq [n] \\ |I|,|J| \leq d^{'}}} v_I v_J\prod_{k \in I } x_k \prod_{k \in J} x_k  \right) g \right] =
		\sum_{\substack{I,J \subseteq [n] \\ |I|,|J| \leq d^{'}}} v_I v_J \tilde{\mathbb{E}}_\mu\left[g \cdot\prod_{k \in I \cup J} x_k   \right] =
		\Vector{v}^\top M_g^{2d} \Vector{v},
	\end{equation}}
	where $\struc{M_g^{2d}} \in \R^{\binom{n}{\leq d^{'}} \times\binom{n}{\leq d^{'}}}$ is a real, symmetric matrix whose rows and columns are indexed	
	 by sets $I,J \subseteq [n]$ of size at most $d^{'}$ such that $M_g^{2d}(I,J)=\tilde{\mathbb{E}}_\mu[g \cdot\prod_{k \in I \cup J} x_k]$. For $g=1$ the matrix is called the \struc{\emph{moment matrix}}, and for all other $g$ it is called the \struc{\emph{localizing matrix}} for the constraint $g$. Since for every real valued vector $\Vector{v}$ the requirement $\Vector{v}^\top M \Vector{v} \geq 0$ is equivalent to $M$ being positive semidefinite (PSD), denoted by $\struc{M \succeq 0}$, we can reformulate~\eqref{eq:intro_PP_5} as:
	 \begin{align}
	 \label{eq:intro_PP_6}
	 \begin{aligned}
	\struc{\BSOSD{f}{\mathcal{G}}{2d}} \ := \ \quad \min \left\{\tilde{\mathbb{E}}_\mu[f]~ \middle|~ \mu:\{0,1\}^n\to \R~\tilde{\mathbb{E}}_\mu[1] = 1,~ M_g^{2d} \succeq 0,~ \text{for all } g \in \mathcal{G}\right\}.
	\end{aligned}
	\end{align}

%
%

	\subsection{Scaled Diagonally Dominant Sum of Squares}
	%
	%
	%

	In~\cite{AhmadiM14} Ahmadi and Majumdar proposed an approximation of the cone $\sSOS$ based on \struc{\emph{scaled diagonally-dominant polynomials} ($\TSDSOS$)}, defined below in \cref{sec:sdd_polynomials}.
	Let $\struc{\sSDSOS}$ be the set of finite sums of scaled diagonally-dominant polynomials.
	We obtain the following program:
	\begin{align}
	\label{eq:intro_PP_7}
	\begin{aligned}
	\struc{\BSDSOS{f}{\mathcal{G}}{2d}} \ := \ \max\{\lambda \in \mathbb{R}~ |~ f-\lambda \in \HierPut{\sSDSOS}{\constraints}{2d}\};
	\end{aligned}
	\tag{$\Prog{\PSDSOS}{\constraints}{2d}$}
	\end{align}
	analogously, the \struc{\textit{SDSOS$^{~*}$}} program takes the form $\ProgSch{\PSDSOS}{\constraints}{2d}$. Since $\HierPut{\sSDSOS}{\constraints}{2d} \subseteq \HierPut{\sSOS}{\constraints}{2d} \subseteq \KG$ for every $d \in \N$, we have $\BSDSOS{f}{\mathcal{G}}{2d} \leq \BSOS{f}{\mathcal{G}}{2d}~\leq~\BoundOptimal{f}{\constraints}{*}$.
	Moreover, $\PSDSOS^{2d}$ can be solved using \struc{\emph{Second Order Conic Programming} (SOCP)} of size $\binom{n+d}{d}^{O(1)}$; see \cite{AhmadiM14}.
	

\subsubsection{Scaled diagonally-dominant polynomials}
\label{sec:sdd_polynomials}
We introduce the formal details for SDSOS certificates.
\begin{definition}
	\label{def:intro_sdsos}
	A real symmetric $m \times m$ matrix $M$ is called \struc{\emph{diagonally-dominant} (dd)} if for every $i \in [m]$ we have $M(i,i)\geq \sum_{i\neq j} |M(i,j)|$. Moreover, $M$ is called \struc{\emph{scaled diagonally-dominant} (sdd)} if there exist a positive real diagonal matrix $D$ such that $DMD$ is dd.
	A polynomial $p(\Vector{x}) \in \mathbb{R}[\Vector{x}]$ of total degree $d$ is \struc{\emph{scaled diagonally-dominant}}, denoted \struc{$p \in \sSDSOS$}, if there exist an sdd matrix $M$ such that $p=\Vector{z}^\top M \Vector{z}$, for $\Vector{z}$ being the vector of $n$-variate monomials of total degree at most $d$.
\end{definition}

Every $\TSDSOS$ polynomials is an $\TSOS$ polynomial: By \cref{def:intro_sdsos}, consider an sdd matrix $DMD$, for $M$ being a dd matrix. By the Gershgorin circle theorem, the matrix $M$ is PSD. Moreover, $DMD=DMD^{\top}$. Since $DMD^{\top}$ is a congruent transformation of $M$, that does not change the sign of the eigenvalues, the matrix $DMD^{\top}$ is also PSD.

%
%
%

Next, we provide a further characterization of SDSOS polynomials. We start with recalling the known characterization of diagonally dominant (dd) matrices.
\begin{lemma}[\cite{Barker1975}]
	\label{lem:dd_matrices}
	A symmetric $m \times m$ matrix $M$ is dd if and only if 
	\begin{align*}
		M \ = \ \sum_{i,j \in [m]}c_{ij}\Vector{w}_{ij} \Vector{w}_{ij}^\top
	\end{align*}
	for $c_{ij}\geq0$ and $\{\Vector{w}_{ij} \}_{i,j \in [m]} \subseteq \mathbb{R}^n$ being a set of vectors, each with at most two nonzero entries at positions $i$ and $j$ which equal $\pm 1$.
\end{lemma}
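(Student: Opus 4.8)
The plan is to prove both directions by a direct computation with the rank-one building blocks, the heart of the matter being the observation that each admissible vector $\mathbf{w}_{ij}$ contributes equal amounts (in absolute value) to the diagonal and to the single off-diagonal position it touches.

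For the \emph{if} direction I would first classify the vectors $\mathbf{w}_{ij}$. A vector with a single nonzero entry $\pm 1$ at position $i$ yields $\mathbf{w}_{ij}\mathbf{w}_{ij}^\top = \mathbf{e}_i \mathbf{e}_i^\top$, a purely diagonal contribution. A vector with two nonzero entries $\epsilon_i,\epsilon_j \in \{\pm 1\}$ at positions $i \neq j$ yields a matrix carrying $1$ on the diagonal positions $(i,i)$ and $(j,j)$ and $\epsilon_i\epsilon_j$ at $(i,j)$ and $(j,i)$. Hence the term $c_{ij}\mathbf{w}_{ij}\mathbf{w}_{ij}^\top$ adds $c_{ij}$ to each of $M(i,i)$ and $M(j,j)$ and adds $\pm c_{ij}$ to $M(i,j)$. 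Summing over all terms and applying the triangle inequality to the off-diagonal entries, for each row $i$ I obtain $\sum_{j \neq i}|M(i,j)| \le \sum_{j \neq i}(\text{total coefficient of terms supported on }\{i,j\}) \le M(i,i)$, where the last inequality holds because every such term also deposits its coefficient on the diagonal $M(i,i)$ and the purely diagonal terms only add more. This is precisely the dd condition.

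For the \emph{only if} direction I would exhibit the decomposition explicitly. Given a dd matrix $M$, for each unordered pair $i < j$ set $c_{ij} := |M(i,j)|$ and choose $\mathbf{w}_{ij} := \mathbf{e}_i + \operatorname{sgn}(M(i,j))\,\mathbf{e}_j$, so that the rank-one term reproduces $M(i,j)$ exactly in the off-diagonal position while depositing $|M(i,j)|$ on each of the two diagonal entries it touches. Once all off-diagonal pairs have been processed, the amount accumulated on the diagonal at position $i$ equals $\sum_{j\neq i}|M(i,j)|$, which by the dd hypothesis is at most $M(i,i)$. I then absorb the remaining slack with a purely diagonal term of coefficient $c_{ii} := M(i,i) - \sum_{j\neq i}|M(i,j)| \ge 0$ and vector $\mathbf{e}_i$. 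The resulting sum reproduces $M$ entrywise.

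The computations are entirely routine; the only things to watch are the sign bookkeeping in the off-diagonal entries and the degenerate case $M(i,j)=0$, where the corresponding term may simply be dropped. The single conceptual point, and where all the work really lies, is recognizing that the nonnegativity of the diagonal slack coefficients $c_{ii}$ is not an extra requirement but exactly the defining inequality of diagonal dominance; thus the construction in the \emph{only if} direction succeeds if and only if $M$ is dd, which is what closes the equivalence.
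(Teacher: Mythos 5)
Your proof is correct. Note that the paper itself does not prove this lemma at all: it is quoted as a known result with a citation to Barker (1975), so there is no internal proof to compare against. Your argument is the standard one: the \emph{only if} direction via the explicit choice $c_{ij}=|M(i,j)|$, $\mathbf{w}_{ij}=\mathbf{e}_i+\operatorname{sgn}(M(i,j))\,\mathbf{e}_j$ plus diagonal slack terms $c_{ii}=M(i,i)-\sum_{j\neq i}|M(i,j)|\geq 0$, and the \emph{if} direction by tracking how each rank-one term deposits its coefficient on the diagonal and (in absolute value, at most) on one off-diagonal pair. This is exactly the decomposition the paper implicitly relies on immediately after the lemma, when it writes every sdd polynomial as a sum of binomial squares $\sum_k c_{IJ}(\mathbf{z}^\top D\mathbf{v}_{IJ})^2$, so your proof is consistent with how the lemma is used downstream.
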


By \cref{def:intro_sdsos} and \cref{lem:dd_matrices}, every $n$-variate sdd polynomial $s$ of degree at most $d$ is of the form
\begin{align*}
s(\Vector{z}) \ =\ \Vector{z}^\top D Q D  \Vector{z}\ =\ \Vector{z}^\top D \left(\sum_{\substack{I,J \subseteq [n] \\ |I|,|J| \leq d}}c_{IJ}  \Vector{v}_{IJ} \Vector{v}_{IJ}^\top \right)  D  \Vector{z} \ =\ \sum_{\substack{I,J \subseteq [n] \\ |I|,|J| \leq d}}c_{IJ} \left( \Vector{z}^\top D \Vector{v}_{IJ} \right)^2,
\end{align*}
where $\struc{\Vector{z}}$ is the vector of $n$-variate monomials of maximal degree $d$, $\struc{Q}$ is a dd matrix, and $\struc{D}$ is a positive diagonal matrix. Since every vector $\Vector{v}_{IJ}$ has at most two nonzero entries, both equal to $\pm 1$, the SDSOS polynomial $s$ is always of the form $s(x)=\sum_{k} ( a_k p_k(x) + b_k q_k(x))^2$, where $p,q$ are monomials and $a_k, b_k \in \mathbb{R}$.

\subsubsection{SDSOS - Dual perspective}
\label{sec:SDSOS_dual}

For the BCPOP the dual of the problem~\eqref{eq:intro_PP_7} is a relaxation of the problem~\eqref{eq:intro_PP_5}. 
Indeed, similar as for formulation~\eqref{eq:intro_PP_5}, a conic duality theory can be used to transform program~\eqref{eq:intro_PP_7} into its dual of the form
\begin{align}
\label{eq:intro_PP_8}
\begin{aligned}
\struc{\BSDSOSD{f}{\mathcal{G}}{2d}} \ := \ \quad \min_{\mu:\{0,1\}^n\rightarrow \R}\{\tilde{\mathbb{E}}_\mu[f]~ |~ \tilde{\mathbb{E}}_\mu[1] = 1,~ \tilde{\mathbb{E}}_\mu[h] \geq 0,~ \text{for all }h \in \HierPut{\sSDSOS}{\constraints}{2d} \}  
\end{aligned}
\tag{$\Prog{\sSDSOSD}{\constraints}{2d}$}
\end{align}
for $\mathbb{\tilde{\mathbb{E}}}$ being a linear map, defined as in \cref{sec:sos_dual}. Analogously the \struc{$\overline{\text{SDSOS}}^{~*}$ \textit{program of degree} $2d$} takes the form $\struc{\ProgSch{\sSDSOSD}{\constraints}{2d}}$. 

Similar as in~\eqref{eq:pseudoexpectation_to_matrix}, Formulation~\eqref{eq:intro_PP_8} can be transformed into matrix form. In this case we obtain a set of $2 \times 2$ matrices that are required to be PSD. More formally, let $\struc{K} \subseteq \{I~|~ I\subseteq [n],~|I|\leq d\}$. For $M \in \R^{\binom{n}{\leq d} \times\binom{n}{\leq d}}$ being a real, symmetric matrix whose rows/columns are indexed	 with sets $I,J \subseteq [n]$ of size at most $d$, let $\struc{M_{\big|K}}$ be the principal submatrix of $M$ of entries that lie in the rows and columns indexed by the sets in $K$.

We obtain that $\Prog{\sSDSOSD}{\constraints}{2d}$ is equivalent to:
	 \begin{align}
\label{eq:intro_PP_8_2}
\begin{aligned}
\struc{\BSDSOSD{f}{\mathcal{G}}{2d}} \ = \ \quad \min_{\mu:\{0,1\}^n\rightarrow \R}\{\tilde{\mathbb{E}}_\mu[f]~ |~ \tilde{\mathbb{E}}_\mu[1] = 1,~ {M_g^{2d}}_{\big|I} \succeq 0,~ \text{for all } g \in \mathcal{G}, |I|=2 \}.
\end{aligned}
\end{align}

For BCPOP, both \eqref{eq:intro_PP_8} and~\eqref{eq:intro_PP_8_2} are solvable via an SOCP of size $\binom{n}{\leq d}^{O(1)}$. For more details we refer the reader to~\cite{AhmadiM17}.

\subsection{Sherali Adams}
\label{SubSec:PreliminariesSheraliAdams}

	An alternative method to approximate the sum of squares cone is based on nonnegative polynomials that depend on a limited number of variables, called \textit{\struc{$d$-juntas}}. The resulting program is called the \struc{\textit{Sherali Adams algorithm} (SA)} and was first introduced in~\cite{SheraliA90} as a method to tighten the linear programming relaxations for 0/1 hypercube optimization problems.
	Thus, we assume throughout the section and whenever we consider (SA) that the $\{0,1\}^n$ hypercube constraints are contained in $\mathcal{G}$, meaning that $\mathcal{G}=\{1,\pm(x_1^2-x_1),\ldots,\pm(x_n^2-x_n) , g_1,\ldots, g_m  \}$. 
	
	For $I \subseteq [n]$ we denote $\struc{\Vector{x}_I}=\prod_{i\in I}x_i$ and $\struc{\overline{\Vector{x}}}_I=\prod_{i \in I}(1-x_i)$. Let
	\begin{align*}
	\struc{\sSA} \ := \ \left\{ h \in \R[\Vector{x}]~\middle|~ h= \sum_{I,J} \alpha_{I,J} \cdot \Vector{x}_I \overline{\Vector{x}_J},~ \alpha_{I,J} \in \mathbb{R}_{\geq 0},~ I, J \subseteq [n] \right\}.
	\end{align*}
	A \struc{\emph{nonnegative $d$-junta}} is a function $f:\{0,1\}^n \to \mathbb{R}_{\geq 0}$ which depends only on at most $d$ input coordinates.
	It is easy to check that the set $\{h \in \R[\Vector{x}]~|~h \in \sSA,~\deg(h)\leq d \}$ is precisely the set of nonnegative $d$-juntas over the Boolean hypercube $\{0,1\}^n$. The \textit{\struc{degree-$2d$ Sherali Adams}}
	is the following problem:
	\begin{align}
	\label{eq:intro_PP_9}
	\begin{aligned}
	\struc{\BSA{f}{\constraints}{2d}} \ := \ \max\{\lambda \in \mathbb{R}~ |~ f-\lambda \in \HierPut{\sSA}{\constraints}{d}\};
	\end{aligned}
	\tag{$\Prog{\PSA}{\constraints}{2d}$}
	\end{align}
	analogously \struc{\textit{SA$^{*}$}} takes the form $\ProgSch{\PSA}{\constraints}{2d}$.
	Note that the superscript in $\HierPut{\sSA}{\constraints}{d}$ is $d$ (not $2d$), because of the way SA was defined historically, providing that $\BSA{f}{\constraints}{2d}\leq \BSOS{f}{\constraints}{2d}$. However, this does not affect the polynomial equivalence between the proof systems; see \cref{Def:PolynomialEquivalent}.
	
	
	The program $\PSA^{2d}$ can be solved using the \struc{\emph{linear program}} (LP) of size $\binom{n}{\leq d}^{O(1)}$.

    \subsubsection{SA - Dual perspective}
    \label{sec:SA_dual}
	
	Similarly as in \cref{sec:sos_dual} and \cref{sec:SDSOS_dual} one can use a conic duality theory to transform the program~\eqref{eq:intro_PP_9} into its dual of the form:
	 \begin{align}
\label{eq:intro_PP_9_2}
\begin{aligned}
\struc{\BSAD{f}{\mathcal{G}}{2d}} \ := \ \quad \min_{\mu:\{0,1\}^n\rightarrow \R}\left\{\tilde{\mathbb{E}}_\mu[f]~ |~ \tilde{\mathbb{E}}_\mu[1] = 1,~ \tilde{\mathbb{E}}_\mu[h] \geq 0,~ \text{for all }h \in \HierPut{\sSA}{\constraints}{2d}\right\}.  
\end{aligned}
\tag{$\Prog{\sSAD}{\constraints}{2d}$}
\end{align}	
	The program \eqref{eq:intro_PP_9_2} is a linear system of size $\binom{n}{\leq d}^{O(1)}$.
	Analogously, the \struc{$\overline{\text{SA}}^{~*}$ \textit{program of degree} $2d$} takes the form $\struc{\ProgSch{\sSAD}{\constraints}{2d}}$.

	\subsection{Sum of Nonnegative Circuit}
	A method for approximating the cone $\NNcone{\constraints}$, which is independent of $\TSOS$, is based on \struc{\textit{sums of nonnegative circuit polynomials} (SONC)}, defined below in \cref{sec:sonc_polynomials}. The technique was introduced by Iliman and the second author in~\cite{Iliman:deWolff:Circuits}.
	Let $\struc{\sSONC}$ be the \struc{\textit{set of finite sums of nonnegative circuit polynomials}}.
	We consider the following program:
	\begin{align}
	\label{eq:intro_PP_10}
	\begin{aligned}
	\struc{\BSONC{f}{\constraints}{2d}} \ = \ \max\{\lambda \in \mathbb{R}~ |~ f-\lambda \in \HierPut{\sSONC}{\constraints}{2d}\};
	\end{aligned}
	\tag{$\Prog{\PSONC}{\constraints}{2d}$}
	\end{align}
	analogously \struc{$\sSONC^{*}$} 
	takes the form $\ProgSch{\PSONC}{\constraints}{2d}$.
	As shown in~\cite[Theorem 4.8]{Dressler:Iliman:deWolff:Positivstellensatz}, for an arbitrary real polynomial that is strictly positive on a compact, basic closed semialgebraic set $\mathcal{G}$ there exists a $\TSONC^*$ certificate of nonnegativity, i.e., the Schm\"udgen-type Positivstellensatz \cref{thm:Schmudgen} applies to SONC. Moreover, searching through the space of degree $d$ certificates can be done via a \struc{\textit{relative entropy program (REP)}} \cite{Dressler:Iliman:deWolff:Positivstellensatz} of size $\binom{n+d}{d}^{O(1)}$; see also \cite{Chandrasekaran:Shah:RelativeEntropyApplications,Chandrasekaran:Shah:SignomialOptimization,Chandrasekaran:Murray:Wierman}. REPs are convex optimization programs and are efficiently solvable with interior point methods; see e.g., \cite{Chandrasekaran:Shah:RelativeEntropyApplications,nesterov} for more details.

\subsubsection{Nonnegative Circuit Polynomials}
\label{sec:sonc_polynomials}

We recall the most relevant statements about SONCs.

\begin{definition}
	A polynomial $f \in \R[\mathbf{x}]$ 
	is called a \struc{\emph{circuit polynomial}} if it is of the form
	\begin{align*}
	\struc{f(\mathbf{x})} ~:= ~ f_{\boldsymbol{\beta}} \mathbf{x}^{\boldsymbol{\beta}} + \sum_{j=0}^r f_{\boldsymbol{\alp}(j)} \mathbf{x}^{\boldsymbol{\alp}(j)}, 
	\end{align*}
	with $\struc{r} \leq n$, exponents $\struc{\boldsymbol{\alp}(j)}$, $\struc{\boldsymbol{\beta}} \in A$, and coefficients $\struc{f_{\boldsymbol{\alp}(j)}} \in \R_{> 0}$, $\struc{f_{\boldsymbol{\beta}}} \in \R$, such that $\New(f)$ is a simplex with even vertices $\boldsymbol{\alp}(0), \boldsymbol{\alp}(1),\ldots,\boldsymbol{\alp}(r)$ and the exponent $\boldsymbol{\beta}$ is in the strict interior of $\New(f)$.
	
	For every circuit polynomial we define the corresponding \struc{\textit{circuit number}} as
	\begin{align*}
	\struc{\Theta_f} \ := \ \prod_{j = 0}^r \left(\frac{f_{\boldsymbol{\alp}(j)}}{\lambda_j}\right)^{\lambda_j}. 
	\end{align*}
	\label{Def:CircuitPolynomial}
\end{definition}

One determines nonnegativity of circuit polynomials via its circuit number $\Theta_f$ as follows:

\begin{theorem}[\cite{Iliman:deWolff:Circuits}, Theorem 3.8]
	Let $f$  be a  circuit polynomial then $f$ is nonnegative if and only if $|f_{\boldsymbol{\beta}}| \leq \Theta_f$ and $\boldsymbol{\beta} \not \in (2\N)^n$ or $f_{\boldsymbol{\beta}} \geq -\Theta_f$ and $\boldsymbol{\beta }\in (2\N)^n$.
	\label{Thm:CircuitPolynomialNonnegativity}
\end{theorem}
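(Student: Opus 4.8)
The plan is to prove both directions of the equivalence by directly analyzing the circuit number $\Theta_f$, which encodes everything about nonnegativity of a circuit polynomial. Recall from \cref{Def:CircuitPolynomial} that $f(\mathbf{x}) = f_{\boldsymbol{\beta}} \mathbf{x}^{\boldsymbol{\beta}} + \sum_{j=0}^r f_{\boldsymbol{\alp}(j)} \mathbf{x}^{\boldsymbol{\alp}(j)}$, where $\boldsymbol{\beta}$ lies in the strict interior of the simplex $\New(f)$ with even vertices $\boldsymbol{\alp}(0),\ldots,\boldsymbol{\alp}(r)$. Since $\boldsymbol{\beta}$ is interior, there exist unique barycentric coordinates $\lambda_0,\ldots,\lambda_r > 0$ with $\sum_j \lambda_j = 1$ and $\sum_j \lambda_j \boldsymbol{\alp}(j) = \boldsymbol{\beta}$; these are exactly the $\lambda_j$ appearing in the definition of $\Theta_f$. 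The first step is to observe that the essential inequality to establish is, for all $\mathbf{x}$ in the relevant domain, $f_{\boldsymbol{\beta}} \mathbf{x}^{\boldsymbol{\beta}} \geq -\sum_{j=0}^r f_{\boldsymbol{\alp}(j)} \mathbf{x}^{\boldsymbol{\alp}(j)}$, and that the right-hand side is controlled by the weighted arithmetic-geometric mean inequality.

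The core mechanism is the weighted AM-GM inequality applied to the terms $f_{\boldsymbol{\alp}(j)} \mathbf{x}^{\boldsymbol{\alp}(j)}$ with weights $\lambda_j$. First I would restrict attention to $\mathbf{x} \in \R_{> 0}^n$: because each vertex $\boldsymbol{\alp}(j)$ is even, the monomials $\mathbf{x}^{\boldsymbol{\alp}(j)}$ are squares and hence nonnegative everywhere, so the sign structure of $f$ on all of $\R^n$ is governed by its behavior on the positive orthant together with how $\mathbf{x}^{\boldsymbol{\beta}}$ changes sign. Applying AM-GM gives $\sum_{j} \lambda_j \big( f_{\boldsymbol{\alp}(j)} \mathbf{x}^{\boldsymbol{\alp}(j)} / \lambda_j \big) \geq \prod_{j} \big( f_{\boldsymbol{\alp}(j)} \mathbf{x}^{\boldsymbol{\alp}(j)} / \lambda_j \big)^{\lambda_j} = \Theta_f \cdot \mathbf{x}^{\sum_j \lambda_j \boldsymbol{\alp}(j)} = \Theta_f \mathbf{x}^{\boldsymbol{\beta}}$, where the exponent collapses to $\boldsymbol{\beta}$ precisely because of the barycentric identity. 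Thus $\sum_j f_{\boldsymbol{\alp}(j)} \mathbf{x}^{\boldsymbol{\alp}(j)} \geq \Theta_f \mathbf{x}^{\boldsymbol{\beta}}$ on the positive orthant, and the AM-GM equality case pins down exactly where this bound is tight.

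For the sufficiency direction I would split on the parity of $\boldsymbol{\beta}$. If $\boldsymbol{\beta} \notin (2\N)^n$, the monomial $\mathbf{x}^{\boldsymbol{\beta}}$ can take both signs, so the worst case is $f_{\boldsymbol{\beta}} \mathbf{x}^{\boldsymbol{\beta}} = -|f_{\boldsymbol{\beta}}| \cdot |\mathbf{x}^{\boldsymbol{\beta}}|$; combining this with the AM-GM bound shows that $|f_{\boldsymbol{\beta}}| \leq \Theta_f$ forces $f \geq 0$. If $\boldsymbol{\beta} \in (2\N)^n$, then $\mathbf{x}^{\boldsymbol{\beta}} \geq 0$ always, so a negative coefficient is only dangerous when $f_{\boldsymbol{\beta}} < 0$, and the condition $f_{\boldsymbol{\beta}} \geq -\Theta_f$ is exactly what the AM-GM estimate requires; when $f_{\boldsymbol{\beta}} \geq 0$ the polynomial is a sum of monomial squares and trivially nonnegative. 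For the necessity direction I would exhibit a point where the AM-GM bound is attained: the equality case occurs when all the quantities $f_{\boldsymbol{\alp}(j)} \mathbf{x}^{\boldsymbol{\alp}(j)}/\lambda_j$ coincide, which (taking logarithms) reduces to a solvable linear system in $\log x_i$ since the $\boldsymbol{\alp}(j)$ affinely span the simplex; evaluating $f$ at such a minimizer and choosing the sign of the coordinates to make $\mathbf{x}^{\boldsymbol{\beta}}$ as negative as possible shows that violating the stated bound produces a strictly negative value.

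The main obstacle is the bookkeeping in the necessity direction: one must verify that a point achieving the AM-GM equality genuinely exists in $\R^n$ (not merely in a limit), that one can independently control the sign of $\mathbf{x}^{\boldsymbol{\beta}}$ by flipping signs of coordinates when $\boldsymbol{\beta} \notin (2\N)^n$ without disturbing the even monomials $\mathbf{x}^{\boldsymbol{\alp}(j)}$, and that the resulting evaluation of $f$ is exactly $(\,|f_{\boldsymbol{\beta}}| - \Theta_f\,)$ times a positive quantity. The affine independence of the simplex vertices is what guarantees the log-linear system is solvable and hence that the equality point exists; the evenness of the $\boldsymbol{\alp}(j)$ is what decouples the sign manipulation from the magnitude estimate. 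Once these points are checked, both directions follow cleanly from the single AM-GM computation.
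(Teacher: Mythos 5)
The paper itself offers no proof of this statement: it is imported directly from \cite{Iliman:deWolff:Circuits} (Theorem 3.8 there), so there is no in-paper argument to compare against. Your proposal is correct and is essentially the proof from that source, which in turn goes back to Reznick's work on agiforms \cite{Reznick:AGI}: the weighted AM-GM inequality with the barycentric weights $\lambda_j$ yields $\sum_{j=0}^r f_{\boldsymbol{\alpha}(j)}\mathbf{x}^{\boldsymbol{\alpha}(j)} \geq \Theta_f\,\mathbf{x}^{\boldsymbol{\beta}}$ on the positive orthant; evenness of the vertices $\boldsymbol{\alpha}(j)$ reduces all of $\R^n$ to the positive orthant up to the sign of $\mathbf{x}^{\boldsymbol{\beta}}$, giving sufficiency; and necessity comes from the AM-GM equality point, which exists because the log-linear system $f_{\boldsymbol{\alpha}(j)}\mathbf{x}^{\boldsymbol{\alpha}(j)}/\lambda_j = t$ is solvable by affine independence of the simplex vertices, combined with a sign flip at an odd coordinate of $\boldsymbol{\beta}$ (available exactly when $\boldsymbol{\beta} \notin (2\N)^n$, which is why the condition is two-sided in that case and one-sided when $\boldsymbol{\beta}$ is even). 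Two cosmetic remarks: in your last paragraph the evaluation at the worst-sign equality point is $(\Theta_f - |f_{\boldsymbol{\beta}}|)\,\mathbf{x}^{\boldsymbol{\beta}}$ times a positive quantity, i.e.\ the \emph{negative} of $(|f_{\boldsymbol{\beta}}| - \Theta_f)$ times a positive quantity, so your sign is flipped even though the conclusion you draw from it is the right one; and the extension from the open positive orthant to points with zero coordinates deserves one sentence (either by continuity, or by noting that weighted AM-GM holds for nonnegative reals), though this is routine.
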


Let
$$
\struc{\sSONC} \ := \ \left\{h= \sum_{i=1}^k \mu_i p_i~|~ 
p_i \text{ is a nonnegative circuit polynomial},~ \mu_i \geq 0,~ k \in \N^* \right\}.
$$

%

Following Reznick, we define maximal mediated sets; note that these objects are well-defined due to \cite[Theorem 2.2]{Reznick:AGI}

\begin{definition}
	Let $A \subseteq \Z^n$ such that $\vertices{\conv(A)} \in (2\Z)^n$. We call a set $M \subseteq \conv(A) \cap \Z^n$ \struc{($A$-)\textit{mediated}} if every element of $M$ is the midpoint of two distinct points in $\conv(A) \cap (2\Z)^n$.
	
	We define the \struc{\textit{maximal mediated set} $\conv(A)^*$} as the unique $A$-mediated set which contains every other $A$ mediated set.
	
	Let $\conv(A)$ be a simplex. If $\conv(A)^* = \conv(A) \cap \Z^n$, then we call $\conv(A)$ an \struc{$H$-\textit{simplex}}. If $\conv(A)^*$ consist only of $\vertices{\conv(A)}$ and the midpoints of the vertices, then we call $\conv(A)^*$ an \struc{$M$-\textit{simplex}}.
	\label{Definition:MaximalMediatedSets}
\end{definition}

Generalizing a result by Reznick in \cite{Reznick:AGI}, Iliman and the second author proved that maximal mediated sets are exactly the correct object for determining whether a nonnegative circuit polynomial is a sum of squares.

\begin{theorem}[\cite{Iliman:deWolff:Circuits}, Theorem 5.2]
	Let $f$ be a nonnegative circuit polynomial with inner term $f_{\Vector{\beta}} \Vector{x}^{\Vector{\beta}}$. Then $f$ is a sum of squares if and only if $f$ is a sum of monomial squares or if $\Vector{\beta} \in \New(f)^*$.
	
	Especially $f$ is always an SOS if $\New(f)$ is an $H$-simplex, and $f$ is never an SOS if $\New(f)$ is an $M$-simplex.
	\label{Theorem:NNCircuitPolyIsSOS}
\end{theorem}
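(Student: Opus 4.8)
The argument follows Reznick's theory of agiforms and mediated sets \cite{Reznick:AGI}. First I dispose of a trivial case: if $\Vector{\beta} \in (2\N)^n$ and $f_{\Vector{\beta}} \geq 0$, then (as every $\Vector{\alp}(j)$ is even with $f_{\Vector{\alp}(j)}>0$) all terms of $f$ are monomial squares, so $f$ is a sum of squares; this is the first alternative. Hence from now on I assume the inner term is \emph{genuine}, i.e.\ $\Vector{\beta} \notin (2\N)^n$ or $f_{\Vector{\beta}} < 0$, and prove that $f$ is a sum of squares if and only if $\Vector{\beta} \in \New(f)^*$. Write $A := \{\Vector{\alp}(0),\dots,\Vector{\alp}(r)\}$ for the even vertices, so $\New(f) = \conv(A)$.

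\textbf{(SOS $\Rightarrow \Vector{\beta}\in\New(f)^*$).} Suppose $f = \sum_k g_k^2$. The standard Newton polytope inclusion for sums of squares gives $\operatorname{supp}(g_k) \subseteq \tfrac12\New(f) \cap \Z^n$ for every $k$; put $\Lambda := \bigcup_k \operatorname{supp}(g_k)$ and $S := \{2\Vector{p} : \Vector{p}\in\Lambda\} \cup \{\Vector{\beta}\} \subseteq \conv(A)\cap\Z^n$. Since each vertex $\Vector{\alp}(j)$ is extremal in $\New(f)$, the term $\Vector{x}^{\Vector{\alp}(j)}$ can only arise from the diagonal square of $\Vector{x}^{\Vector{\alp}(j)/2}$, so $\Vector{\alp}(j)/2\in\Lambda$ and the vertices of $\conv(S)=\New(f)$ lie in $S$. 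Now take any non-vertex $\Vector{s}\in S$. The diagonal terms of $\sum_k g_k^2$ contribute a nonnegative amount to the coefficient of the even monomial $\Vector{x}^{\Vector{s}}$ (and nothing if $\Vector{s}$ is odd), strictly positive when $\Vector{s}=2\Vector{p}$ with $\Vector{p}\in\Lambda$, whereas in $f$ this coefficient is $0$ (if $\Vector{s}\neq\Vector{\beta}$) or the genuine value $f_{\Vector{\beta}}$ (if $\Vector{s}=\Vector{\beta}$). In every case the nonnegative diagonal alone cannot produce this coefficient, so a cancelling cross term must occur: there exist distinct $\Vector{u},\Vector{v}\in\Lambda$ with $\Vector{u}+\Vector{v}=\Vector{s}$. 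Thus $\Vector{s}$ is the midpoint of the two distinct even points $2\Vector{u},2\Vector{v}\in S$. As this holds for every non-vertex point of $S$ simultaneously, $S$ is an $A$-mediated set containing $\Vector{\beta}$, and by maximality (\cref{Definition:MaximalMediatedSets}) we conclude $\Vector{\beta}\in\New(f)^*$.

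\textbf{($\Vector{\beta}\in\New(f)^*\Rightarrow$ SOS).} Let $\nu_0,\dots,\nu_r > 0$ with $\sum_j\nu_j=1$ be the barycentric coordinates of $\Vector{\beta}$ in $\conv(A)$, so $\Theta_f=\prod_j(f_{\Vector{\alp}(j)}/\nu_j)^{\nu_j}$. Using \cref{Thm:CircuitPolynomialNonnegativity} and $|f_{\Vector{\beta}}|\leq\Theta_f$, I reduce to the extremal case: setting $t:=|f_{\Vector{\beta}}|/\Theta_f\in(0,1]$ and $\mu_j:=t\,f_{\Vector{\alp}(j)}$ splits
\begin{align*}
f \ = \ \Big(f_{\Vector{\beta}}\Vector{x}^{\Vector{\beta}} + \sum_{j}\mu_j\Vector{x}^{\Vector{\alp}(j)}\Big) + \sum_{j}\big(f_{\Vector{\alp}(j)}-\mu_j\big)\Vector{x}^{\Vector{\alp}(j)},
\end{align*}
where the second summand is a sum of monomial squares and the first is an extremal circuit polynomial ($|f_{\Vector{\beta}}|=\prod_j(\mu_j/\nu_j)^{\nu_j}$) with the same inner exponent $\Vector{\beta}$. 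Since an extremal nonnegative circuit polynomial is a positive scaling $x_i\mapsto c_ix_i$ (possibly after a sign flip $x_i\mapsto-x_i$) of the \emph{agiform} $F := \sum_j\nu_j\Vector{x}^{\Vector{\alp}(j)} - \Vector{x}^{\Vector{\beta}}$, and these operations preserve the SOS property, it suffices to show $F$ is a sum of squares. Here I use the recursive midpoint structure of $\New(f)^*$: every non-vertex point is an average $\Vector{m}=\tfrac12(\Vector{p}+\Vector{q})$ of two distinct even points of $\New(f)^*$, and each such mediation contributes the binomial square $(\Vector{x}^{\Vector{p}/2}-\Vector{x}^{\Vector{q}/2})^2 = \Vector{x}^{\Vector{p}} - 2\Vector{x}^{\Vector{m}} + \Vector{x}^{\Vector{q}}$. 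The plan is to assign nonnegative weights to the mediating pairs so that the resulting sum of binomial squares matches the coefficients of $F$ at $\Vector{\beta}$ and at the vertices while all intermediate monomials cancel; the existence of such nonnegative weights is exactly what the recursive mediated structure guarantees, proved by induction on $\New(f)^*$ (peeling off mediations of $\Vector{\beta}$ and reducing to agiforms whose inner exponents lie in strictly smaller mediated sets).

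\textbf{The two simplex consequences.} If $\New(f)$ is an $H$-simplex then $\New(f)^* = \New(f)\cap\Z^n$ contains the interior point $\Vector{\beta}$, so $f$ is always a sum of squares. If $\New(f)$ is an $M$-simplex then $\New(f)^*$ consists only of the vertices and their pairwise midpoints, all lying on $\partial\New(f)$; as $\Vector{\beta}$ is in the strict interior it cannot lie in $\New(f)^*$, so a circuit polynomial with genuine inner term is never a sum of squares. I expect the main obstacle to be the construction in the second direction, namely verifying that the recursive structure of the maximal mediated set yields \emph{nonnegative} weights on the mediating binomial squares with exact cancellation of all intermediate monomials; the reduction to the agiform and the support/cancellation argument of the first direction are comparatively routine.
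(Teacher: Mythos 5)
First, a point of reference: the paper does not prove this statement at all --- it is imported verbatim from \cite{Iliman:deWolff:Circuits} (Theorem 5.2 there), whose proof in turn rests on Reznick's theory of agiforms and mediated sets \cite{Reznick:AGI}. Measured against that source, your ``only if'' direction is essentially the correct and complete argument: Newton-polytope containment for the squares, vertex terms arising only from diagonal squares, and the diagonal-versus-cross-term bookkeeping that assembles the support of a square decomposition into an $A$-mediated set containing $\Vector{\beta}$. Two small repairs there: the sub-case $f_{\Vector{\beta}} = 0$ with $\Vector{\beta}$ odd sits inside your ``genuine'' regime, yet no cancelling cross terms are forced and the equivalence you set out to prove can fail for it (such an $f$ is a sum of monomial squares whether or not $\Vector{\beta} \in \New(f)^*$), so fold it into the first alternative; and note that you are correctly using Reznick's \emph{recursive} notion of mediated set (midpoints of two distinct even points of the set itself), which is what maximality and the converse require --- the paper's \cref{Definition:MaximalMediatedSets} is phrased more loosely.

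The genuine gap is the ``if'' direction, which is where the entire content of the theorem sits. Your reduction to the extremal case and then, via a diagonal substitution, to the agiform $F = \sum_j \nu_j \Vector{x}^{\Vector{\alp}(j)} - \Vector{x}^{\Vector{\beta}}$ is fine. But the remaining claim --- $\Vector{\beta} \in \New(f)^*$ implies $F$ is a nonnegative combination of binomial squares --- is precisely Reznick's Theorem 4.4, and what you offer for it is a plan, not a proof, as you yourself concede. Worse, the induction you sketch (``peeling off mediations of $\Vector{\beta}$ and reducing to agiforms whose inner exponents lie in strictly smaller mediated sets'') cannot work as stated, because mediation in $\New(f)^*$ can be \emph{cyclic}, so there is no well-founded ``strictly smaller'' structure to induct on. Concretely, take $n=1$ and $A = \{0,6\}$: the points $2 = \tfrac{1}{2}(0+4)$ and $4 = \tfrac{1}{2}(2+6)$ both lie in $A^*$, but $2$ is mediated only through $4$ and $4$ only through $2$; peeling the agiform with inner exponent $2$ produces the one with inner exponent $4$, which peels straight back to the one you started with. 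The cycle can be resolved --- each loop contracts the coefficient by $\tfrac{1}{2}$, so the associated linear system has a nonnegative solution, e.g. $\tfrac{2}{3} + \tfrac{1}{3}x^6 - x^2 = \tfrac{2}{3}\left(1-x^2\right)^2 + \tfrac{1}{3}\left(x - x^3\right)^2$ --- but this fixed-point/linear-system step (or Reznick's actual scheme, which treats all points of $A^*$ simultaneously rather than peeling at $\Vector{\beta}$ alone) is exactly the missing heart of the proof. Until it is supplied, the ``if'' half, and with it the $H$-simplex clause, is unproven; note also that your justification of the $M$-simplex clause (vertex midpoints lie on $\partial\New(f)$) silently assumes $\dim \New(f) \geq 2$.
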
	
	
For further details about SONCs see e.g., \cite{deWolff:Circuits:OWR,Dressler:Iliman:deWolff:Positivstellensatz,DresslerKW18,Iliman:deWolff:Circuits,Seidler:deWolff:ExperimentalComparisonSONCandSOS}. A description of the dual of the SONC cone was recently provided in \cite{Dressler:Naumann:Theobald:DualSONCCone}, which we, however, do not need for the purpose of this article.
	
\subsection{Comparing proof systems}
\label{Sec:ComparingProofSystems}
In this section we introduce the notation used for comparing the proof systems presented in Section~\ref{sec:Preliminaries}, from the proof complexity perspective.
For a gentle introduction to proof complexity we refer the reader to e.g.,~\cite{Razborov16}.

Following the notation in~\cref{def:hierarchy_of_certificates}:
Let $\sGEN$ be a set of polynomials which we axiomatically assume to be nonnegative and $\constraints$ be a set of polynomials, which form the semialgebraic set $\feasible$. The \struc{GEN} \struc{\textit{proof system}} is the set of all algebraic derivations $f$ such that $f \in \bigcup_{d \in \N}  \HierPut{\sGEN}{\constraints}{2d}$ deducing nonnegativity of polynomials $f$ over $\feasible$. Analogously, the \struc{GEN$^{~*}$} \struc{\textit{proof system}} is the set of all algebraic derivations $f$ such that $f \in \bigcup_{d \in \N}  \HierSch{\sGEN}{\constraints}{2d}$ deducing nonnegativity of polynomials $f$ over $\feasible$.
The proof systems SOS, SOS$^{~*}$, SDSOS, SDSOS$^{~*}$, SA, SA$^{~*}$, SONC and SONC$^{~*}$ are defined analogously.

The complexity of the certificate depends on the $d$ needed to certify the nonnegativity. Revising~\cref{Def:PolynomialEquivalent} we say that a proof system $P$ \textit{\struc{contains}} a proof system $Q$ if for every set of polynomials $\constraints$ and a polynomial $f$ admitting a degree $d$ certificate of nonnegativity over $\feasible$ in $Q$, $f$ admits also a degree $O(d)$ certificate in $P$. A system $P$ \textit{\struc{strictly contains}} $Q$ if $P$ contains $Q$ but $Q$ does not contain $P$. I.e., there exist at least one set $\constraints$ and a polynomial $f$ nonnegative over $\feasible$ such that $f$ admits a degree $d$ certificate in $P$ but for every $c \in \N$ $f$ does not admit a degree $cd$ certificate in $Q$.
Systems $P$ and $Q$ are \textit{\struc{polynomially equivalent}} if $P$ contains $Q$ and $Q$ contains $P$. Finally, systems $P$ and $Q$ are \textit{\struc{polynomially incomparable}} if neither $P$ contains $Q$, nor $Q$ contains $P$. I.e., there exist sets $\constraints$, $\constraints^{'}$ and polynomials $f$, $f^{'}$ nonnegative over $\feasible$, $\feasible^{'}$, respectively, such that $f$ admits a degree $d$ certificate in $P$ but for every $c \in \N$ $f$ does not admit a degree $cd$ certificate in $Q$ and $f^{'}$ admits a degree $d^{'}$ certificate in $Q$ but for every $c \in \N$ $f^{'}$ does not admit a degree $cd^{'}$ certificate in $P$.

	\section{SOS vs. SONC}
	\label{Sec:SONCvsSOS}
	
	
	It is well-known that the $\sSOS$ cone and $\sSONC$ cone are not contained in each other~\cite[Proposition 7.2]{Iliman:deWolff:Circuits} 
	This statement, however, gives no prediction whether or not for CPOPs these systems are polynomially equivalent or not.
	 In this section we show that for every $n$ there exist CPOPs such that the difference between the minimal degrees of a SOS and a SONC certificate is arbitrarily large and vice versa. 
	 
	\subsection{SONC does not contain SOS}
	\label{sec:SOS_better_than_SONC}
	
	We consider the following family of polynomials:
	
	\begin{definition}
		We define the family of \struc{\textit{signed quadrics} $(N_n)_{n \in \N^*}$} by $\struc{N_n} := \left(1 - \sum_{j = 1}^n x_j \right)^2$.
	\end{definition}
	
	It is obvious that every $N_n$ is SOS and that its zero set is the unit ball of the 1-norm, i.e., for all $n \in \N^*$ we have
	\begin{align}
\cV(N_n) \ = \ \{\mathbf{x} \in \R^n \ : \ ||\mathbf{x}||_1 = 1\}.
\label{equ:ZeroSetSquared1NormPolynomial}
\end{align}
The support of $N_2$ and $N_3$ is depicted together with their Newton polytopes in \cref{fig:SignedQuadric}.
\begin{figure}
	\ifpictures
	\includegraphics[width=0.25\linewidth]{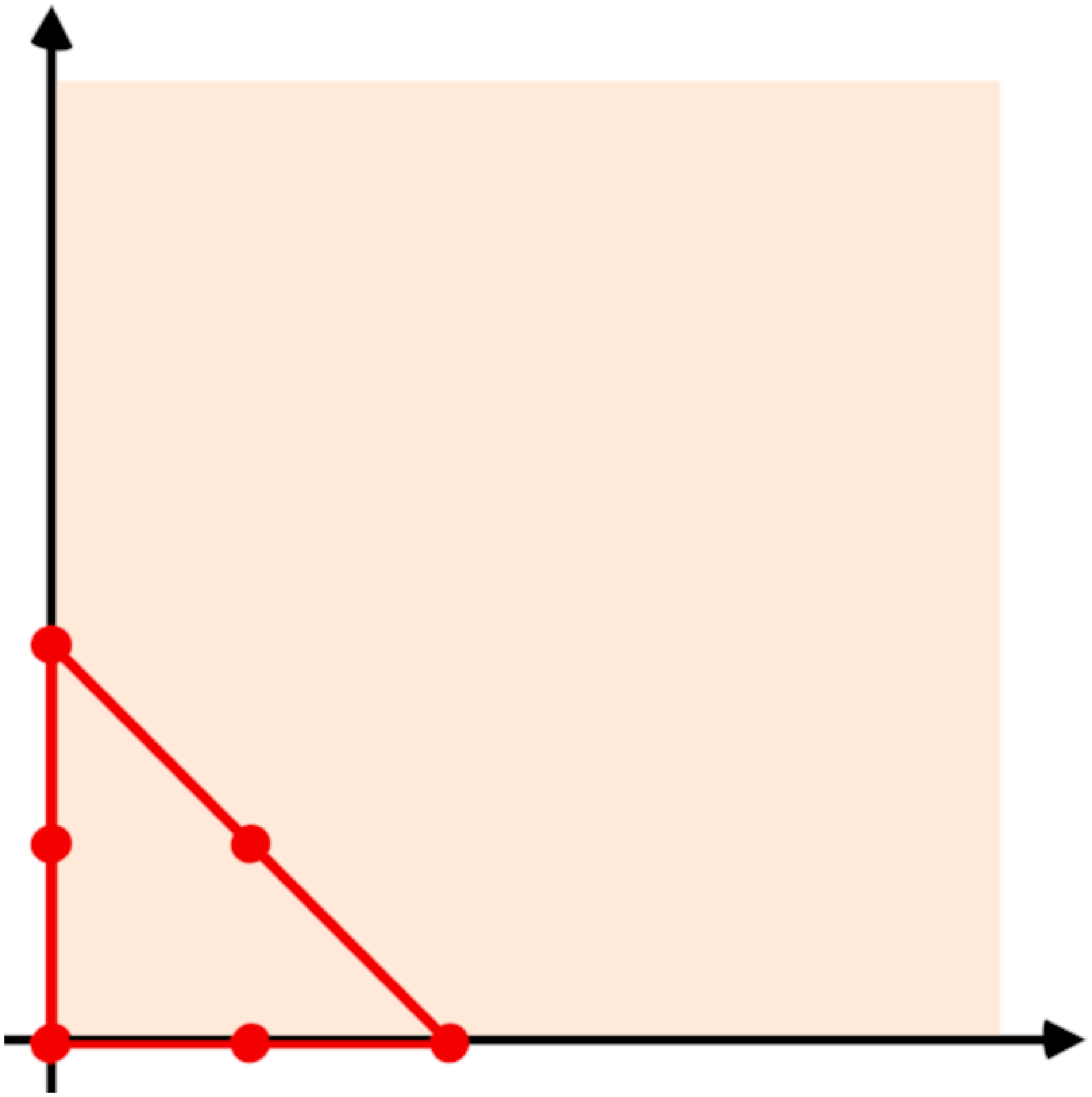} \qquad \qquad
	\includegraphics[width=0.35\linewidth]{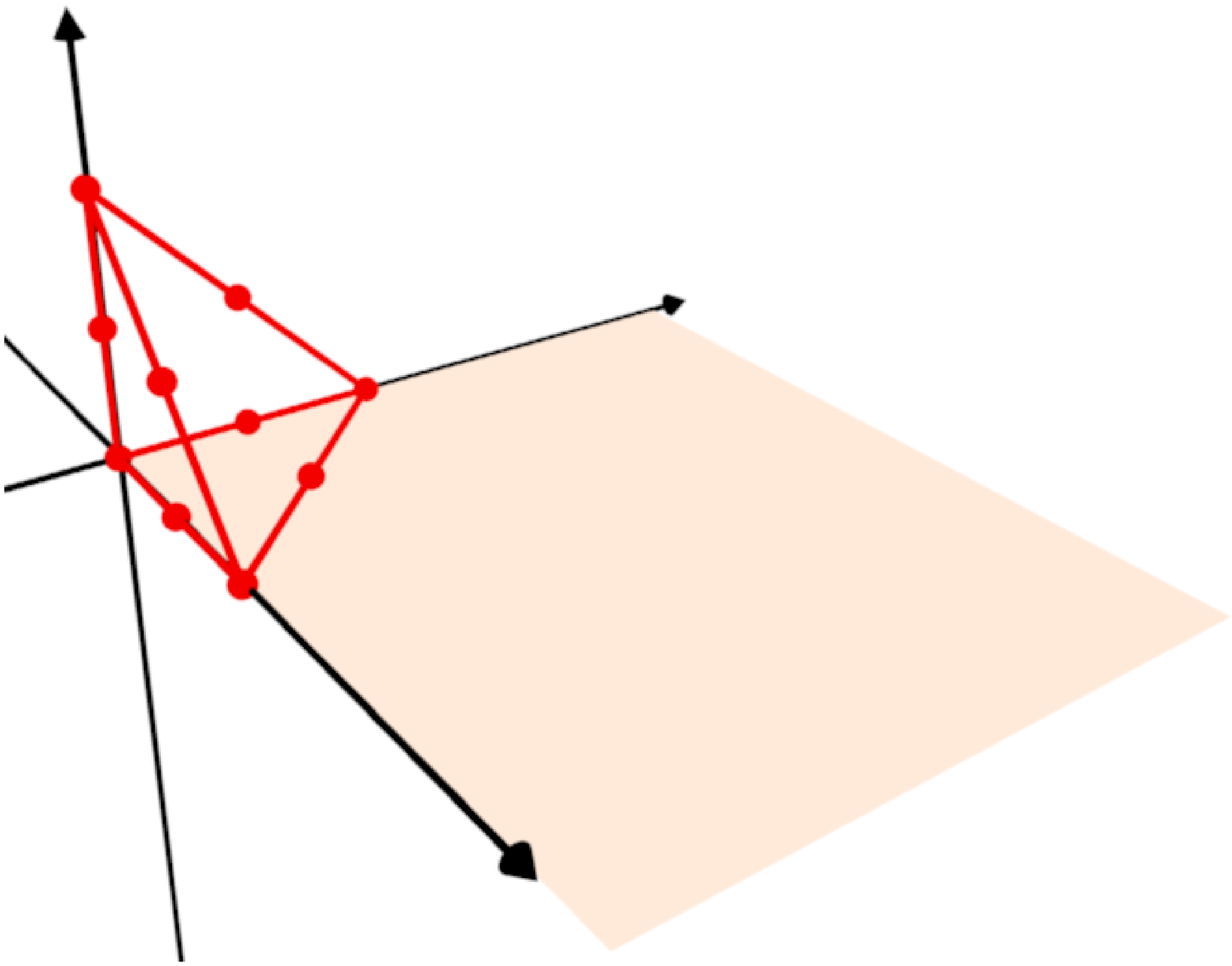}
	\fi
	\caption{The support and the Newton polytopes of $N_2$ and $N_3$.}
	\label{fig:SignedQuadric}
\end{figure}
	It is known that for every $n$ the function $N_n$ cannot be written as a combination of $(n-1)$-juntas \cite[Theorem 1.12]{Lee15}. It is, however, also straightforward to conclude that for every $n \in \N$ the polynomial $N_n$ is not a SONC polynomial.

	\begin{lemma}
		For all $n \geq 2$ it holds that $N_n \notin \sSONC$.
		\label{lemma:Squared1NormPolynomialIsNotSONC}
	\end{lemma}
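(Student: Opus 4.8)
The plan is to combine the standard Newton-polytope restriction for SONC representations with an explicit enumeration of the admissible circuits, and then close with a short Cauchy--Schwarz estimate.

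First I would expand $N_n = 1 - 2\sum_{j=1}^n x_j + \sum_{j=1}^n x_j^2 + 2\sum_{i<j} x_i x_j$, so that $\New(N_n) = \conv\{\mathbf{0}, 2e_1, \dots, 2e_n\}$ is a scaled standard simplex (here $e_j$ denotes the $j$-th unit vector). The key preliminary fact is the support lemma: in any representation $N_n = \sum_i \mu_i p_i$ with $\mu_i > 0$ and each $p_i$ a nonnegative circuit polynomial, one has $\New(p_i) \subseteq \New(N_n)$ for all $i$. I would prove this by observing that every vertex $\mathbf{v}$ of $\conv(\bigcup_i \New(p_i))$ is necessarily an (even) vertex of some $\New(p_i)$, i.e.\ a monomial square with positive coefficient, whereas no circuit can carry $\mathbf{v}$ as its inner term $\boldsymbol{\beta}$, since such a $\boldsymbol{\beta}$ lies in the strict interior of its own simplex and hence is not extreme in the union. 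Consequently the coefficient of $\mathbf{x}^{\mathbf{v}}$ in $\sum_i \mu_i p_i = N_n$ is strictly positive, forcing $\mathbf{v} \in \New(N_n)$; convexity then gives the claimed containment.

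Next I would enumerate which circuits can occur. The lattice points of $\New(N_n)$ are exactly $\mathbf{0}$, $e_j$, $2e_j$, and $e_i+e_j$, and the only even ones are $\mathbf{0}$ and the $2e_j$. Since the inner term of a circuit must lie strictly inside a simplex whose vertices are even points of $\New(N_n)$, the vertices $\mathbf{0}$ and $2e_j$ are excluded as inner terms, so the inner term is forced to be some $e_j$ or some $e_i+e_j$. A short check of which even vertices can surround these points shows the only possibilities are the univariate circuits $A + C x_j + B x_j^2$ (inner term $e_j$, Newton polytope the segment $[\mathbf{0},2e_j]$) and the binary circuits $B_i x_i^2 + D x_i x_j + B_j x_j^2$ (inner term $e_i+e_j$, Newton polytope $[2e_i,2e_j]$); by \cref{Thm:CircuitPolynomialNonnegativity} their nonnegativity is equivalent to $|C| \le 2\sqrt{AB}$ and $|D| \le 2\sqrt{B_i B_j}$, respectively.

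Finally I would match coefficients. Grouping the univariate circuits in the variable $x_j$ and writing $a_j$, $b_j$ for their aggregated constant and $x_j^2$ contributions, I note that these are the only circuits producing the monomial $x_j$, whose coefficient in $N_n$ is $-2$; hence $\sum_i \mu_i |C_i| \ge 2$, and combining $|C_i| \le 2\sqrt{A_i B_i}$ with Cauchy--Schwarz yields $a_j b_j \ge 1$ for every $j$. The constant term of $N_n$ gives $\sum_j a_j \le 1$ (binary circuits and any monomial squares contribute no constant), while the coefficient $1$ of $x_j^2$ gives $b_j \le 1$. For $n \ge 2$ some index $j^\ast$ has $a_{j^\ast} \le 1/n \le 1/2$, whence $a_{j^\ast} b_{j^\ast} \ge 1$ forces $b_{j^\ast} \ge 2 > 1$, contradicting $b_{j^\ast} \le 1$ (and if $a_{j^\ast} = 0$ the inequality $a_{j^\ast} b_{j^\ast} \ge 1$ already fails). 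I expect the main obstacle to lie in the support lemma and the enumeration --- especially in cleanly ruling out even inner terms and superfluous simplex vertices --- whereas the closing estimate is routine.
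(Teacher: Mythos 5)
Your proof is correct, but it follows a genuinely different route from the paper's. The paper's own argument is a two-line zero-counting argument: by \cref{equ:ZeroSetSquared1NormPolynomial} the zero set $\cV(N_n)$ is the boundary of the $n$-dimensional cross-polytope, hence an $(n-1)$-dimensional (in particular infinite) set, while any SONC has at most $2^n$ distinct real zeros (at most one per orthant, \cite[Corollary 3.9]{Iliman:deWolff:Circuits}); this is an immediate contradiction. You instead argue on the coefficient side: a Newton-polytope support lemma (vertices of the hull of the circuits' Newton polytopes must be even vertices carrying positive coefficients, so the hull is trapped inside $\New(N_n)$), an enumeration showing the only admissible circuits are the univariate ones on $[\mathbf{0},2\mathbf{e}_j]$ and the binary ones on $[2\mathbf{e}_i,2\mathbf{e}_j]$ (the triangles $\conv\{\mathbf{0},2\mathbf{e}_i,2\mathbf{e}_j\}$ indeed contain no interior lattice points, so no higher-dimensional circuits occur), and a Cauchy--Schwarz estimate pitting the coefficient $-2$ of $x_j$ against the constant term and the $x_j^2$ coefficients via \cref{Thm:CircuitPolynomialNonnegativity}. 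All three steps check out, including the exclusion of inner terms at extreme points and the aggregation inequality $a_jb_j\geq 1$ versus $\sum_j a_j\leq 1$, $b_j\leq 1$. Your approach is essentially a direct, all-$n$ version of the term-by-term inspection that the paper attributes to \cite[Example 3.7]{Seidler:deWolff:ExperimentalComparisonSONCandSOS} for $N_2$ and remarks could be bootstrapped inductively; you avoid the induction and the face-restriction argument entirely. What the paper's proof buys is brevity and a conceptual reason (SONCs simply cannot vanish on a positive-dimensional set); what yours buys is self-containedness --- it needs only the circuit-number criterion, not the zero-counting corollary --- plus a quantitative obstruction that also makes transparent why $n=1$ is the exceptional case ($N_1=1-2x_1+x_1^2$ is itself a nonnegative circuit, matching the boundary case $a_1b_1=1$ of your inequalities).
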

	
	\begin{proof}
		By~\cref{equ:ZeroSetSquared1NormPolynomial} the real zero set $\cV(N_n)$ is equal to the boundary of the $n$-dimensional cross-polytope; see e.g., \cite{Ziegler:Book}. In particular, it is an $(n-1)$ dimensional piecewise-linear set. A SONC, however, has at most $2^n$ many distinct real zeros by \cite[Corollary 3.9]{Iliman:deWolff:Circuits}.
	\end{proof}
	

		In~\cite[Example 3.7]{Seidler:deWolff:ExperimentalComparisonSONCandSOS} it is shown that $N_2$ is not a SONC due to a term by term inspection. We point out that one could build over that argument and reprove inductively~\cref{lemma:Squared1NormPolynomialIsNotSONC} using the fact that the support set of $N_2$ equals the restriction of the support set of $N_n$ restricted to a specific $2$-face of $\New(N_n)$.
	
	\begin{corollary}
		For every $n \in \N$ with $n \geq 2$ and every $t \geq 1$ there exist infinitely many systems $\mathcal{G}$ such that $\HierPut{\sSOS}{\constraints}{2} \not\subset \HierSch{\sSONC}{\constraints}{2t}$.
		\label{corollary:SOSCanBeArbitrarilyBetterThanSONC}
	\end{corollary}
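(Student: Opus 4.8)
The plan is to produce, for each admissible pair $(n,t)$, a single polynomial that witnesses the non-containment, namely the signed quadric $N_n$ itself, together with infinitely many systems on which it separates the two hierarchies. First I would fix $n \geq 2$ and $t \geq 1$ and, for every integer $N \geq 2$, set $\constraints_N := \{1,\, N - x_1,\, N + x_1,\, \ldots,\, N - x_n,\, N + x_n\}$. These are pairwise distinct box systems, and each makes $\preprime{\constraints_N}$ Archimedean, so letting $N$ range over the integers $\geq 2$ already supplies the infinitely many systems demanded by the statement. Membership on the SOS side is immediate: since $N_n = (1 - \sum_{j} x_j)^2 \in \sSOS$ has degree $2$, writing $N_n = N_n \cdot g_0$ with $g_0 = 1$ shows $N_n \in \HierPut{\sSOS}{\constraints_N}{2}$.

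The substance of the argument is to show $N_n \notin \HierSch{\sSONC}{\constraints_N}{2t}$. I would argue by contradiction: suppose $N_n = \sum_i s_i p_i$ with each $s_i \in \sSONC$ and each $p_i \in \preprime{\constraints_N}$, that is, each $p_i$ a nonnegative combination of products of the box constraints. The decisive geometric observation is that the zero set lies in the strict interior of the box: by \cref{equ:ZeroSetSquared1NormPolynomial} every $x \in \cV(N_n)$ satisfies $|x_i| \leq \|x\|_1 = 1 < N$, so $N \pm x_i > 0$ on $\cV(N_n)$, and hence every product of box constraints, and so every $p_i \not\equiv 0$, is strictly positive there.

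Evaluating the identity on $\cV(N_n)$ then forces the conclusion. There the left-hand side vanishes while on the right each SONC summand $s_i$ is nonnegative (as a sum of nonnegative circuit polynomials) and each $p_i$ is positive; a sum of nonnegative terms that equals zero must have every term vanish, so $s_i \equiv 0$ on $\cV(N_n)$ for all $i$. But $\cV(N_n)$ is the boundary of the $n$-dimensional cross-polytope, an $(n-1)$-dimensional and hence infinite set for $n \geq 2$, whereas a nonzero SONC has at most $2^n$ distinct real zeros by \cite[Corollary 3.9]{Iliman:deWolff:Circuits}. Consequently each $s_i$ is the zero polynomial, forcing $N_n \equiv 0$, a contradiction.

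I expect the only delicate point to be the control of the preprime: a priori the products of constraints appearing in a Schm\"udgen-type certificate could conspire to absorb the $(n-1)$-dimensional zero locus of $N_n$ and so enable a low-degree SONC representation. The box-constraint construction with $N > 1$ is exactly what forecloses this possibility, since it keeps all constraints strictly positive on $\cV(N_n)$ and thereby reduces the question to the vanishing of the individual SONC summands, where the finite-zero bound of \cite{Iliman:deWolff:Circuits} applies. I note that the argument in fact produces the stronger conclusion $N_n \notin \HierSch{\sSONC}{\constraints_N}{2t}$ for every $t \in \N$ simultaneously.
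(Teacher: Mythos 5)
Your proof is correct, but it takes a genuinely different route from the paper's. The paper's proof of \cref{corollary:SOSCanBeArbitrarilyBetterThanSONC} chooses systems $\constraints$ whose constraints all have degree at least $2t+1$ (and whose feasible set contains $\cV(N_n)$ and is compact); then a SONC$^*$ certificate of degree at most $2t$ cannot involve any constraint for pure degree reasons, so it would exhibit $N_n$ itself as a SONC, contradicting \cref{lemma:Squared1NormPolynomialIsNotSONC}. You instead fix degree-one box systems $\constraints_N$, independent of $t$, and rule out certificates of \emph{every} degree at once: since nonzero elements of $\preprime{\constraints_N}$ are strictly positive on the infinite zero set of $N_n$, evaluating a putative certificate on $\cV(N_n)$ forces each SONC multiplier to vanish there, hence to be identically zero by the $2^n$-zero bound of \cite[Corollary 3.9]{Iliman:deWolff:Circuits}; in effect you re-run the proof of \cref{lemma:Squared1NormPolynomialIsNotSONC} on every summand of the certificate rather than quoting it once for $N_n$. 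What each approach buys: the paper's degree-counting argument is more modular --- it works for any ground set the moment one knows $N_n$ is not in that ground set, and it needs no positivity of the constraints on $\cV(N_n)$ --- whereas your argument yields a strictly stronger conclusion, a single family of systems witnessing non-containment for all $t$ simultaneously, i.e., $N_n$ admits no SONC$^*$ certificate of any degree over $\constraints_N$; that is a Positivstellensatz-failure statement in the spirit of \cref{Corollary:ThereExistsNoPutinarPositivstellensatzForSDSOS} and \cite[Theorem 5.1]{DresslerKW18}. Two small points to tidy. First, you may only conclude $s_i \equiv 0$ for those $i$ with $p_i \not\equiv 0$; the remaining terms contribute nothing, so the contradiction $N_n \equiv 0$ still follows. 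Second, your bound $|x_i| \leq \|\Vector{x}\|_1 = 1$ inherits the paper's description \cref{equ:ZeroSetSquared1NormPolynomial} of $\cV(N_n)$; note that the literal zero set of $(1-\sum_j x_j)^2$ is the hyperplane $\{\sum_j x_j = 1\}$, which is unbounded, so the box constraints are not positive on all of it --- but your argument survives verbatim after restricting to the part of $\cV(N_n)$ inside the open box $(-N,N)^n$, which is still infinite for $n \geq 2$.
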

	
	\begin{proof}
		Let $n \in \N_{\geq 2}$ be fixed. Consider a system 
		\begin{align}
		\min N_n \ \text{ such that } \ g_1,\ldots,g_s \geq 0
		\label{equ:SOSFriendlyCPOP}
		\end{align}
		where $N_n$ is the signed quadric and $\constraints = \{g_1,\ldots,g_s\}$ is a system of polynomials such that $\min_{i \in [s]}\{\deg(g_i)\} \geq 2t+1$, $\cV(N_n) \subseteq \feasible$, and $\feasible$ compact.
		
		On the one hand, there exists an SOS certificate of degree $2$ for the system \cref{equ:SOSFriendlyCPOP} given by $N_n$ alone, as $N_n$ is already an SOS and moreover $\min_{\mathbf{x} \in \R^n} N_n = \min_{\mathbf{x} \in \feasible} N_n$.
		
		On the other hand, due to the Positivstellensatz result for SONCs~\cite[Theorem 4.8]{Dressler:Iliman:deWolff:Positivstellensatz}, there exists a SONC certificate of the form $N_n \in \HierSch{\sSONC}{\constraints}{2d}$, for some value of $d$. But since $N_n$ is not a SONC due to \cref{lemma:Squared1NormPolynomialIsNotSONC} the certificate necessarily has to incorporate at least one of the constraints defining the set $\constraints$. Hence, $d>t$.
	\end{proof}

	\subsection{SOS does not contain SONC}
	\label{sec:SONC_better_than_SOS}
	
	In this section we show the inverse of the result from \cref{sec:SOS_better_than_SONC}, namely that SONC is not contained in SOS.
	
	\begin{definition}[Generalized Motzkin Polynomial]
		Let $\struc{\mathbf{e}} := \sum_{j = 1}^{n} \mathbf{e}_j$. For every $n \in \N$ with $n \geq 2$ we define the \struc{\textit{Generalized Motzkin Polynomial}} as
		\begin{align*}
		\struc{M_{n}} \ := \ 1 + \sum_{j=1}^n \mathbf{x}^{2(\mathbf{e} + \mathbf{e}_j)} - (n+1)\mathbf{x}^{\mathbf{e}}.
		\end{align*}
	\end{definition}
	
	Note the $M_2$ is the usual \textit{\struc{Motzkin polynomial}} $1 + x_1^2x_2^4 + x_1^4x_2^2 - 3x_1^2x_2^2$. The support of $M_2$ and $M_3$ is depicted together with their Newton polytopes in \cref{fig:GeneralizedMotzkin}.
	\begin{figure}
	\ifpictures
	\includegraphics[width=0.25\linewidth]{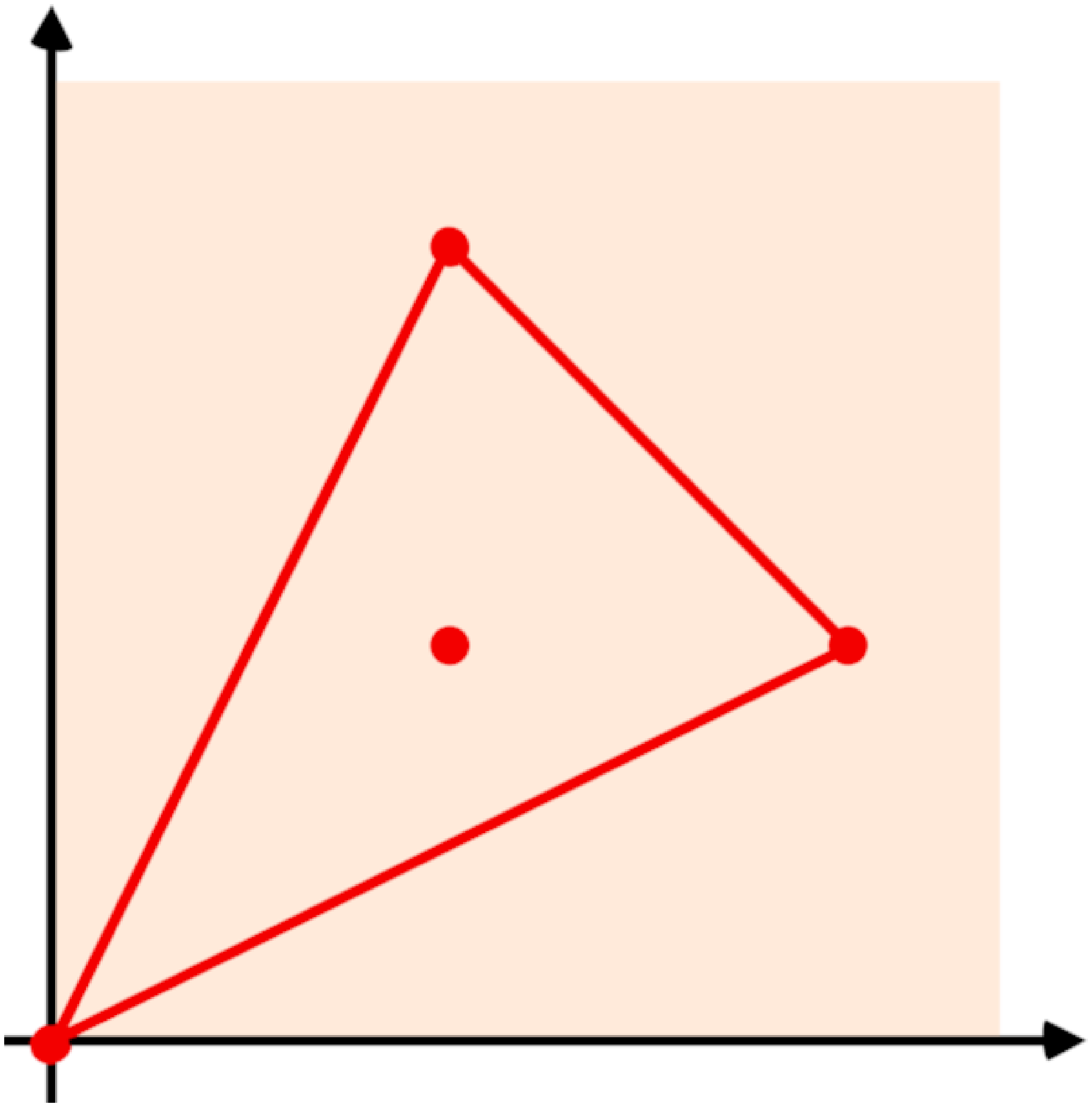} \qquad \qquad
	\includegraphics[width=0.35\linewidth]{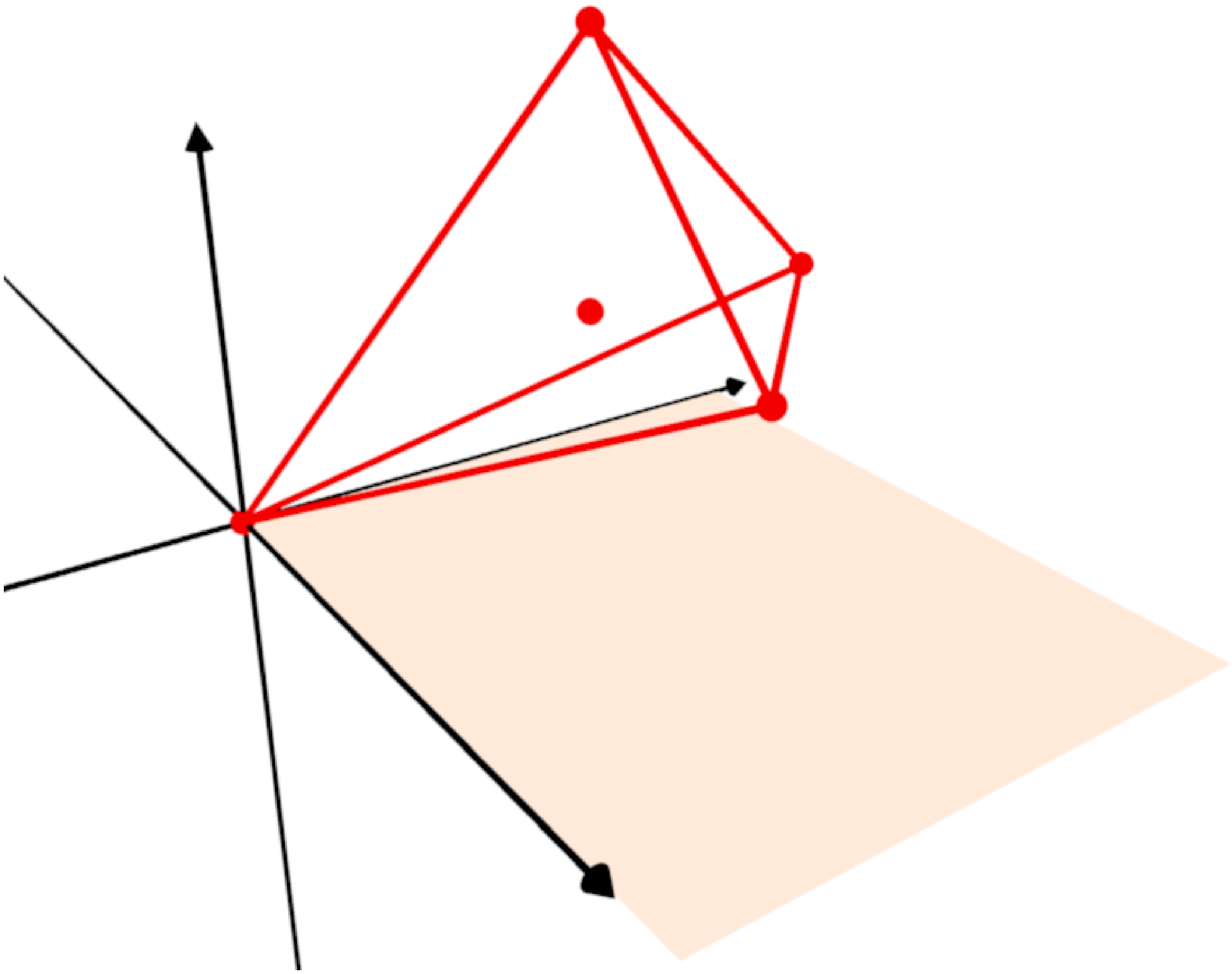}
	\fi
	\caption{The support and the Newton polytopes of $M_2$ and $M_3$.}
	\label{fig:GeneralizedMotzkin}
\end{figure}

	\begin{proposition}
		For every $n \in \N_{\geq 2}$ we have $M_n \in \sSONC$, but $M_n \notin \sSOS$, and moreover $\mathcal{V}(M_n) = \{-1,1\}^n$.
		\label{prop:GeneralizedMotzkin}
	\end{proposition}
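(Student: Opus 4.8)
The plan is to recognize $M_n$ as a single nonnegative circuit polynomial and then read both its zero set and its failure to be SOS directly off the circuit data. Throughout write $\mathbf{v}_j := 2(\mathbf{e}+\mathbf{e}_j)$ for $j \in [n]$, so that the support of $M_n$ consists of the even points $\mathbf{0}, \mathbf{v}_1, \dots, \mathbf{v}_n$ together with the inner exponent $2\mathbf{e}$ carrying the coefficient $-(n+1)$ (for $n=2$ this inner term is $x_1^2 x_2^2$, matching the Motzkin polynomial). First I would check that $\New(M_n)$ is a simplex with even vertices and that $2\mathbf{e}$ is its barycenter, so that $M_n$ is a circuit polynomial in the sense of \cref{Def:CircuitPolynomial}. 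Affine independence of $\mathbf{0}, \mathbf{v}_1, \dots, \mathbf{v}_n$ follows because the matrix with rows $\mathbf{v}_j$ equals $2(J+I)$, with $J$ the all-ones matrix, which is invertible since $\det(J+I) = n+1$. Because $\sum_{j=1}^n \tfrac{1}{n+1}\mathbf{v}_j = 2\mathbf{e}$, the inner exponent lies in the strict interior with barycentric coordinates $\lambda_0 = \dots = \lambda_n = \tfrac{1}{n+1}$. As all vertex coefficients equal $1$, the circuit number is $\Theta_{M_n} = \prod_{j=0}^n (n+1)^{1/(n+1)} = n+1$; since $2\mathbf{e} \in (2\N)^n$ and $-(n+1) = -\Theta_{M_n}$, \cref{Thm:CircuitPolynomialNonnegativity} yields that $M_n$ is a nonnegative circuit polynomial, so in particular $M_n \in \sSONC$.

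For the zero set, substituting any $\mathbf{x} \in \{-1,1\}^n$ makes every (even) monomial equal $1$, whence $M_n(\mathbf{x}) = 1 + n - (n+1) = 0$; thus $\{-1,1\}^n \subseteq \cV(M_n)$. Since a nonnegative circuit polynomial has at most $2^n$ distinct real zeros by \cite[Corollary 3.9]{Iliman:deWolff:Circuits} and $|\{-1,1\}^n| = 2^n$, this inclusion is an equality, giving $\cV(M_n) = \{-1,1\}^n$.

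The remaining and most delicate claim is $M_n \notin \sSOS$. By \cref{Theorem:NNCircuitPolyIsSOS}, $M_n$ is SOS iff it is a sum of monomial squares or $2\mathbf{e} \in \New(M_n)^*$. The former fails because the monomial $\mathbf{x}^{2\mathbf{e}}$ occurs with a negative coefficient, so I must show $2\mathbf{e} \notin \New(M_n)^*$. The key computation is to enumerate the even lattice points of $\New(M_n)$: writing such a point as $2\mathbf{s}$ with $\mathbf{s} \in \N^n$ and using the barycentric coordinates $t_j = \tfrac{y_j}{2} - T$, $T = \tfrac{\sum_i y_i}{2(n+1)}$, membership in the simplex reduces to $s_i \geq \tfrac{\sum_k s_k}{n+1}$ for all $i$ together with $\sum_k s_k \leq n+1$. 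The only integer solutions are $\mathbf{s} \in \{\mathbf{0}, \mathbf{e}\} \cup \{\mathbf{e}+\mathbf{e}_j : j \in [n]\}$, i.e. the even lattice points are exactly $\mathbf{0}, 2\mathbf{e}, \mathbf{v}_1, \dots, \mathbf{v}_n$. The pairwise sums of distinct such points are $2\mathbf{e}$, $2\mathbf{e}+2\mathbf{e}_j$, $4\mathbf{e}+2\mathbf{e}_j$ and $4\mathbf{e}+2\mathbf{e}_i+2\mathbf{e}_j$, none of which equals $4\mathbf{e}$; hence $2\mathbf{e}$ is not the midpoint of two distinct even lattice points. Moreover none of these midpoints is itself even, so the set of even points cannot grow and the mediation cannot reach $2\mathbf{e}$ by iteration either. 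By \cref{Definition:MaximalMediatedSets} this gives $2\mathbf{e} \notin \New(M_n)^*$, and therefore $M_n \notin \sSOS$.

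I expect the third paragraph to be the main obstacle: the whole argument hinges on the precise enumeration of the even lattice points of $\New(M_n)$ and on verifying that the barycenter $2\mathbf{e}$ is inaccessible as a midpoint of two of them (and remains so under iterated mediation). By contrast, the SONC membership and the determination of the zero set are essentially immediate once the circuit and barycenter structure has been identified.
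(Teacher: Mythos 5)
Your proposal is correct and follows essentially the same route as the paper's proof: circuit-number computation via \cref{Thm:CircuitPolynomialNonnegativity} for SONC membership, evaluation plus \cite[Corollary 3.9]{Iliman:deWolff:Circuits} for the zero set, and \cref{Theorem:NNCircuitPolyIsSOS} combined with the enumeration $\New(M_n)\cap(2\Z)^n=\vertices{\New(M_n)}\cup\{2\mathbf{e}\}$ for the SOS exclusion. The only cosmetic differences are that you rule out $2\mathbf{e}$ as a midpoint by explicitly listing all pairwise sums (the paper instead invokes uniqueness of the barycentric representation in a simplex) and you stop at $2\mathbf{e}\notin\New(M_n)^*$, whereas the paper goes on to conclude that $\New(M_n)$ is an $M$-simplex via \cite[Theorem 2.5]{Reznick:AGI}.
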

	
	Note that $M_n$ not being $\sSOS$ for $n \geq 2$ was in fact already shown by Motzkin (not only $n = 2$); see \cite{Motzkin67} and see also \cite[Section 6]{Reznick:AGI}. Furthermore, $\New(M_n)^*$ being an $M$-simplex, which implies $M_n$ not being $\sSOS$ was shown by Reznick in \cite[Theorem 6.9]{Reznick:AGI}. We provide an own proof of all these facts here for convenience of the reader.
	
	
	\begin{proof}
		Let $n \geq 2$. First, we show that $M_n$ is a nonnegative circuit polynomial which vanishes exactly on the Boolean hypercube $\{-1,1\}^n$.
		
		According to \cref{Def:CircuitPolynomial} $M_n$ is a circuit polynomial with inner term $-(n+1)\mathbf{x}^{2\mathbf{e}}$. A straightforward computation yields for the barycentric coordinates $\lam_j = 1/(n+1)$ for every $j= 0,\ldots,n$ and the circuit number satisfies $\Theta_{M_n} = n+1$. Thus, $M_n$ is nonnegative by \cref{Thm:CircuitPolynomialNonnegativity}. As mentioned before, every nonnegative circuit polynomial has at most one zero on every orthant \cite[Corollary 3.9]{Iliman:deWolff:Circuits}. Moreover, an evaluation shows that $M_n(\mathbf{x}) = 0$ for every $\mathbf{x} \in \{-1,1\}^n$. Thus, $\mathcal{V}(M_n) = \{-1,1\}^n$.
		
		Second, we show that $M_n$ is not a sum of squares. Since $M_n$ is a nonnegative circuit polynomial, which is not a sum of monomial squares, this is equivalent to the fact that the lattice point $2\mathbf{e}$ does not belong to the maximal mediated set $\New(M_n)^*$ by \cref{Theorem:NNCircuitPolyIsSOS}. Here, we show more generally that $\New(M_n)$ is even an $M$-simplex, which implies $2\mathbf{e} \notin \New(M_n)^*$. We observe:
		\begin{align}
		\New(M_n) \cap (2\Z)^n = \vertices{\New(M_n)} \cup \{2\mathbf{e}\}. \label{equ:ProofGeneralizedMotzkin1}
		\end{align}
		This follows from 
		\begin{align}
		\sum_{\mathbf{v} \in \vertices{\New(M_n))}} \frac{1}{n+1} \mathbf{v} = 2\mathbf{e}
		\label{equ:ProofGeneralizedMotzkin2}
		\end{align}
		and the fact that every lattice point $\mathbf{w} \in (2\Z^n)$ with $||\mathbf{w} - 2\mathbf{e}||_1 = 2$ satisfies either $\mathbf{w} \in \vertices{\New(M_n)}$, or $\mathbf{w} \notin \New(M_n)$ due to $w_i = 0$, $w_j \neq 0$ for some $i,j \in [n]$ with $i \neq j$.
		
		We have $2\mathbf{e} \notin \New(M_n)^*$. By definition, every point in $\New(M_n)^*$ is the midpoint of two distinct points in $\New(M_n)^* \cap (2\Z)^n$. This is impossible due to \eqref{equ:ProofGeneralizedMotzkin1}, \eqref{equ:ProofGeneralizedMotzkin2}, and the fact that the convex combination in \eqref{equ:ProofGeneralizedMotzkin2} is unique since $\New(M_n)$ is a simplex.
		Thus, $\New(M_n)^* \cap (2\Z)^n = \vertices{M_n}$ and $\New(M_n)$ is an $M$-simplex by \cite[Theorem 2.5]{Reznick:AGI}.
	\end{proof}

	\begin{corollary}
		For every $n \in \N$ with $n \geq 2$ and every $t \geq n+1$ there exist infinitely many systems $\mathcal{G}$ such that $ \HierPut{\sSONC}{\constraints}{n+1} \not\subset \HierSch{\sSOS}{\constraints}{t}$. 
		\label{corollary:SONCCanBeArbitrarilyBetterThanSOS}
	\end{corollary}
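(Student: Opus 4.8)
The plan is to prove this statement as the exact mirror of \cref{corollary:SOSCanBeArbitrarilyBetterThanSONC}, interchanging the two proof systems and replacing the signed quadric $N_n$ by the generalized Motzkin polynomial $M_n$. The three properties of $M_n$ that drive the whole argument are supplied by \cref{prop:GeneralizedMotzkin}: $M_n \in \sSONC$, $M_n \notin \sSOS$, and $\cV(M_n) = \{-1,1\}^n$; additionally $\deg(M_n) = 2(n+1)$. I would fix $n \geq 2$ and consider the CPOP $\min M_n$ subject to $g_1,\ldots,g_s \geq 0$, where $\constraints = \{1,g_1,\ldots,g_s\}$ is chosen so that every $g_i$ with $i \geq 1$ has degree strictly larger than $t$, the feasibility set $\feasible$ is compact, and $\{-1,1\}^n = \cV(M_n) \subseteq \feasible$. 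A single constraint of the form $g_1 := R - \sum_{j=1}^n x_j^{2k}$ with $R > n$ and $2k \geq t+1$ achieves all of this at once: the set $\{\,\Vector{x} : \sum_j x_j^{2k} \leq R\,\}$ is compact, it contains $\{-1,1\}^n$ since there $\sum_j x_j^{2k} = n < R$, and its defining constraint has degree $2k > t$. Letting $R$ (or $k$) range over infinitely many admissible values yields the infinitely many systems demanded by the statement, for every admissible bound $t \geq n+1$.

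Since $M_n \geq 0$ on all of $\R^n$ and vanishes on $\{-1,1\}^n \subseteq \feasible$, the optimal value of the CPOP equals $0$, so the polynomial to certify is $M_n$ itself. The SONC side is then immediate: by \cref{prop:GeneralizedMotzkin} the polynomial $M_n$ is a nonnegative circuit polynomial, hence $M_n \in \sSONC$, and writing $M_n = M_n \cdot g_0$ with $g_0 = 1$ exhibits $M_n \in \HierPut{\sSONC}{\constraints}{2(n+1)}$, i.e., a SONC certificate of degree $2(n+1)$.

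The substance of the proof is to show $M_n \notin \HierSch{\sSOS}{\constraints}{t}$, which I would do by contradiction. While the Schm\"udgen-type Positivstellensatz \cref{thm:Schmudgen} guarantees that \emph{some} SOS$^*$ certificate $M_n \in \HierSch{\sSOS}{\constraints}{2d}$ exists (compactness of $\feasible$ makes $\preprime{\constraints}$ Archimedean), I claim none can have degree at most $t$. Indeed, suppose $M_n = \sum_i s_i \prod_j g_j^{k_{ij}}$ with $s_i \in \sSOS$ had degree at most $t$. Any summand in which some $g_j$ with $j \geq 1$ appears to a positive power has degree at least $\deg(g_j) > t$, so it cannot occur within the degree bound; therefore only products built from $g_0 = 1$ survive, and the representation collapses to $M_n = s_0$ for a single $s_0 \in \sSOS$. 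This contradicts $M_n \notin \sSOS$ from \cref{prop:GeneralizedMotzkin}. Hence no SOS$^*$ certificate of degree at most $t$ exists, and combining this with the SONC certificate above gives $\HierPut{\sSONC}{\constraints}{2(n+1)} \not\subset \HierSch{\sSOS}{\constraints}{t}$.

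The one point that needs care — and the only place the argument could go wrong — is the interaction between compactness and the degree restriction on $\constraints$. The reflex for enforcing compactness is to append box constraints $x_i \pm N \geq 0$, but these have degree $1$, and an SOS$^*$ certificate could legitimately multiply them by sum of squares to represent $M_n$ well within the bound $t$, wrecking the degree count. Realizing compactness instead through the single high-degree constraint above closes this loophole, so that every nontrivial product in the preprime automatically overshoots $t$. With that arranged, the remainder is a direct transcription of the proof of \cref{corollary:SOSCanBeArbitrarilyBetterThanSONC}, and as $t$ grows the gap between the fixed SONC degree $2(n+1)$ and the SOS degree $t$ at which certification still fails becomes arbitrarily large.
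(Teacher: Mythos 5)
Your proposal is correct and follows essentially the same route as the paper's own proof: exhibit the CPOP $\min M_n$ over a compact feasible set containing $\{-1,1\}^n$ whose constraints all have degree exceeding $t$, certify $M_n$ on the SONC side by $M_n$ alone (via \cref{prop:GeneralizedMotzkin}), and observe that any SOS$^*$ certificate of degree at most $t$ cannot involve any constraint and would therefore force $M_n \in \sSOS$, a contradiction. Your explicit choice $g_1 := R - \sum_{j=1}^n x_j^{2k}$ is a useful refinement of a detail the paper leaves implicit (compactness enforced without low-degree box constraints, which would indeed wreck the degree count), and your bookkeeping $\deg(M_n) = 2(n+1)$ is the honest degree where the paper's statement and proof write $n+1$.
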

	
	\begin{proof}
		Let $n \in \N_{\geq 2}$ be fixed. Consider a system 
		\begin{align}
		\min M_n \ \text{ such that } \ g_1,\ldots,g_s \geq 0
		\label{equ:SONCFriendlyCPOP}
		\end{align}
		where $M_n$ is the generalized Motzkin polynomial and $\constraints = \{g_1,\ldots,g_s\}$ is a system of polynomials such that $\min_{i \in [s]}\{\deg(g_i)\} \geq t+1$, $\{\pm 1\}^n \subseteq \feasible$ and $\feasible$ compact.
		
		On the one hand, there exists a SONC certificate of degree $n+1$ for the system  \cref{equ:SONCFriendlyCPOP} given by $M_n$ alone as $M_n$ is already a SONC by \cref{prop:GeneralizedMotzkin} and moreover $\min_{\mathbf{x} \in \R^n} M_n = \min_{\mathbf{x} \in \feasible} M_n$.
		
		On the other hand, due to \cref{thm:intro_Putinar}, there exists an SOS certificate of the form $M_n \in \HierSch{\sSOS}{\constraints}{d}$ for some $d \in \mathbb{N}_+$. But since $M_n$ is not a SOS due to \cref{prop:GeneralizedMotzkin}, the certificate has to necessarily involve at least one constraint from $\constraints$. Thus $d>t$.
	\end{proof}
	
	\begin{corollary}
		\label{cor:sonc_vs_sos}
		The pairs of systems $SONC$ and $SOS$; $SONC$ and $SOS^{~*}$; $SONC^{~*}$ and $SOS$; $SONC^{~*}$ and $SOS^{~*}$ are polynomially incomparable.
	\end{corollary}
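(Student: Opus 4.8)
The plan is to obtain all four incomparabilities as an immediate consequence of the two preceding corollaries, \cref{corollary:SOSCanBeArbitrarilyBetterThanSONC} and \cref{corollary:SONCCanBeArbitrarilyBetterThanSOS}, combined with the elementary degree-preserving inclusions $\HierPut{\sGEN}{\constraints}{2d} \subseteq \HierSch{\sGEN}{\constraints}{2d}$, which hold for any ground set $\sGEN$ because each constraint $g_i$ is itself a single-factor element of $\preprime{\constraints}$. These inclusions say precisely that SOS$^{~*}$ contains SOS and SONC$^{~*}$ contains SONC with no loss in degree, so a single ``hard'' family will automatically defeat the weaker Putinar-style system together with the stronger Schm\"udgen-style one. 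Unfolding \cref{Def:PolynomialEquivalent}, to prove that a pair $(P,Q)$ is polynomially incomparable I must exhibit, for every constant $c \in \N$, one system/polynomial pair carrying a degree-$d$ certificate in $Q$ but no degree-$cd$ certificate in $P$, and symmetrically one pair with the roles reversed; letting $c$ range over $\N$ then rules out the hidden constant in the $O(d)$ of either containment.

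The first witness is the signed quadric $N_n$. By \cref{corollary:SOSCanBeArbitrarilyBetterThanSONC}, for any prescribed $t$ there is a system on which $N_n$ has an SOS certificate of degree $2$ yet $N_n \notin \HierSch{\sSONC}{\constraints}{2t}$, hence by the inclusion above $N_n \notin \HierPut{\sSONC}{\constraints}{2t}$ as well. Fixing $n = 2$ and taking $t = c$ shows that $N_n$ admits an SOS (and therefore SOS$^{~*}$) certificate of the constant degree $2$ but no SONC or SONC$^{~*}$ certificate of degree at most $2c$. Since this holds for every $c$, neither SONC nor SONC$^{~*}$ contains SOS or SOS$^{~*}$.

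The second witness is the generalized Motzkin polynomial $M_n$. By \cref{corollary:SONCCanBeArbitrarilyBetterThanSOS}, for any $t \geq n+1$ there is a system on which $M_n$ has a SONC certificate of degree $n+1$ (hence a SONC$^{~*}$ certificate of the same degree) yet $M_n \notin \HierSch{\sSOS}{\constraints}{t}$, and thus $M_n \notin \HierPut{\sSOS}{\constraints}{t}$. Fixing $n$ and taking $t = c(n+1)$ shows that $M_n$ admits a SONC (and SONC$^{~*}$) certificate of degree $n+1$ but no SOS or SOS$^{~*}$ certificate of degree at most $c(n+1)$. Letting $c$ range over $\N$ proves that neither SOS nor SOS$^{~*}$ contains SONC or SONC$^{~*}$. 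Assembling the two halves yields that each of the four listed pairs is polynomially incomparable.

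I do not expect a genuine obstacle here, since all the mathematical content already resides in the two corollaries, which in turn rest on \cref{lemma:Squared1NormPolynomialIsNotSONC} and \cref{prop:GeneralizedMotzkin}. The only point demanding care is the bookkeeping of the starred versus unstarred hierarchies: one must record each ``easy'' certificate in the weaker (Putinar) system and each ``hard'' non-membership in the stronger (Schm\"udgen) system, so that the inclusions $\HierPut{\sGEN}{\constraints}{2d} \subseteq \HierSch{\sGEN}{\constraints}{2d}$ propagate both facts across all four pairs simultaneously.
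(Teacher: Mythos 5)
Your proposal is correct and matches the paper's proof, which simply observes that the corollary follows immediately from \cref{corollary:SOSCanBeArbitrarilyBetterThanSONC} and \cref{corollary:SONCCanBeArbitrarilyBetterThanSOS}. Your write-up merely makes explicit the degree bookkeeping the paper leaves implicit, namely that the easy certificates live in the Putinar-style hierarchies while the non-memberships hold in the Schm\"udgen-style hierarchies, so the inclusion $\HierPut{\sGEN}{\constraints}{2d} \subseteq \HierSch{\sGEN}{\constraints}{2d}$ transfers both facts to all four pairs.
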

	
	\begin{proof}
		Follows immediately from \cref{corollary:SOSCanBeArbitrarilyBetterThanSONC} and \cref{corollary:SONCCanBeArbitrarilyBetterThanSOS}.
	\end{proof}

	
	\section{SDSOS vs SONC}
	\label{Sec:SDSOSvsSONC}

	In this section, we show that for constrained polynomial optimization problems CPOPs, SONC strictly contains SDSOS. The same relations holds for the SONC$^*$ and for SDSOS$^*$ algorithms.
	
	\subsection{SDSOS is SONC}
	We start with proving that every SDSOS polynomial of degree $d$ is also a SONC polynomial of degree $d$.
	
	\begin{lemma}
		\label{lem:f_sdoss_is_sonc}
		Every scaled diagonally dominant polynomial is a circuit polynomial.
	\end{lemma}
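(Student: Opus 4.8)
The plan is to reduce the statement to its atomic building blocks and certify each of them as a nonnegative circuit polynomial via \cref{Thm:CircuitPolynomialNonnegativity}. By \cref{def:intro_sdsos} together with \cref{lem:dd_matrices}, a scaled diagonally dominant polynomial $s$ of degree at most $d$ admits the normal form derived in \cref{sec:sdd_polynomials}, namely $s(\Vector{x}) = \sum_k \left( a_k \Vector{x}^{\Vector{u}_k} + c_k \Vector{x}^{\Vector{v}_k} \right)^2$ with $a_k, c_k \in \R$ and $\Vector{u}_k, \Vector{v}_k \in \N^n$, i.e.\ $s$ is a sum of squares of binomials in two monomials. So it suffices to analyze a single binomial square $b(\Vector{x}) := (a \Vector{x}^{\Vector{u}} + c\, \Vector{x}^{\Vector{v}})^2$ and show it is a nonnegative circuit polynomial; summing the $b_k$ then yields $s \in \sSONC$, which is exactly the form feeding into the containment $\sSDSOS \subseteq \sSONC$.

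First I would expand $b(\Vector{x}) = a^2 \Vector{x}^{2\Vector{u}} + 2ac\, \Vector{x}^{\Vector{u}+\Vector{v}} + c^2 \Vector{x}^{2\Vector{v}}$, so its support is the three exponents $2\Vector{u}$, $\Vector{u}+\Vector{v}$, $2\Vector{v}$. Assuming $\Vector{u} \neq \Vector{v}$ and $a,c \neq 0$, the Newton polytope $\New(b)$ is the segment joining $2\Vector{u}$ and $2\Vector{v}$: a $1$-dimensional simplex whose vertices $2\Vector{u}, 2\Vector{v} \in (2\N)^n$ are even, while the integral point $\Vector{u}+\Vector{v}$ is their midpoint and hence lies in the relative interior. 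Thus $b$ has precisely the shape required by \cref{Def:CircuitPolynomial}, with $r = 1 \leq n$, strictly positive vertex coefficients $a^2, c^2$, and inner coefficient $f_{\Vector{\beta}} = 2ac$ at $\Vector{\beta} = \Vector{u}+\Vector{v}$.

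It then remains to confirm nonnegativity through the circuit number. The barycentric coordinates of the midpoint $\Vector{u}+\Vector{v}$ relative to the vertices $2\Vector{u}, 2\Vector{v}$ are $\lambda_0 = \lambda_1 = \tfrac{1}{2}$, so
\begin{align*}
\Theta \ = \ \left(\frac{a^2}{1/2}\right)^{1/2}\left(\frac{c^2}{1/2}\right)^{1/2} \ = \ \sqrt{2a^2}\,\sqrt{2c^2} \ = \ 2|ac| \ = \ |f_{\Vector{\beta}}|.
\end{align*}
Hence $|f_{\Vector{\beta}}| \leq \Theta$ holds with equality, and \cref{Thm:CircuitPolynomialNonnegativity} certifies $b$ as a nonnegative circuit polynomial, whether or not $\Vector{u}+\Vector{v}$ is even (consistent with $b$ being literally a square).

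Finally I would dispatch the degenerate configurations: if $\Vector{u} = \Vector{v}$, or $a = 0$, or $c = 0$, then $b$ collapses to a single monomial square $(a+c)^2\Vector{x}^{2\Vector{u}}$ (or $a^2\Vector{x}^{2\Vector{u}}$, etc.), which is a degenerate nonnegative circuit polynomial. I do not expect a substantive obstacle here: the only point demanding care is checking that the three exponents genuinely form a circuit — even vertices, an integral interior midpoint, and a simplicial Newton polytope — all of which are immediate since $\Vector{u}, \Vector{v}$ are integer vectors. The one conceptual subtlety is that a general scaled diagonally dominant polynomial is a \emph{sum} of such binomial squares rather than a single circuit polynomial, so the reduction to the atomic binomial square is the step that actually carries the content, the rest being definitional verification.
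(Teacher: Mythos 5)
Your proposal is correct and follows essentially the same route as the paper: decompose the SDSOS polynomial into binomial squares, observe that each binomial square is a circuit polynomial supported on a one-dimensional simplex with even vertices $2\Vector{u}, 2\Vector{v}$ and midpoint $\Vector{u}+\Vector{v}$, and certify nonnegativity via the circuit number $\Theta = 2|ac| = |f_{\boldsymbol{\beta}}|$ using \cref{Thm:CircuitPolynomialNonnegativity}. Your treatment is in fact slightly more careful than the paper's, since you write the circuit number with the absolute value $2|ac|$ and explicitly dispatch the degenerate cases where the binomial collapses to a monomial square.
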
	
	\begin{proof}
		By~\cite{AhmadiM14} we know that every scaled diagonally dominant polynomial $s$ can be written as a sum of binomial squares, i.e., of the form $s(\Vector{x})=\sum_{k} s_k(\Vector{x})$, for $s_k(\Vector{x})= ( a_k p_k(\Vector{x}) + b_k q_k(\Vector{x}))^2$, where $p,q$ are monomials and $a_k,~b_k \in \mathbb{R}$. Thus, $s(\Vector{x})=\sum_{k} ( a_k p_k(\Vector{x}))^2 + (b_k q_k(\Vector{x}))^2 -2 a_k p_k(\Vector{x})b_k q_k(\Vector{x}) $, where $a_k^2,~b_k^2 \in \mathbb{R}_{\geq 0}$ for every index $k$ in the summation. Moreover, by \cref{Def:CircuitPolynomial}, for every $k$, $\New(s_k)$ is a one dimensional simplex with two even vertices $\boldsymbol{\alp}(0), \boldsymbol{\alp}(1)$, given by the exponents of the squared monomials, and the exponent $\boldsymbol{\beta}$ of the term $-2 a_k p_k(\Vector{x})b_k q_k(\Vector{x})$ is in the strict interior of $\New(f)$ since $\boldsymbol{\beta}=1/2~\boldsymbol{\alp}(0)+1/2~\boldsymbol{\alp}(1)$. Finally, the circuit number $\Theta_{s_k}$ is equal to $\left(\frac{f_{\boldsymbol{\alp}(0)}}{\lambda_0}\right)^{\lambda_0}\cdot \left(\frac{f_{\boldsymbol{\alp}(1)}}{\lambda_1}\right)^{\lambda_1} = 2a_kb_k$, thus by \cref{Thm:CircuitPolynomialNonnegativity}, $s_k$ is a nonnegative circuit polynomial.
	\end{proof}
	
	We note that a similar statement to~\cref{lem:f_sdoss_is_sonc} was very recently, independently observed in \cite{Chandrasekaran:Murray:Wierman}.
	
	\begin{corollary}
		For every $d \in \N^*$ we have $\BSDSOS{f}{\constraints}{2d} \leq \BSONC{f}{\constraints}{2d}$ and $\BSDSOS{f}{\preprime{\constraints} }{2d} \leq \BSONC{f}{\preprime{\constraints}}{2d}$.
		\label{cor:SONCcontainsSDSOS}
	\end{corollary}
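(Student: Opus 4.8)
The plan is to upgrade the pointwise statement of \cref{lem:f_sdoss_is_sonc} into an inclusion of ground sets, and then to exploit the monotonicity of the optimization programs under enlarging the ground set. First I would record that the proof of \cref{lem:f_sdoss_is_sonc} in fact establishes the inclusion $\sSDSOS \subseteq \sSONC$ as subsets of $\R[\Vector{x}]$: an arbitrary scaled diagonally dominant polynomial decomposes as a finite sum of binomial squares $s(\Vector{x}) = \sum_k (a_k p_k(\Vector{x}) + b_k q_k(\Vector{x}))^2$, and the proof shows that each summand is a nonnegative circuit polynomial, so $s$ is a sum of nonnegative circuit polynomials and hence lies in $\sSONC$. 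The key point to stress is that this rewriting leaves the polynomial itself untouched, so the degree is not altered; the inclusion $\sSDSOS \subseteq \sSONC$ therefore respects degrees.

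Next I would transfer this inclusion to the degree-$2d$ hierarchies. Fix $d \in \N^*$ and take any $\sum_i s_i g_i \in \HierPut{\sSDSOS}{\constraints}{2d}$, where $s_i \in \sSDSOS$, $g_i \in \constraints$, and $\deg(s_i g_i) \leq 2d$. By the degree-preserving inclusion from the previous step, every $s_i$ lies in $\sSONC$ with unchanged degree, so the same sum witnesses membership $\sum_i s_i g_i \in \HierPut{\sSONC}{\constraints}{2d}$. This gives $\HierPut{\sSDSOS}{\constraints}{2d} \subseteq \HierPut{\sSONC}{\constraints}{2d}$, and the identical argument with $\preprime{\constraints}$ in place of $\constraints$ yields $\HierSch{\sSDSOS}{\constraints}{2d} \subseteq \HierSch{\sSONC}{\constraints}{2d}$.

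Finally I would conclude by monotonicity of the defining programs. Recall that $\BSDSOS{f}{\constraints}{2d} = \max\{\lambda \in \R \mid f - \lambda \in \HierPut{\sSDSOS}{\constraints}{2d}\}$ and analogously for $\sSONC$. The admissible set of $\lambda$ for the SDSOS program is contained in that for the SONC program by the hierarchy inclusion above, and maximizing the fixed objective $\lambda$ over a smaller set can only decrease the optimum; hence $\BSDSOS{f}{\constraints}{2d} \leq \BSONC{f}{\constraints}{2d}$. Replacing $\HierPut{\cdot}{\constraints}{2d}$ by $\HierSch{\cdot}{\constraints}{2d}$ throughout gives the corresponding inequality $\BSDSOS{f}{\preprime{\constraints}}{2d} \leq \BSONC{f}{\preprime{\constraints}}{2d}$ for the preprime (Schm\"udgen-type) versions.

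I do not anticipate a genuine obstacle here, as the corollary is a direct consequence of \cref{lem:f_sdoss_is_sonc}. The only point requiring a word of care is the mild mismatch between the literal wording of \cref{lem:f_sdoss_is_sonc}, which speaks of a single circuit polynomial, and what is actually needed, namely membership of every scaled diagonally dominant polynomial in the cone $\sSONC$ of \emph{sums} of nonnegative circuit polynomials; the proof of the lemma supplies exactly this, so the argument goes through verbatim.
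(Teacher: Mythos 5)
Your proposal is correct and follows essentially the same route as the paper: decompose the certificate $f-\lambda=\sum_i s_i g_i$, invoke \cref{lem:f_sdoss_is_sonc} to replace each SDSOS multiplier by a sum of nonnegative circuit polynomials of the same degree, and conclude by monotonicity of the maximization over the larger hierarchy. Your additional remark about the wording of \cref{lem:f_sdoss_is_sonc} (single circuit polynomial versus sum of circuits) is a fair observation, but it does not change the substance of the argument, which matches the paper's.
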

	\begin{proof}
		Assume that for some polynomial $f \in \mathbb{R}[\Vector{x}]$ and $\lambda \in \mathbb{R}$ there exists a $d \in \N^*$ such that $f-\lambda \in \HierPut{\sSDSOS}{\constraints}{2d}$. We show that necessarily it has to hold $f-\lambda \in \HierPut{\sSONC}{\constraints}{2d}$.		
		Since $f -\lambda \in \HierPut{\sSDSOS}{\constraints}{2d}$ there necessarily exist SDSOS polynomials $s_i$ such that $f-\lambda=\sum_{i=0}^m  s_i g_i$ for $g_i \in \mathcal{G}$ and $\deg(s_i g_i) \leq 2d$ for every $i \in [m]$. Moreover, by \cref{lem:f_sdoss_is_sonc} every SDSOS polynomial is a SONC polynomial, thus we have $f-\lambda=\sum_{i=0}^m  s_i g_i \in \HierPut{\sSONC}{\constraints}{2d}$. The proof works analogously for the second inequality.
	\end{proof}
	
	It remains to show that the for every $n$ there exist CPOPs such that the ratio of the minimal degrees of a SDSOS and a SONC is not bounded by a constant. 
	

	\begin{corollary}
		\label{cor:sdsos_vs_sonc}
		SONC proof system strictly contains SDSOS proof system and the same relation holds for SONC$^{~*}$ and SDSOS$^{~*}$. 
	\end{corollary}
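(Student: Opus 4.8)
The plan is to combine the two halves already assembled in this section into the definition of strict containment. For the \emph{containment} direction, I would invoke \cref{cor:SONCcontainsSDSOS}: for every system $\constraints$, every polynomial $f$, and every $d \in \N^*$, the inequality $\BSDSOS{f}{\constraints}{2d} \leq \BSONC{f}{\constraints}{2d}$ shows that any degree-$2d$ SDSOS certificate of nonnegativity is itself a degree-$2d$ SONC certificate (since by \cref{lem:f_sdoss_is_sonc} each SDSOS summand $s_i$ is already a SONC polynomial, the whole decomposition $\sum_i s_i g_i$ lies in $\HierPut{\sSONC}{\constraints}{2d}$ verbatim). Hence a degree-$d$ certificate in SDSOS is a degree-$d$ (in particular degree-$O(d)$) certificate in SONC, so SONC contains SDSOS, and the identical argument with $\preprime{\constraints}$ in place of $\constraints$ gives that SONC$^*$ contains SDSOS$^*$.

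For the \emph{strictness} direction I need a witness family showing SDSOS does \emph{not} contain SONC: a sequence of CPOPs admitting small-degree SONC certificates but only large-degree SDSOS certificates. The natural candidate is the generalized Motzkin family $M_n$ from \cref{prop:GeneralizedMotzkin}, exactly as $M_n$ was used in \cref{corollary:SONCCanBeArbitrarilyBetterThanSOS} to separate SONC from SOS. The key point is that $M_n$ is a SONC by \cref{prop:GeneralizedMotzkin} but is \emph{not} an SOS; since every SDSOS polynomial is an SOS (as recalled after \cref{def:intro_sdsos} via the Gershgorin bound), $M_n$ is a fortiori not an SDSOS polynomial. I would then mirror the construction of \cref{corollary:SONCCanBeArbitrarilyBetterThanSOS}: for each $n \geq 2$ and each target degree $t$, take constraints $\constraints = \{g_1,\dots,g_s\}$ with $\min_i \deg(g_i)$ large, $\{\pm 1\}^n \subseteq \feasible$, and $\feasible$ compact. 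Then $M_n$ gives a SONC certificate of constant degree $n+1$ using no constraints, whereas any SDSOS$^*$ certificate must incorporate a high-degree constraint (because $M_n$ alone is not SDSOS), forcing its degree above $t$.

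The main obstacle is the last step: I must justify that an SDSOS$^*$ certificate for $M_n$ over $\feasible$ genuinely \emph{requires} a constraint, i.e.\ that $M_n$ cannot be written as a bare SDSOS polynomial of any degree. This is where I need $M_n \notin \sSDSOS$ for \emph{all} degrees, not merely in its own degree, since certificates of higher degree could in principle use higher-degree squares. This follows because $M_n$ is not SOS at all (by \cref{prop:GeneralizedMotzkin}, its Newton polytope is an $M$-simplex so $2\mathbf{e} \notin \New(M_n)^*$), and $\sSDSOS \subseteq \sSOS$ regardless of degree; thus no SDSOS representation exists in any degree. I should also confirm, as in \cref{corollary:SONCCanBeArbitrarilyBetterThanSOS}, that a SONC$^*$/SDSOS$^*$ certificate of \emph{some} finite degree exists at all, which is guaranteed by the Positivstellensatz (\cref{thm:Schmudgen}, applicable to SONC via \cite[Theorem 4.8]{Dressler:Iliman:deWolff:Positivstellensatz}), so that the comparison of minimal degrees is meaningful. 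Combining the containment from \cref{cor:SONCcontainsSDSOS} with this degree-separation witness yields strict containment per \cref{Def:PolynomialEquivalent}, and since $M_n$ is not SDSOS in the preprime setting either, the same conclusion transfers to SONC$^*$ and SDSOS$^*$.
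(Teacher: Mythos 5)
Your proposal is correct and follows essentially the same route as the paper: containment via \cref{cor:SONCcontainsSDSOS} (building on \cref{lem:f_sdoss_is_sonc}), and strictness by combining the hierarchy inclusion $\HierPut{\sSDSOS}{\constraints}{2d} \subseteq \HierPut{\sSOS}{\constraints}{2d}$ (resp.\ the preprime version) with the generalized Motzkin separation of \cref{prop:GeneralizedMotzkin} and the construction of \cref{corollary:SONCCanBeArbitrarilyBetterThanSOS}. The paper states this in one line; you have merely spelled out the same argument, including the key observation that $M_n$ admits no SDSOS representation in any degree because SDSOS polynomials are SOS.
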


	\begin{proof}
		The corollary follows from \cref{cor:SONCcontainsSDSOS} and the fact that $\HierPut{\sSDSOS}{\constraints}{2d} \subseteq \HierPut{\sSOS}{\constraints}{2d}$ ($\HierSch{\sSDSOS}{\constraints}{2d} \subseteq \HierSch{\sSOS}{\constraints}{2d}$) together with \cref{prop:GeneralizedMotzkin}.
	\end{proof}

%
%
	
	As a consequence we show that there exists no equivalent of Putinar's Positivstellensatz for SDSOS algorithm. 
	
	\begin{corollary}
		There exist $f \in \R[\Vector{x}]$ and infinitely many systems of polynomials $\mathcal{G} = \{g_1,\ldots,g_s\} \subset \R[\Vector{x}]$ such that $\mathcal{G}_+$ is Archimedean, $f(\Vector{x}) > 0$ for all $\Vector{x} \in \mathcal{G}_+$ but for all $d \in \N$, $f \notin \HierPut{\sSDSOS}{\constraints}{2d}$.
		\label{Corollary:ThereExistsNoPutinarPositivstellensatzForSDSOS}
	\end{corollary}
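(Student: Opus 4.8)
The plan is to contradict a hypothetical Putinar-type Positivstellensatz by producing a \emph{single} strictly positive instance on which the Putinar--SDSOS cone $\bigcup_{d}\HierPut{\sSDSOS}{\constraints}{2d}$ never captures $f$, and then to manufacture infinitely many such instances by rescaling. For the witness I would take a strictly positive perturbation $f := M_n + \varepsilon$ of the generalized Motzkin polynomial (with $\varepsilon>0$ small) together with constraints $\constraints$ whose feasible set $\feasible$ is compact — hence $\preprime{\constraints}$ Archimedean, this being the compact case underlying \cref{thm:Schmudgen} — and large enough that the Motzkin zeros $\cV(M_n)=\{-1,1\}^n$ lie inside $\feasible$. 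Since $f>0$ everywhere, in particular $f>0$ on $\feasible$, and by \cref{prop:GeneralizedMotzkin} together with \cref{Theorem:NNCircuitPolyIsSOS} the inner exponent $2\mathbf{e}$ of $M_n$ does not lie in the maximal mediated set $\New(M_n)^\ast$ (i.e.\ $\New(M_n)$ is an $M$-simplex); as the perturbation changes only the constant term and not $\New(M_n)$, the same holds for $f$, so $f\notin\sSOS\supseteq\sSDSOS$. The whole difficulty is therefore to upgrade ``$f$ is not itself SDSOS'' to ``$f$ has no Putinar--SDSOS certificate at \emph{any} degree''.

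The mechanism I would exploit is the weakness of the SDSOS cone recorded in \cref{def:intro_sdsos} and \cref{lem:dd_matrices}: an SDSOS polynomial is a sum of \emph{binomial} squares $(a_k\mathbf{x}^{\mu_k}+b_k\mathbf{x}^{\nu_k})^2$, so the only way to obtain the negative coefficient $-(n+1)$ at $\mathbf{x}^{2\mathbf{e}}$ is a cross term with $\mu_k+\nu_k=2\mathbf{e}$, $\mu_k\neq\nu_k$, $a_kb_k<0$. A short inspection of the even lattice points of $\New(M_n)$ (the vertices $\mathbf{0},\,2(\mathbf{e}+\mathbf{e}_j)$ and the center $2\mathbf{e}$) shows that for every admissible pair $\{\mu_k,\nu_k\}$ at least one squared exponent $2\mu_k$ or $2\nu_k$ lies \emph{outside} $\operatorname{supp}(M_n)$, producing a spurious positive monomial that $f$ does not contain. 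If the constraint multipliers are confined to high degree — e.g.\ by pairing a Reznick-type multiplier $\bigl(\sum_j x_j^2\bigr)^L$ with $L\geq n+2>\tfrac12\deg f$, so that $s\,\bigl(\sum_j x_j^2\bigr)^L$ has all terms of degree $\geq 2L>\deg f$ — then modulo such terms the identity $f=\sum_i s_i g_i$ would force $f$ to equal the low-degree truncation of an SDSOS polynomial, and the spurious monomial cannot be absorbed by the constraints.

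The main obstacle is precisely that such a spurious monomial need \emph{not} be a vertex of the Newton polytope of the accumulated SDSOS part: it may be cancelled by \emph{other} cross terms, which spawn further squared exponents, so a term-by-term argument only excludes low-degree certificates and leaves open a cancellation cascade running through arbitrarily high degree (and, for box-type constraints, the constraints themselves can feed this cascade). To close this gap rigorously I would pass to the dual. By the $2\times2$ description of the SDSOS dual in \eqref{eq:intro_PP_8_2}, non-membership $f\notin\bigcup_d\HierPut{\sSDSOS}{\constraints}{2d}$ is equivalent to the existence of one linear pseudo-expectation $\tilde{\mathbb{E}}\colon\R[\mathbf{x}]\to\R$ for which every $2\times2$ principal minor of each localizing form $(p,q)\mapsto\tilde{\mathbb{E}}[pq\,g_i]$ is nonnegative, yet $\tilde{\mathbb{E}}[f]<0$; equivalently, that the hierarchy stalls, $\sup_d\BSDSOS{M_n}{\constraints}{2d}<0=\BoundOptimal{M_n}{\constraints}{\ast}$, after which any $\varepsilon$ between this limit and $0$ furnishes the strictly positive witness. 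The technical heart is constructing $\tilde{\mathbb{E}}$: the $2\times2$ conditions are exactly the pairwise ``circuit-edge'' inequalities such as $\tilde{\mathbb{E}}[\mathbf{x}^{2(\mathbf{e}+\mathbf{e}_j)}]\,\tilde{\mathbb{E}}[\mathbf{x}^{2(\mathbf{e}-\mathbf{e}_j)}]\geq\tilde{\mathbb{E}}[\mathbf{x}^{2\mathbf{e}}]^2$, so making $\tilde{\mathbb{E}}[\mathbf{x}^{2\mathbf{e}}]$ large enough to render $\tilde{\mathbb{E}}[M_n]<0$ only requires boosting a recursive family of lower moments while respecting these pairwise constraints and the analogous ones for the constraints $g_i$. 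This is exactly where SDSOS departs from \cref{thm:intro_Putinar}: the full positive-semidefiniteness of the moment matrix required there would forbid such a functional, whereas imposing only $2\times2$ minors leaves room for it.

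Finally, once one instance $(f,\constraints)$ is in hand, I obtain infinitely many by enlarging the bounding constant (or appending redundant constraints), each choice keeping $\feasible$ compact, $f>0$ on $\feasible$, and the separating pseudo-expectation valid; together with \cref{cor:SONCcontainsSDSOS} — which guarantees that $f$ \emph{does} admit a (finite-degree) SONC, and hence an SDSOS$^{\ast}$, representation — this exhibits the asserted failure of a Putinar-type Positivstellensatz for SDSOS.
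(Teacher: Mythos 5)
Your proposal has the right high\mbox{-}level shape (exhibit a strictly positive $f$ over an Archimedean system that the Putinar--SDSOS cone never captures, then vary the system), but the step that carries the entire proof is missing. You correctly concede that the primal, term-by-term binomial-square argument collapses under cancellation cascades, and you therefore shift the whole burden to the dual: constructing a \emph{single} linear functional $\tilde{\mathbb{E}}$, consistent with \emph{all} $2\times 2$ localizing conditions for \emph{all} constraints $g_i$ and \emph{all} degrees $d$ simultaneously, with $\tilde{\mathbb{E}}[M_n]<0$ (equivalently, that $\sup_d \BSDSOS{M_n}{\constraints}{2d}<0$). You name this ``the technical heart'' and then do not do it: no moment sequence is defined, no verification of the infinitely many pairwise inequalities (in particular the ones localized by the box/Archimedean constraints, which couple moments of unbounded degree) is given, and the assertion that one can ``boost a recursive family of lower moments'' is exactly the claim that needs proof. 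Moreover, the $2\times 2$ dual description you invoke, \eqref{eq:intro_PP_8_2}, is derived in the paper only for the Boolean-hypercube setting (pseudo-expectations on $\{0,1\}^n$, multilinear polynomials); for your Motzkin instance over $\R^n$ you would additionally have to set up the moment-theoretic dual and the absence of a duality gap for the union over all $d$ yourself. A secondary error: your closing appeal to \cref{cor:SONCcontainsSDSOS} reverses that containment --- it says every SDSOS certificate is a SONC certificate, not that a SONC representation of $f$ yields an SDSOS$^{*}$ one (that equivalence is established only on the hypercube, \cref{thm:sonc_vs_sdsos_0/1}); this claim is tangential but false as stated.

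For comparison, the paper's proof is a two-line reduction in the opposite direction of the containment you cite: by \cite[Theorem 5.1]{DresslerKW18} there is an explicit $f>0$ on infinitely many Archimedean systems $\constraints$ with $f \notin \HierPut{\sSONC}{\constraints}{2d}$ for all $d$ (no Putinar-type Positivstellensatz for SONC), and since \cref{cor:SONCcontainsSDSOS} (via \cref{lem:f_sdoss_is_sonc}) gives $\HierPut{\sSDSOS}{\constraints}{2d} \subseteq \HierPut{\sSONC}{\constraints}{2d}$, the very same $f$ and systems witness the failure for SDSOS. In effect, your plan attempts to re-prove that external hard result from scratch for SDSOS; the missing pseudo-expectation construction is precisely the content of such a theorem, so as written the proposal contains a genuine gap.
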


    \begin{proof}
    	By constructing an explicit example, it was shown in~\cite[Theorem 5.1]{DresslerKW18} that the equivalent statement holds for the SONC case. Thus, the statement follows immediately from \cref{cor:SONCcontainsSDSOS}.
    \end{proof}

\subsection{Closures under Changes of Bases and Relations to SOCP}

In the rest of this section we provide two observations regarding the behaviour of $\sSONC$ and $\sSDSOS$ under a change of bases and their relation to second order cone programming.\\

	In \cite[Lemma 4.1]{Dressler:Iliman:deWolff:Positivstellensatz} the authors showed that the SONC cone is not closed under multiplication, i.e., if $s_1,s_2 \in \PSONC^{2d}$, then this does not imply $s_1 \cdot s_2 \in \PSONC^{4d}$ in general. Moreover, $\PSONC^{2d}$ is not closed under affine transformations or more generally a change of bases; see \cite[Corollary 3.2]{DresslerKW18}. 
	These results are in sharp contrast to the SOS cone, which is closed both under multiplication and under a change of bases. 
	Similarly as for SONC, it is well-known that $\sSDSOS$ is not closed under multiplication, and a change of bases; see e.g., \cite{Ahmadi:Hall}.
	
	More precisely, we define 
	\begin{align*}
		\struc{\closSDSOS} \ := \ \left\{f(y_1,\ldots,y_n) \in \sSDSOS \ : \ 
		\begin{array}{l}
			\text{there exists a } T \in \text{GL}_n \text{ with } \\
			y_i = x_i^{T\Vector{e}_i} \text{ for all } i \in [n]
		\end{array}
		\right\};
	\end{align*}
    analogously for $\struc{\closSONC}$.
	
	Currently, it is an open problem in the community to decide whether the closure of $\PSDSOS^{2d}$ with respect to a change of bases equals $\PSOS^{2d}$. 
	
	\begin{problem}
		Is $\closSDSOS = \sSOS$?
	\end{problem}

	From the results of this section we obtain the following corollary.
	\begin{corollary}
        If $\closSDSOS = \sSOS$, then $\sSOS \subsetneq \closSONC$.
    \label{Corollary:Closures}
	\end{corollary}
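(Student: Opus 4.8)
The plan is to obtain the corollary from the cone-level inclusion $\sSDSOS \subseteq \sSONC$ furnished by \cref{lem:f_sdoss_is_sonc}, combined with the fact that closing a cone under a fixed family of monomial substitutions is a monotone operation. First I would record that \cref{lem:f_sdoss_is_sonc} shows every scaled diagonally dominant polynomial is a nonnegative circuit polynomial, hence $\sSDSOS \subseteq \sSONC$. Now $\closSDSOS$ and $\closSONC$ are built by applying the \emph{same} family of admissible changes of bases $y_i = x_i^{T\Vector{e}_i}$ with $T \in \text{GL}_n$ to members of the respective cones. Consequently the inclusion $\sSDSOS \subseteq \sSONC$ lifts verbatim to $\closSDSOS \subseteq \closSONC$: if a polynomial becomes SDSOS after such a substitution, it is in particular SONC after that very substitution, so it lies in $\closSONC$.

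Next I would feed in the hypothesis $\closSDSOS = \sSOS$, which immediately upgrades the inclusion just obtained to $\sSOS \subseteq \closSONC$. This already secures the containment half of the claimed strict containment.

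To promote this to \emph{strict} containment I would exhibit a witness in $\closSONC \setminus \sSOS$. The natural candidate is the generalized Motzkin polynomial $M_n$ of \cref{prop:GeneralizedMotzkin}: it is a SONC polynomial, so it lies in $\closSONC$ through the identity transformation $T = I$, while by the same proposition it is not a sum of squares. This single example certifies $\sSOS \neq \closSONC$, and together with $\sSOS \subseteq \closSONC$ yields $\sSOS \subsetneq \closSONC$.

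I do not expect a genuine obstacle here, since the argument is essentially monotonicity of the closure operation together with the already available separating example $M_n$. The only point meriting a line of care is verifying that the inclusion of cones survives the closure operation, i.e., that the two definitions of closure range over the \emph{identical} set of substitutions; this is immediate from the definitions, as both allow every $T \in \text{GL}_n$.
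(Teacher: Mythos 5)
Your proposal is correct and follows essentially the same route as the paper: the basis-independence of \cref{lem:f_sdoss_is_sonc} lifts $\sSDSOS \subseteq \sSONC$ to the closures, the hypothesis then gives $\sSOS \subseteq \closSONC$, and strictness comes from a SONC polynomial that is not SOS. The only cosmetic difference is that you exhibit the explicit witness $M_n$ from \cref{prop:GeneralizedMotzkin}, whereas the paper cites the same fact ($\sSOS \neq \sSONC$) from \cite[Proposition 7.2]{Iliman:deWolff:Circuits}.
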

	
	\begin{proof}
		Follows directly from \cref{lem:f_sdoss_is_sonc}, which does not depend on the chosen basis and $\sSOS \neq \sSONC$; see \cite[Proposition 7.2]{Iliman:deWolff:Circuits}.
	\end{proof}

	Moreover we obtain the following consequence about the relation of $\sSONC$ and \struc{\textit{second order cone programming} $\sSOCP$}. Since we use $\sSOCP$'s only in the following corollary, we omit a full definition of SOCP and refer the reader to the standard literature like \cite{Boyd04}. The bounds $\struc{\BSOCP{f}{\constraints}{2d}}$ and $\struc{\BSOCP{f}{\preprime{\constraints}}{2d}}$ are defined analogously to the other hierarchies.
    
    \begin{corollary}
    	For every $d \in \N^*$ we have $\BSOCP{f}{\constraints}{2d} \leq \BSONC{f}{\constraints}{2d}$ and $\BSOCP{f}{\preprime{\constraints}}{2d} \leq \BSONC{f}{\preprime{\constraints}}{2d}$.
		\label{cor:SONCcontainsSOCP}
    \end{corollary}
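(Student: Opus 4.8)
The plan is to reduce the statement to the containment $\BSDSOS{f}{\constraints}{2d} \leq \BSONC{f}{\constraints}{2d}$ already established in \cref{cor:SONCcontainsSDSOS}, by identifying the SOCP bound with the SDSOS bound.

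First I would recall from Ahmadi and Majumdar \cite{AhmadiM14} that searching for a degree-$2d$ SDSOS certificate is \emph{precisely} a second order cone program, so that the feasible region of the SOCP defining $\BSOCP{f}{\constraints}{2d}$ is exactly the cone $\HierPut{\sSDSOS}{\constraints}{2d}$ appearing in the SDSOS program~\eqref{eq:intro_PP_7} (and analogously $\HierSch{\sSDSOS}{\constraints}{2d}$ for the preprime version). The geometric reason is that an sdd matrix is, by \cref{def:intro_sdsos} and \cref{lem:dd_matrices}, a nonnegative combination of matrices supported on at most two rows and columns, so the membership condition decomposes into finitely many $2 \times 2$ positive semidefiniteness requirements, and a single $2 \times 2$ PSD condition is equivalent to a (rotated) second order cone constraint; this is made explicit in the matrix formulation~\eqref{eq:intro_PP_8_2}. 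Consequently the program underlying $\BSOCP{f}{\constraints}{2d}$ optimizes the same objective over the same feasible set as~\eqref{eq:intro_PP_7}, whence $\BSOCP{f}{\constraints}{2d} = \BSDSOS{f}{\constraints}{2d}$, and identically $\BSOCP{f}{\preprime{\constraints}}{2d} = \BSDSOS{f}{\preprime{\constraints}}{2d}$.

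It then remains only to chain this identity with \cref{cor:SONCcontainsSDSOS}, giving
\[
\BSOCP{f}{\constraints}{2d} = \BSDSOS{f}{\constraints}{2d} \leq \BSONC{f}{\constraints}{2d},
\]
together with the analogous inequality for the preprime hierarchy.

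The main (and essentially only) obstacle is the first step: since SOCP is deliberately left undefined in this article, one must be precise about what $\BSOCP{f}{\constraints}{2d}$ denotes and verify that the SOCP feasible set is genuinely \emph{equal} to the SDSOS cone rather than merely contained in or containing it. The cleanest way around this is to quote the explicit SOCP reformulation of the SDSOS program from \cite{AhmadiM14}, where this equality is exactly what is proved, so that no independent argument about the second order cone geometry is required beyond the remark above.
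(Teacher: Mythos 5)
Your overall strategy --- identify the SOCP bound with the SDSOS bound and then chain with \cref{cor:SONCcontainsSDSOS} --- is the same as the paper's, but your justification of the identification establishes the containment in the wrong direction, and the other direction is where the real content lies. The argument you sketch (an sdd Gram matrix decomposes into blocks supported on two rows and columns, and a $2\times 2$ PSD condition is a rotated second order cone constraint) shows that the search for a degree-$2d$ SDSOS certificate \emph{can be carried out by} an SOCP. As a relation between bounds this yields only $\BSDSOS{f}{\constraints}{2d} \leq \BSOCP{f}{\constraints}{2d}$, which is useless for the corollary. What is needed is the converse: with $\sSOCP$ understood as a ground set of nonnegative polynomials in its own right (the paper says the bounds are ``defined analogously to the other hierarchies''), one must show that \emph{every} polynomial admitting a second order cone certificate already admits an SDSOS certificate, so that $\BSOCP{f}{\constraints}{2d} \leq \BSDSOS{f}{\constraints}{2d} \leq \BSONC{f}{\constraints}{2d}$. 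This converse is not definitional and is not proved in \cite{AhmadiM14}; the paper obtains it from a separate and very recent theorem of Ding and Lim \cite[Theorem 3.3]{Ding:Lim:HigherOrderConeProgramming}, cited alongside Ahmadi--Majumdar \cite[Theorem 10]{AhmadiM17} for the direction you did prove.

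Your proposed workaround --- taking $\BSOCP{f}{\constraints}{2d}$ to \emph{mean} the optimal value of the Ahmadi--Majumdar SOCP reformulation of the SDSOS program --- makes the equality true by fiat, but it empties the corollary of content: under that reading the statement is a verbatim restatement of \cref{cor:SONCcontainsSDSOS}, rather than a result relating SONC to second order cone programming as an independent certificate class, which is what the paper intends (and is precisely why it needs Ding--Lim). So the missing ingredient is the theorem that every SOCP certificate is SDSOS; once that is quoted, your chain of inequalities coincides with the paper's proof.
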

    
    \begin{proof}
    	It was shown by Ahmadi and Majumdar that every $\sSDSOS$ certificate is $\sSOCP$ in \cite[Theorem 10]{AhmadiM17}, and, very recently, that every $\sSOCP$ certificate is $\sSDSOS$ by Ding and Lim \cite[Theorem 3.3]{Ding:Lim:HigherOrderConeProgramming}. Thus, the statement follows immediately from \cref{cor:SONCcontainsSDSOS}.
    \end{proof}
    
    We remark that, however, Averkov \cite[Theorem 17]{Averkov:OptimalSizeLMI} recently showed that the \textit{semidefinite extension degree} of SONC equals two, and thus SONC is an \struc{\textit{SOCP-lift}}; see also \cite{Gouveia:Parrilo:Thomas} for further details on lifts.
	
	\section{Hierarchies on the Boolean Hypercube}

    In this section we prove the dependencies between various hierarchies on the Boolean hypercube $\{0,1\}^n$.
	
Let, for this section, $\constraints$ be a collection of polynomials such that for all $i \in [n]$ we have $\pm(x_i^2 - x_i) \in \constraints$ and $l_i := N \pm x_i \in \constraints$ for $N \in \R_{> 1}$, such that $\feasible \subseteq\{0,1\}^n$.
Let $\sGEN \subset \NNcone{\feasible}$ be an arbitrary class of polynomials, which are nonnegative on the Boolean hypercube. We consider the corresponding optimization problem~\eqref{Def:PolynomialHierarchy}.
%
	We start with proving a general statement saying that every proof certificate that can certify nonnegativity of an $n$-variate polynomial over the unconstrained Boolean hypercube with an $O(n)$ degree certificate is at least as strong as the $\TSA^*$ hierarchy.
	
	\begin{theorem}
		\label{thm:General_proof_as_strong_as_SA}
		Let $f \in \R[\Vector{x}]$. Assume that there exists a $c \in \N^*$  such that for every $n \in \N$, and $\constraints:=\{ \pm(x_i^2 - x_i):~ i \in [n] \}$ such that
		\begin{align}
			\HierSch{\sGEN}{\constraints}{2cn} \ = \ \NNcone{\{0,1\}^n}.
			\label{Equ:ConditionGeneralAsStrongAsSA}
		\end{align}
        Then for every finite set of polynomial constraints $\mathcal{G'}$ with $\mathcal{G}'_+ \subseteq \{0,1\}^n$ and for every $d\in \mathbb{N}$ with $~d\leq n$ it holds that $\BoundSch{f}{\PSA}{\constraints'}{2d} \leq \ \BoundSch{f}{\PGEN}{\constraints'}{2cd}$.
	\end{theorem}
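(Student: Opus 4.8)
The plan is to prove the purely set-theoretic inclusion
\[
\HierSch{\sSA}{\constraints'}{d} \ \subseteq \ \HierSch{\sGEN}{\constraints'}{2cd},
\]
from which the asserted inequality of optimal values is immediate: if $f-\lambda$ lies in the left-hand cone then it lies in the right-hand cone, so every $\lambda$ feasible for the degree-$2d$ $\TSA^{*}$ program is feasible for the degree-$2cd$ $\PGEN^{*}$ program, and taking maxima gives $\BoundSch{f}{\PSA}{\constraints'}{2d} \le \BoundSch{f}{\PGEN}{\constraints'}{2cd}$. Hence it suffices to start from an arbitrary $h \in \HierSch{\sSA}{\constraints'}{d}$ and manufacture a $\PGEN^{*}$ certificate of degree at most $2cd$.

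For the construction I would write $h = \sum_k s_k p_k$ with $s_k \in \sSA$, $p_k \in \preprime{\constraints'}$ and $\deg(s_k p_k) \le d$, and then break each Sherali--Adams multiplier all the way down to single junta monomials: expanding $s_k = \sum_{I,J}\alpha_{I,J}\,\Vector{x}_I\overline{\Vector{x}_J}$ with $\alpha_{I,J}\ge 0$ and $I\cap J=\emptyset$ and distributing, $h$ becomes a nonnegative combination of terms $\alpha\,\Vector{x}_I\overline{\Vector{x}_J}\,p$ with $p\in\preprime{\constraints'}$. Each factor $\alpha\,\Vector{x}_I\overline{\Vector{x}_J}$ is nonnegative on $\{0,1\}^{S}$, where $S=I\cup J$ is its support, of size $r:=|I|+|J|$. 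Applying the hypothesis~\eqref{Equ:ConditionGeneralAsStrongAsSA} in the $r$ variables indexed by $S$ (which is legitimate since $r\le d\le n$) yields
\[
\alpha\,\Vector{x}_I\overline{\Vector{x}_J} \ = \ \sum_l t_l q_l, \qquad t_l\in\sGEN,\quad q_l\in\preprime{\{\pm(x_i^2-x_i):i\in S\}},\quad \deg(t_l q_l)\le 2cr.
\]
Since the hypercube constraints $\pm(x_i^2-x_i)$ belong to $\constraints'$ and $\preprime{\constraints'}$ is closed under multiplication, each $q_l p$ again lies in $\preprime{\constraints'}$; substituting back rewrites $h$ as $\sum t_l(q_l p)$ with $t_l\in\sGEN$ and $q_l p\in\preprime{\constraints'}$, which is exactly a $\PGEN^{*}$ certificate.

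The delicate point, and the reason for decomposing $s_k$ into individual junta monomials rather than treating it as a whole, is the degree bookkeeping. For a monomial $\Vector{x}_I\overline{\Vector{x}_J}$ with $I\cap J=\emptyset$ the polynomial degree equals the support size $r$, so from $\deg(\Vector{x}_I\overline{\Vector{x}_J}\,p)\le d$ we get $\deg(p)\le d-r$, and therefore
\[
\deg\bigl(t_l(q_l p)\bigr) \ \le \ 2cr+\deg(p) \ \le \ 2cr+(d-r) \ = \ (2c-1)r+d \ \le \ 2cd,
\]
using $r\le d$ in the last step. I expect this estimate to be the main obstacle: if one instead applied~\eqref{Equ:ConditionGeneralAsStrongAsSA} to a whole summand $s_k$, the relevant working dimension would be the \emph{number of variables} $s_k$ depends on, which for a low-degree but wide junta such as $\sum_i x_i$ can be as large as $d$ while $\deg(s_k)=1$; this overshoots and yields only a cruder $O(cd)$ bound. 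Passing to single monomials restores the identity ``degree $=$ support size'', which is precisely what closes the estimate at $2cd$. The remaining ingredients---that $\sGEN$ is closed under addition and nonnegative scaling, that the disjointness $I\cap J=\emptyset$ may be assumed, and that $\{\pm(x_i^2-x_i):i\in S\}\subseteq\constraints'$ implies $\preprime{\{\pm(x_i^2-x_i):i\in S\}}\subseteq\preprime{\constraints'}$---are routine.
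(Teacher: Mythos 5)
Your proposal is correct and follows essentially the same route as the paper's own proof: both decompose the Sherali--Adams multipliers into individual nonnegative juntas, apply hypothesis~\eqref{Equ:ConditionGeneralAsStrongAsSA} to each junta inside its own (at most $d$-element) set of variables, and then recombine, bounding the total degree by $2c\cdot(\text{junta support size})+\deg(\text{constraint factor})\leq 2cd$. The only differences are cosmetic: you split all the way down to single terms $\alpha\,\Vector{x}_I\overline{\Vector{x}}_J$ (so that degree equals support size), whereas the paper keeps sums of nonnegative $k_i$-juntas with $k_i=\lfloor\frac{2d-\deg(G_i)}{2}\rfloor$ and bounds $2ck_i+\deg(G_i)\leq 2cd$ --- the same estimate as your $(2c-1)r+d\leq 2cd$ in slightly different bookkeeping.
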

	
	\begin{proof}
		Consider a polynomial $f \in \mathbb{R}[\Vector{x}]$, a set $\mathcal{G^{'}_+} \subseteq \{0,1\}^n$ and a real number $\lambda$ such that $f-\lambda \in \HierSch{\sSA}{\mathcal{G^{'}}}{2d}$. We show that $f-\lambda \in \HierSch{\sGEN}{\constraints'}{2cn}$. 
		Let $\struc{\SizePrep{2d}^{'}}$ be the cardinality of the set $\preprime{\constraints^{'}}$ restricted to polynomials of degree at most $2d$.
		By definition, $f-\lambda \in \HierSch{\sSA}{\mathcal{G^{'}}}{2d}$ implies that there exists a certificate $f-\lambda=\sum_{i=0}^{\SizePrep{2d}^{'}} p_i G_i$, for $G_i \in \preprime{\mathcal{G}^{'}}$ such that every polynomial $p_i$ is of the form $p_i = \sum_{\ell = 0}^{r_i} q_{i\ell}$ such that every $q_{i\ell}$ is a nonnegative $k_i$-junta with $\struc{k_i}:=\lfloor \frac{2d-\deg(G_i)}{2} \rfloor$. 
		
		For every $i\in [\SizePrep{2d}^{'} ]$ we consider a set of polynomials $\constraints^{''}_i \subseteq \constraints'$ such that $\{\pm(x_j^2 - x_j), l_j \ | \ j \in [k_i]\} \subseteq \constraints^{''}_i$ and $\mathcal{G}^{''}_{i+} = \{0,1\}^{k_i}$.
         Let $\struc{\SizePrep{i,2d}^{''}}$ be the cardinality of the set $\preprime{\constraints^{''}_i}$ restricted to polynomials of degree at most $2d$.
         By the assumption \eqref{Equ:ConditionGeneralAsStrongAsSA} there exists a $c\in\mathbb{N}^*$ such that every $k$-variate polynomial, which is nonnegative over the Boolean hypercube $\{0,1\}^k$, has a degree $2ck$ certificate using polynomials in $\sGEN$. Thus, we have in particular $q_{i\ell} \in \HierSch{\sGEN}{\mathcal{G}^{''}_i}{2ck_i}$ for every $i \in [ \SizePrep{2d}^{'} ]$ and $\ell \in [r_i]$. Hence, we can write $q_{i\ell}=\sum_{j=0}^{ \SizePrep{2d}^{''}  }s_{i\ell j} G_{i\ell j}$ such that  $ s_{i \ell j} \in \sGEN$, $G_{i \ell j} \in \preprime{\mathcal{G}^{''}_i}$, and for every $j\in [ \SizePrep{2d}^{''} ]$ we have $\deg(s_{i \ell j}G_{i \ell j}) \leq 2ck_i $. In summary, we obtain:
		\begin{align*}
			f-\lambda \ = \ \sum_{i=0}^{ \SizePrep{2d}^{'}  } \sum_{\ell = 0}^{r_i} \left(\sum_{j=0}^{ \SizePrep{2d}^{''}  }s_{i \ell j} G_{i \ell j}\right) G_i,
		\end{align*}
        where for every $i \in [  \SizePrep{2d}^{'}  ]$, $j \in [ \SizePrep{2d}^{''} ]$, and $\ell \in [r_i]$ the degree $\deg(s_{i \ell j}G_{i \ell j}G_i)$ is at most $2ck_i+\deg(G_i)=2c \lfloor \frac{2d-\deg(G_i)}{2} \rfloor +\deg(G_i)\leq 2cd$, and hence 
        $$f-\lambda \in \HierSch{\sGEN}{\constraints' \cup \bigcup_{i=0}^{ \SizePrep{2d}^{''} } \constraints^{''}_i}{2cd}.$$
        By the containment $\mathcal{G}^{''}_i \subseteq \mathcal{G}^{'}$, for every $i \in [ \SizePrep{2d}^{''} ]$, we get that 
        $$
        \constraints' \cup \bigcup_{i=0}^{ \SizePrep{2d}^{''} } \constraints^{''}_i = \constraints^{'},
        $$  and  the statement follows.
	\end{proof}
	\subsection{Properties of the $\sGEN$ proof system}
	
	In \cref{thm:General_proof_as_strong_as_SA} we gave a sufficient condition for the proof system to be at least as strong as SA$^*$. As a consequence every proof system satisfying that condition attains the properties of the SA proof system. In particular, this applies to the \struc{\emph{conditioning}} property, which has been widely used to construct algorithmic results for various BCPOP problem, see e.g.,~\cite{LeveyR16}. In what follows we provide a formal description of the property.
	
	\begin{lemma}[Conditioning]
		\label{lem:SA_conditioning}
		For every $d \in \mathbb{N}^*$, let $\struc{\mathbb{L}}$ be the linear operator feasible for $\Prog{\sSAD}{\constraints}{2d}$.
        Let $i\in [n]$ be an index such that $0<\mathbb{L}[x_i]<1$. We define $ \struc{\mathbb{L}_{i,(0)}},\struc{\mathbb{L}_{i,(1)}}~:~ \mathbb{R}[\Vector{x}] \to \mathbb{R}$ such that:
        {\small
		\begin{align*}
			\mathbb{L}_{i,(1)}\left[\prod_{j \in J} x_j\right]~:=~\frac{\mathbb{L}\left[ \prod_{j \in J \cup \{i\}} x_j \right]   }{\mathbb{L}[x_i]}~~\qquad~~\mathbb{L}_{i,(0)}\left[\prod_{j \in J} x_j\right]~:=~\frac{\mathbb{L}\left[ \prod_{j \in J} x_j \right] - \mathbb{L}\left[ \prod_{j \in J \cup \{i\}} x_j \right]   }{1-\mathbb{L}[x_i]}.
		\end{align*}}
		Then it holds that $\mathbb{L}[\cdot]=\mathbb{L}[x_i]  \mathbb{L}_{i,(1)}[\cdot]   +  (1-\mathbb{L}[x_i])  \mathbb{L}_{i,(0)}[\cdot]  $. Moreover, both operators $\mathbb{L}_{i,(0)},\mathbb{L}_{i,(1)}$ are feasible for $\Prog{\sSAD}{\constraints}{2d-2}$.
		\end{lemma}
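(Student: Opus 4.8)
The plan is to treat $\mathbb{L}$, $\mathbb{L}_{i,(0)}$, and $\mathbb{L}_{i,(1)}$ as linear functionals on the space of multilinear polynomials, so that each is determined by its values on the monomials $\Vector{x}_J = \prod_{j \in J} x_j$. First I would verify the convex-combination identity purely formally on a single monomial: substituting the definitions gives
\[
\mathbb{L}[x_i]\,\mathbb{L}_{i,(1)}[\Vector{x}_J] + (1 - \mathbb{L}[x_i])\,\mathbb{L}_{i,(0)}[\Vector{x}_J] = \mathbb{L}[\Vector{x}_{J \cup \{i\}}] + \big(\mathbb{L}[\Vector{x}_J] - \mathbb{L}[\Vector{x}_{J \cup \{i\}}]\big) = \mathbb{L}[\Vector{x}_J],
\]
and the identity $\mathbb{L}[\cdot]=\mathbb{L}[x_i]\mathbb{L}_{i,(1)}[\cdot]+(1-\mathbb{L}[x_i])\mathbb{L}_{i,(0)}[\cdot]$ follows on all multilinear polynomials by linearity. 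The same substitution with $J = \emptyset$ yields $\mathbb{L}_{i,(1)}[1] = \mathbb{L}[x_i]/\mathbb{L}[x_i] = 1$ and $\mathbb{L}_{i,(0)}[1] = (1 - \mathbb{L}[x_i])/(1 - \mathbb{L}[x_i]) = 1$, using $0 < \mathbb{L}[x_i] < 1$ so that both denominators are nonzero; this discharges the normalization constraint for both operators.

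Next I would record the two facts that drive the feasibility argument. Since the dual variable is a function $\mu \colon \{0,1\}^n \to \R$, the value $\mathbb{L}[p]$ depends only on the restriction of $p$ to the cube, i.e.\ on its multilinear reduction modulo $\langle x_j^2 - x_j : j \in [n]\rangle$; consequently the conditioned operators admit the closed forms $\mathbb{L}_{i,(1)}[p] = \mathbb{L}[x_i p]/\mathbb{L}[x_i]$ and $\mathbb{L}_{i,(0)}[p] = \mathbb{L}[(1 - x_i)p]/(1 - \mathbb{L}[x_i])$ for every multilinear $p$ (the reduction $x_i^2 \equiv x_i$ absorbs the case $i \in J$). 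Moreover, because a single product $s g$ with $s \in \sSA$, $g \in \constraints$ already lies in $\HierPut{\sSA}{\constraints}{2d}$, feasibility of $\mathbb{L}$ for $\Prog{\sSAD}{\constraints}{2d}$ is equivalent to requiring $\mathbb{L}[sg] \geq 0$ for every such pair with $\deg(sg) \leq 2d$.

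For feasibility it then suffices to fix $s \in \sSA$ and $g \in \constraints$ with $\deg(sg) \le 2d-2$ and to show $\mathbb{L}_{i,(1)}[sg] \ge 0$ and $\mathbb{L}_{i,(0)}[sg] \ge 0$. The key step, which I expect to be the main obstacle, is to keep the conditioning multipliers inside the SA ground set. Reducing $x_i s$ and $(1-x_i)s$ modulo the Boolean ideal yields multilinear polynomials that are nonnegative on $\{0,1\}^n$ (products of the nonnegative junta $s$ with the nonnegative juntas $x_i$, respectively $1-x_i$) and depend on at most one more coordinate than $s$; by the characterization of multilinear $\sSA$-polynomials as nonnegative juntas, these reductions lie in $\sSA$ and have degree at most $\deg(s)+1$. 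Denoting them $s'$ and $s''$, the products $s'g$ and $s''g$ belong to $\HierPut{\sSA}{\constraints}{2d}$ because their degree is at most $\deg(sg)+1 \le 2d-1 \le 2d$. Since $\mathbb{L}$ evaluates on the cube we have $\mathbb{L}[x_i sg] = \mathbb{L}[s'g] \ge 0$ and $\mathbb{L}[(1-x_i)sg] = \mathbb{L}[s''g] \ge 0$ by feasibility of $\mathbb{L}$, and dividing by the strictly positive numbers $\mathbb{L}[x_i]$ and $1-\mathbb{L}[x_i]$ gives the two desired inequalities. Together with the normalization established above, both $\mathbb{L}_{i,(0)}$ and $\mathbb{L}_{i,(1)}$ are feasible for $\Prog{\sSAD}{\constraints}{2d-2}$. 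The only real care needed is the bookkeeping that multiplication by $x_i$ or $1-x_i$ raises the degree by exactly one, and that the reduction modulo the Boolean ideal is harmless precisely because $\mathbb{L}$ sees only cube values.
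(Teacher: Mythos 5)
Your proof is correct, and in fact the paper offers no proof of this lemma to compare against: it defers entirely to \cite[Lemma 2]{Rothvoss13}. Your argument is the standard conditioning proof from that literature, executed on the measure side: the conditioned operators are the pseudoexpectations of the reweighted dual variables $\mu(\Vector{x})\,x_i/\mathbb{L}[x_i]$ and $\mu(\Vector{x})(1-x_i)/(1-\mathbb{L}[x_i])$; normalization and the convex-combination identity are formal computations on monomials; and feasibility at level $2d-2$ follows because multiplying a product $sg$ by $x_i$ or $1-x_i$ costs one degree, which the slack between $2d-2$ and $2d$ absorbs. Your observation that everything factors through restriction to the cube (so the reduction $x_i^2 \equiv x_i$ is harmless) is exactly the right device for handling non-multilinear inputs such as $sg$ with $g = \pm(x_j^2 - x_j)$.

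One justification should be repaired, though the claim it supports is true, so this is a blemish rather than a gap. You argue that the multilinear reduction of $x_i s$ lies in $\sSA$ ``by the characterization of multilinear $\sSA$-polynomials as nonnegative juntas,'' treating $s$ itself as a nonnegative junta. A general element of $\sSA$ is not a junta: $x_1x_2 + x_3x_4 \in \sSA$ has degree $2$ but depends on four coordinates. Moreover, nonnegativity on the cube together with low degree does not by itself yield low-degree membership in $\sSA$; the indicator expansion certifying $\sSA$-membership of an arbitrary nonnegative multilinear polynomial has degree $n$ in general. The clean fix avoids juntas altogether: write $s = \sum_{I,J}\alpha_{I,J}\,\Vector{x}_I\overline{\Vector{x}}_J$ with $\alpha_{I,J} \geq 0$ and distribute; modulo the Boolean ideal, $x_i\cdot\Vector{x}_I\overline{\Vector{x}}_J \equiv \Vector{x}_{I\cup\{i\}}\overline{\Vector{x}}_J$ when $i \notin J$ and $\equiv 0$ when $i \in J$ (symmetrically for $1-x_i$), so the reduction of $x_i s$ is again visibly in $\sSA$, and its degree is at most $\deg(s)+1$ simply because multilinear reduction never increases degree. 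With that substitution, every step of your proof stands.
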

	The proof can be found in e.g.,~\cite[Lemma 2]{Rothvoss13}. Note that for every $i \in [n]$ and $\mathbb{L}_{i,(0)},\mathbb{L}_{i,(0)}$ satisfying the requirements in \cref{lem:SA_conditioning} we have $\mathbb{L}_{i,(0)}[x_i]=0,~~\mathbb{L}_{i,(1)}[x_i]=1$. 
	\cref{lem:SA_conditioning}, applied iteratively implies for every set $S\subseteq [n]$ with $|S| \leq d$ that every linear operator $\mathbb{L}[\cdot]$ feasible for $\Prog{\sSAD}{\constraints}{2d+2 \lceil d_{\constraints}/2 \rceil}$ can be written as a convex combination of linear operators, feasible for $\Prog{\sSAD}{\constraints}{2 \lceil d_{\constraints}/2 \rceil}$, that maps variables with indices in $S$ to 0 or 1. In other words, $\mathbb{L} = \sum_{i=1}^{2^d} \mathbb{L}_i$ such that for every $i$ and $j \in S$ we have $\mathbb{L}_i[x_j]\in\{0,1\}$.
	
	\begin{example}
	Consider a set of polynomials $\constraints =\{1,\pm(x_1^2-x_1),\pm(x_2^2-x_2), 3/2-x_1-x_2)\}$. The feasibility set of $\Prog{\sSAD}{\constraints}{2d+2 \lceil d_{\constraints}/2 \rceil}$ is the convex hull of its integral solutions.The feasibility set for $\Prog{\sSAD}{\constraints}{6}$ is shown in \cref{fig:SA_example} a. The feasibility set for a standard relaxation (replacing the integrality constraints with constraints $0\leq x_i \leq 1$) that corresponds to the feasibility region of $\Prog{\sSAD}{\constraints}{2}$ can be seen in \cref{fig:SA_example} b. The feasibility set of linear operators $\mathbb{L}[\cdot]$ feasible for $\Prog{\sSAD}{\constraints}{2}$ that additionally satisfies the property of being expressed as a convex combinations of operators integral on variable $x_1$ ($x_2$) can be seen in \cref{fig:SA_example} c (d), respectively. Finally, the set of operators feasible  for $\Prog{\sSAD}{\constraints}{4}$ can be seen in \cref{fig:SA_example} e. Note that the feasibility region in \cref{fig:SA_example} e is an intersection of regions in \cref{fig:SA_example} c and \cref{fig:SA_example}.
	\end{example}
	
	\begin{figure}[t]
		\includegraphics[width=1\linewidth]{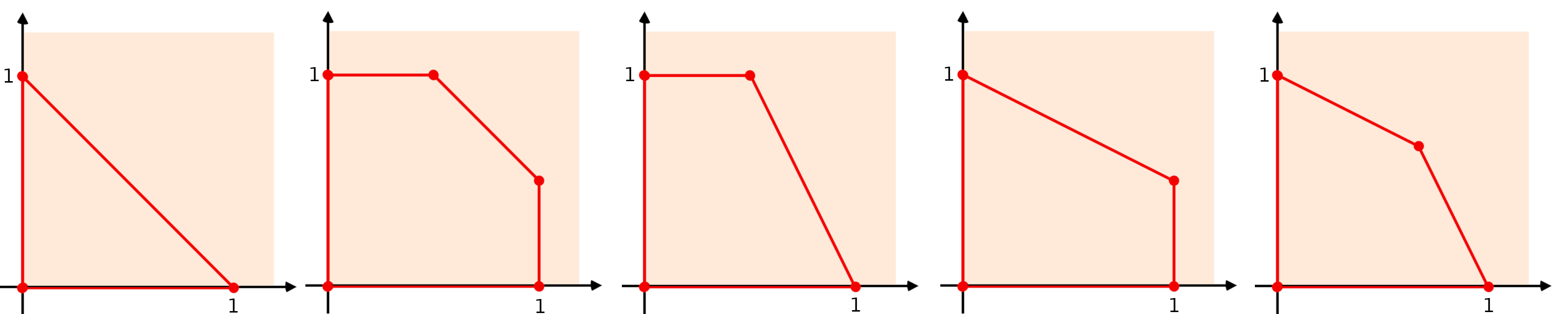} 
		\caption{Visualization of a feasibility set for different levels of $\PSAD$ and the \emph{conditioning} property from \cref{lem:SA_conditioning}.}
		\label{fig:SA_example}
	\end{figure}

	In the construction of the algorithm the conditioning property allows to take the solution of $\Prog{\sSAD}{\constraints}{2n+2 \lceil d_{\constraints}/2 \rceil}$ and choose a crucial variable that was assigned a fractional value and ask this value to be either 0 or 1, depending of the users preferences. The resulting solution is conditioned to be integral on this variable and stays feasible for $2n+2 \lceil d_{\constraints}/2 \rceil-2$ degree $\PSAD$. Clearly, the more variables are conditioned to be integral, the higher is the required degree of the $\PSAD$ solution. 

In general the conditioning property might not be easy to prove for a dual formulation of a given proof system. Assume that $\sGEN \subset \NNcone{\feasible}$ is such that the corresponding conic program admits no duality gap (the minimum value of the primal problem equals the maximum value of the dual program). Note that this assumption is necessary since~\cref{thm:General_proof_as_strong_as_SA} provides arguments on the primal side, namely the semialgebraic proof system. If the corresponding conic program admits a duality gap, then the implications of~\cref{thm:General_proof_as_strong_as_SA} on the dual side might not be correct.

 In what follows we provide a corollary giving sufficient conditions for the proof system to admit the conditioning property.

	\begin{corollary}
    Every proof system $\PGEN$ satisfying the requirements of \cref{thm:General_proof_as_strong_as_SA} admits the \emph{conditioning} property stated in \cref{lem:SA_conditioning}.  
    \end{corollary}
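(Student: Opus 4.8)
The plan is to establish the conditioning property on the dual side of $\PGEN$, where it naturally lives, and to reduce the whole statement to a single observation: at the \emph{saturating} degree the feasible operators of the $\PGEN$--dual are honest probability measures on the Boolean cube, and for honest measures conditioning is elementary. The primal strength supplied by \cref{thm:General_proof_as_strong_as_SA} enters only through its hypothesis \eqref{Equ:ConditionGeneralAsStrongAsSA}, and the no--duality--gap assumption recorded in the preceding remark is what licenses the passage to the dual.

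First I would make the reduction explicit. Take an operator $\mathbb{L}=\tilde{\mathbb{E}}_\mu$ that is dual feasible, i.e.\ $\mathbb{L}[1]=1$ and $\mathbb{L}[h]\ge 0$ for all $h\in\HierSch{\sGEN}{\constraints}{2cn}$, and define $\mathbb{L}_{i,(0)},\mathbb{L}_{i,(1)}$ by the formulas of \cref{lem:SA_conditioning}. Since $\mathbb{L}$ evaluates polynomials only on $\{0,1\}^n$, where $x_i=x_i^2$, these formulas are equivalent to $\mathbb{L}_{i,(1)}[h]=\mathbb{L}[x_i h]/\mathbb{L}[x_i]$ and $\mathbb{L}_{i,(0)}[h]=\mathbb{L}[(1-x_i)h]/(1-\mathbb{L}[x_i])$ for every $h$, with both denominators positive because $0<\mathbb{L}[x_i]<1$. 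Normalization is preserved, $\mathbb{L}_{i,(1)}[1]=\mathbb{L}_{i,(0)}[1]=1$, and the identity $\mathbb{L}=\mathbb{L}[x_i]\,\mathbb{L}_{i,(1)}+(1-\mathbb{L}[x_i])\,\mathbb{L}_{i,(0)}$ is immediate, so the entire content of the corollary is to certify that $\mathbb{L}_{i,(0)}$ and $\mathbb{L}_{i,(1)}$ are again $\PGEN$--dual feasible on one fewer free variable.

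The key step invokes \eqref{Equ:ConditionGeneralAsStrongAsSA}: applied with $k$ variables it gives $\HierSch{\sGEN}{\constraints}{2ck}=\NNcone{\{0,1\}^k}$, so dual feasibility forces $\mathbb{L}[h]\ge 0$ for \emph{every} $h$ nonnegative on $\{0,1\}^k$. Evaluating $\mathbb{L}$ on the nonnegative indicator of each point of the cube shows $\mu(\mathbf{x})\ge 0$, and $\sum_{\mathbf{x}}\mu(\mathbf{x})=\mathbb{L}[1]=1$; hence $\mathbb{L}$ is a genuine probability distribution on $\{0,1\}^n$. For such an $\mathbb{L}$ the maps $\mathbb{L}_{i,(1)}$ and $\mathbb{L}_{i,(0)}$ are exactly the conditional distributions given $x_i=1$ and $x_i=0$: each is again an honest distribution on $\{0,1\}^n$, now supported on the sub-cube $\{x_i=c\}\cong\{0,1\}^{n-1}$, hence nonnegative on all of $\NNcone{\{0,1\}^n}$ and in particular feasible for the $\PGEN$--dual at the reduced saturating degree $2c(n-1)$ by \eqref{Equ:ConditionGeneralAsStrongAsSA} with $n-1$ variables. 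Iterating over a coordinate set $S$ reproduces verbatim the decomposition $\mathbb{L}=\sum_{j}\mathbb{L}_j$ into $2^{|S|}$ operators integral on $S$ described after \cref{lem:SA_conditioning}, each step lowering the degree parameter by $2c$ rather than by $2$.

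The main obstacle, and the reason the preceding remark insists on the absence of a duality gap, is precisely this transfer from the \emph{primal} content of \cref{thm:General_proof_as_strong_as_SA} to the \emph{dual} conditioned operators: no--gap is what identifies the feasible region of the $\PGEN$--dual at the saturating degree with the honest distributions, the only regime where conditioning is automatic. I do not expect a naive argument at an intermediate degree $d<n$ to succeed, for it would require the primal closure $x_i\cdot\HierSch{\sGEN}{\constraints}{2c(d-1)}\subseteq\HierSch{\sGEN}{\constraints}{2cd}$ together with its $1-x_i$ analogue; while the $x_i$ inclusion can be arranged through $\pm(x_i^2-x_i)\in\constraints$, by writing $x_i=x_i^2-(x_i^2-x_i)$ and using that $\sGEN$ is stable under multiplication by the monomial square $x_i^2$, the factor $1-x_i$ would force stability of $\sGEN$ under multiplication by the non-monomial square $(1-x_i)^2$, which already fails for $\sGEN=\sSONC$ since $\sSONC$ is not closed under such products. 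Restricting to the degree where $\HierSch{\sGEN}{\constraints}{2ck}$ saturates $\NNcone{\{0,1\}^k}$ is exactly what circumvents this, and it is also the only regime needed in applications, where conditioning is applied to a solution of the top level and then driven downward.
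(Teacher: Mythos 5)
Your argument is internally sound as far as it goes, but it only proves the corollary in the degenerate regime where the statement has no content. By invoking \eqref{Equ:ConditionGeneralAsStrongAsSA} at the saturating degree $2cn$ you force every dual-feasible operator to be an honest probability distribution on the cube, for which conditioning is of course elementary; but \cref{lem:SA_conditioning} is a statement about an arbitrary level $2d$, and its value --- in particular in the algorithmic applications the paper cites, where one solves a polynomial-size program at degree $d \ll n$ and then conditions --- lies precisely at intermediate degrees, where dual-feasible operators are pseudoexpectations rather than measures. At the saturating degree the program has size $2^{\Theta(n)}$, so the property you establish is algorithmically vacuous, and your closing claim that the top level is ``the only regime needed in applications'' is not correct; the paper's iterated conditioning statement following \cref{lem:SA_conditioning} is explicitly for general $d$, and that is what the corollary is meant to transfer to $\PGEN$.

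The idea you are missing is that the conditioned operators never need to be $\PGEN$-dual feasible --- which is why the obstruction you identify (that $\sGEN$, e.g.\ $\sSONC$, is not closed under multiplication by $(1-x_i)^2$) is beside the point. The proof of \cref{thm:General_proof_as_strong_as_SA} establishes the cone containment $\HierSch{\sSA}{\constraints}{2d} \subseteq \HierSch{\sGEN}{\constraints}{2cd}$ for every $d \leq n$. Dually (and this is exactly where the paper's no-duality-gap caveat enters), any operator $\mathbb{L}$ feasible for the $\PGEN$-dual at degree $2cd$ satisfies $\mathbb{L}[h] \geq 0$ for all $h \in \HierSch{\sSA}{\constraints}{2d}$, i.e., $\mathbb{L}$ is feasible for the degree-$2d$ SA dual. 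Now apply \cref{lem:SA_conditioning} to $\mathbb{L}$ \emph{as an SA-dual operator}: the conditioned operators $\mathbb{L}_{i,(0)}, \mathbb{L}_{i,(1)}$ are feasible for $\Prog{\sSAD}{\constraints}{2d-2}$, which is literally ``the conditioning property stated in \cref{lem:SA_conditioning}'', and it holds at every level $d$ with only the constant-factor degree scaling $2cd$ versus $2d$. Your honest-distribution argument is a correct but strictly weaker special case ($d = n$) of this two-line relay through SA.
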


	\subsection{SONC vs SA}
		\label{sec:sonc_vs_sa_0/1}
	In this section we show two results. First, we show that $\TSONC^*$ is at least as strong as $\TSA^*$, meaning that SONC$^{~*}$ contains SA$^{~*}$. Second, by showing that every circuit polynomial is a $d$-junta, we show that $\TSA$ is at least as strong as $\TSONC$ and the same holds for systems strengthened with a Schm\"udgen-like Positivstellensatz i.e., SA contains SONC and SA$^{~*}$ contains SONC$^{~*}$. As a result we get that SA$^{~*}$ and SONC$^{~*}$ are polynomially equivalent.
	
	\begin{lemma}
		\label{lem:sonc_vs_sa_0/1}
		There exists a $c \in \N$ such that for every $d \in \N^*$ we get $\BoundSch{f}{\PSA}{\constraints'}{2d} \leq \ \BoundSch{f}{\PSONC}{\constraints'}{2cd}$, meaning that SONC$^{~*}$ contains SA$^{~*}$.
		\label{lemma:SONCatleastasstrongasSA}
	\end{lemma}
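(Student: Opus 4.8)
The plan is to deduce the statement from the general criterion in \cref{thm:General_proof_as_strong_as_SA}. That theorem reduces the claim ``$\mathbb{SONC}^*$ contains $\mathbb{SA}^*$'' to verifying its single hypothesis \eqref{Equ:ConditionGeneralAsStrongAsSA} for $\sGEN = \sSONC$: that there is a constant $c \in \N^*$ such that, for every $n$ and $\constraints = \{\pm(x_i^2-x_i) : i \in [n]\}$, one has $\HierSch{\sSONC}{\constraints}{2cn} = \NNcone{\{0,1\}^n}$. The inclusion ``$\subseteq$'' is immediate, since every nontrivial product of the $\pm(x_i^2-x_i)$ vanishes on $\{0,1\}^n$ while every element of $\sSONC$ is globally nonnegative. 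Hence the entire content is the reverse inclusion: \emph{every polynomial nonnegative on the Boolean hypercube admits a Schm\"udgen-like SONC certificate of degree linear in $n$.} Once this is shown, \cref{thm:General_proof_as_strong_as_SA} yields $\BoundSch{f}{\PSA}{\constraints'}{2d} \leq \BoundSch{f}{\PSONC}{\constraints'}{2cd}$ for all $d$, which is the assertion.

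To establish the reverse inclusion I would first reduce to multilinear representatives: modulo the ideal generated by $\constraints$, each $f \in \NNcone{\{0,1\}^n}$ agrees on the cube with its multilinearization, and a nonnegative multilinear polynomial is a nonnegative combination $\sum_{I,J}\alpha_{I,J}\,\Vector{x}_I\overline{\Vector{x}_J}$ of junta atoms (precisely the statement that the degree-$\le n$ elements of $\sSA$ exhaust the nonnegative functions on $\{0,1\}^n$). It therefore suffices to certify each atom $\Vector{x}_I\overline{\Vector{x}_J}$ in $\HierSch{\sSONC}{\constraints}{O(|I|+|J|)}$ and to absorb the ideal remainder of $f$ into the certificate.

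The building blocks are the two congruences $x_i = x_i^2 - (x_i^2-x_i)$ and $1-x_i = (1-x_i)^2 - (x_i^2-x_i)$. Here $x_i^2$ is a monomial square and $(1-x_i)^2$ a binomial square, both nonnegative circuit polynomials by \cref{Thm:CircuitPolynomialNonnegativity} (indeed $\sSDSOS$, cf.\ \cref{lem:f_sdoss_is_sonc}), whereas $x_i^2-x_i$ and its negative lie in $\constraints$. Expanding $\Vector{x}_I\overline{\Vector{x}_J} = \prod_{i\in I}x_i\prod_{j\in J}(1-x_j)$ through these congruences, every resulting summand is a product of the chosen squares times a product of constraints. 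For the monomial part this is harmless, because a monomial square times a SONC is again a SONC (its Newton simplex is merely translated by an even vector, preserving the circuit structure and the circuit number) and the remaining factors form an element of $\preprime{\constraints}$; this already places $\Vector{x}_I$ into $\HierSch{\sSONC}{\constraints}{2|I|}$.

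The hard part, and the step I expect to be the main obstacle, is the interaction of the several $(1-x_j)$ factors. Since $\sSONC$ is \emph{not} closed under multiplication, a product of distinct binomial squares $(1-x_{j})^2$ need not be a SONC, so one cannot simply multiply the single-variable certificates together. The crux is to reorganize the expansion so that at each stage only one nonmonomial square is retained while all other factors are pushed into the constraint part $G \in \preprime{\constraints}$, so that a genuine product of two distinct nonmonomial circuit polynomials is never formed, and to verify that this reorganization stays within a degree budget linear in $|I|+|J|$. Carrying this out for every atom, and then writing the ideal remainder of $f$ as SONC multiples of $\pm(x_i^2-x_i)$, produces the required degree-$O(n)$ certificate, establishes \eqref{Equ:ConditionGeneralAsStrongAsSA} for $\sSONC$, and completes the argument.
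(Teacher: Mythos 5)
Your reduction is the same as the paper's: \cref{lem:sonc_vs_sa_0/1} follows from \cref{thm:General_proof_as_strong_as_SA} once hypothesis \eqref{Equ:ConditionGeneralAsStrongAsSA} is verified for $\sGEN = \sSONC$. The gap is that you never verify it. The paper does so in one line by citing \cite[Theorem~4.7]{DresslerKW18}---a degree-$O(n)$ SONC$^{*}$ Positivstellensatz over the Boolean hypercube---which is a substantial external theorem. You instead attempt to prove this statement from scratch, and your argument stops exactly at what you yourself call ``the main obstacle'': certifying the atoms $\Vector{x}_I\overline{\Vector{x}_J}$ with $|J|\geq 2$. The difficulty is real and two-sided. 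Expanding via $1-x_j = (1-x_j)^2 + (x_j - x_j^2)$ produces factors $\prod_{j\in T}(1-x_j)^2$, which for $|T|\geq 2$ need not be SONC (non-closure under multiplication, \cite[Lemma~4.1]{Dressler:Iliman:deWolff:Positivstellensatz}), so they cannot be placed in the $\sSONC$ part; and they cannot be pushed into the constraint part either, because in hypothesis \eqref{Equ:ConditionGeneralAsStrongAsSA} the constraint set is exactly $\{\pm(x_i^2-x_i) : i \in [n]\}$, and every element of $\preprime{\{\pm(x_i^2-x_i)\}}$ restricts to a nonnegative \emph{constant} on $\{0,1\}^n$ (all nontrivial products of the constraints vanish there). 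Hence no polynomial that is nonconstant on the cube---in particular no factor $x_j$, $1-x_j$, or product of such---lies in that preprime, and the ``reorganization keeping only one nonmonomial square at each stage'' that you appeal to is not bookkeeping: it is the entire content of the result, and you do not exhibit it.

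What you do prove is correct but peripheral: the inclusion $\HierSch{\sSONC}{\constraints}{2cn} \subseteq \NNcone{\{0,1\}^n}$, the fact that a monomial square times a nonnegative circuit polynomial is again one, and consequently $\Vector{x}_I \in \HierSch{\sSONC}{\constraints}{2|I|}$ via $\Vector{x}_I = \prod_{i\in I}\bigl(x_i^2 + (x_i - x_i^2)\bigr)$. To close the proof you must either cite \cite[Theorem~4.7]{DresslerKW18}, as the paper does, or reprove it; the known proof does not proceed by rearranging your expansion, but through Kronecker delta functions and a case analysis---the same machinery this paper imports later in the proof of \cref{thm:sonc_vs_sdsos_0/1}.
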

	\begin{proof}
		By \cite[Theorem 4.7]{DresslerKW18} the $\TSONC^*$ satisfies the condition of \cref{thm:General_proof_as_strong_as_SA} and thus the proof follows.
	\end{proof}
	Next we prove that the inverse of the inequality in \cref{lemma:SONCatleastasstrongasSA} also holds over the Boolean hypercube. We start with a technical lemma.

	\begin{lemma}
        Let $f$ be a circuit polynomial of total degree $d$. Then $f$ is a $d$-junta. 
        \label{Lemma:CircuitsAreJuntas}
	\end{lemma}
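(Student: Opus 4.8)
The plan is to pass to the Boolean hypercube and count exactly which variables survive once the relations $x_i^2 = x_i$ are imposed; the whole point will be to show that this set of variables has size at most $d$. The structural input I would use is that, by \cref{Def:CircuitPolynomial}, the inner exponent $\boldsymbol{\beta}$ lies in the \emph{strict} interior of the simplex $\New(f) = \conv\{\boldsymbol{\alp}(0),\ldots,\boldsymbol{\alp}(r)\}$, so it has a barycentric representation $\boldsymbol{\beta} = \sum_{j=0}^r \lambda_j \boldsymbol{\alp}(j)$ with all $\lambda_j > 0$ and $\sum_{j} \lambda_j = 1$.

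First I would read this representation off coordinatewise: $\beta_i = \sum_{j=0}^r \lambda_j \, \alp(j)_i$ for each $i \in [n]$. Because every $\lambda_j$ is strictly positive and every $\alp(j)_i$ is a nonnegative integer, we have $\beta_i > 0$ if and only if $\alp(j)_i > 0$ for at least one $j$. Consequently $\operatorname{supp}(\boldsymbol{\beta}) = \bigcup_{j=0}^r \operatorname{supp}(\boldsymbol{\alp}(j))$, and hence the set of all variables occurring in any monomial of $f$ is precisely $\operatorname{supp}(\boldsymbol{\beta})$. Next I would bound its size: since $\mathbf{x}^{\boldsymbol{\beta}}$ is a term of $f$ we have $\sum_{i} \beta_i = \deg(\mathbf{x}^{\boldsymbol{\beta}}) \leq \deg(f) = d$, and since each nonzero entry of the integer vector $\boldsymbol{\beta}$ is at least $1$, it follows that $|\operatorname{supp}(\boldsymbol{\beta})| \leq \sum_i \beta_i \leq d$.

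To finish, I would restrict $f$ to $\{0,1\}^n$: using $x_i^2 = x_i$ every monomial collapses to a squarefree monomial in the variables indexed by $\operatorname{supp}(\boldsymbol{\beta})$, so $f$ becomes a function of at most $d$ coordinates, and together with its nonnegativity on the hypercube (automatic in the relevant case of a \emph{nonnegative} circuit polynomial) this is exactly the assertion that $f$ is a $d$-junta. I do not anticipate a serious obstacle; the one step that genuinely needs the hypotheses is the equality $\operatorname{supp}(\boldsymbol{\beta}) = \bigcup_j \operatorname{supp}(\boldsymbol{\alp}(j))$, which would fail if $\boldsymbol{\beta}$ were allowed on the boundary of $\New(f)$ --- there some $\lambda_j$ could vanish and a variable appearing in a vertex exponent might drop out of $\boldsymbol{\beta}$, so the strictness of the interior containment in \cref{Def:CircuitPolynomial} is essential.
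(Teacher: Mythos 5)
Your proof is correct, and it takes a genuinely more direct route than the paper's. Both arguments start from the barycentric representation $\boldsymbol{\beta} = \sum_{j=0}^{r} \lambda_j \boldsymbol{\alpha}(j)$ with all $\lambda_j > 0$, but the paper uses it only via pigeonhole ($\lambda_j \le \tfrac{1}{r+1}$ for some $j$) to conclude $\deg(f) \ge r+1$, i.e., to bound the number of vertices of $\New(f)$ by $d$, and then passes from this to the assertion that a degree-$d$ circuit polynomial (phrased via its homogenization) involves at most $d$ variables. You instead identify the variable set of $f$ exactly as $\operatorname{supp}(\boldsymbol{\beta}) = \bigcup_{j} \operatorname{supp}(\boldsymbol{\alpha}(j))$, using strict interiority, and bound its size by integrality: $|\operatorname{supp}(\boldsymbol{\beta})| \le \|\boldsymbol{\beta}\|_1 \le d$. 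This support argument is precisely what the paper's last step leaves implicit, and it cannot be skipped: bounding the vertex count $r+1$ by $d$ does not by itself bound the number of variables, as the circuit polynomial $(x_1 - x_2x_3x_4)^2$ shows ($r=1$, four variables, degree $6$); it is the $\ell_1$-norm of the interior exponent, not the number of vertices, that controls how many coordinates appear. So your route buys a complete and self-contained argument, while the paper's route records in passing the (here unneeded) fact that $\New(f)$ has at most $d$ vertices, i.e., that $f$ has at most $d+1$ terms. Two small polishing remarks: (i) if $f_{\boldsymbol{\beta}} = 0$, then $\mathbf{x}^{\boldsymbol{\beta}}$ is not literally a term of $f$, so it is cleaner to bound $\|\boldsymbol{\beta}\|_1 = \sum_j \lambda_j \|\boldsymbol{\alpha}(j)\|_1 \le \max_j \|\boldsymbol{\alpha}(j)\|_1 = \deg(f)$ by convexity, which needs no such assumption; (ii) the final multilinearization over $\{0,1\}^n$ is not really needed --- once $f$ involves at most $d$ variables as a polynomial, it depends on at most $d$ input coordinates as a function on the hypercube, which is exactly the definition of a $d$-junta in \cref{SubSec:PreliminariesSheraliAdams}.
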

	\begin{proof}
		Consider a nonnegative circuit polynomial of the form
		\begin{align*}
		f(\mathbf{x}) \ := \ f_{\boldsymbol{\beta}} \mathbf{x}^{\boldsymbol{\beta}} + \sum_{j=0}^r f_{\boldsymbol{\alp}(j)} \mathbf{x}^{\boldsymbol{\alp}(j)}.
		\end{align*}	
		Since, by the \cref{Def:CircuitPolynomial}, $\boldsymbol{\beta} = \sum_{j = 0}^r \lambda_j \boldsymbol{\alp}(j)$ with $\sum_{j = 0}^r \lambda_j = 1$, $\lambda_j > 0$ and $\boldsymbol{\beta} \in \N^r$ we have $\lambda_j \leq \frac{1}{r+1}$ for at least one $j$. Thus, the total degree of $f$ satisfies $\deg(f) \geq r+1$. 
		
		This means on the contrary, if $f$ is of degree $d$, then the homogenization of $f$ contains at most $d$ many variables, which implies that $f$ is a $d$-junta by definition, see \cref{SubSec:PreliminariesSheraliAdams}.
	\end{proof}
	
	We obtain the following corollary:
	\begin{corollary}
		\label{cor:sonc_vs_sa_0/1}
        Let $f \in \R[\Vector{x}]$. For every $d \in \N^*$ we have $\Bound{f}{\PSONC}{\constraints}{2d} \leq \ \Bound{f}{\PSA}{\constraints}{4d}$ and $\BoundSch{f}{\PSONC}{\constraints}{2d} \leq \ \BoundSch{f}{\PSA}{\constraints}{4d}$. In other words SA contains SONC and SA$^{~*}$ contains SONC$^{~*}$.
	\end{corollary}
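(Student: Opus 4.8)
The plan is to deduce both inequalities from a single inclusion of hierarchies at a fixed level, namely $\HierPut{\sSONC}{\constraints}{2d} \subseteq \HierPut{\sSA}{\constraints}{2d}$ (and its preprime analogue $\HierSch{\sSONC}{\constraints}{2d} \subseteq \HierSch{\sSA}{\constraints}{2d}$), where over the hypercube all memberships are read in the quotient ring $\R[\Vector{x}]/\langle x_1^2-x_1,\dots,x_n^2-x_n\rangle$ of multilinear polynomials, as is legitimate here since $\pm(x_j^2-x_j) \in \constraints$ for every $j$ (compare the reduction to multilinear polynomials in \cref{sec:sos_dual}). Once the inclusion is established, every $\lambda$ feasible for the SONC program of degree $2d$ is feasible for the SA program, so $\Bound{f}{\PSONC}{\constraints}{2d} \le \Bound{f}{\PSA}{\constraints}{4d}$; the jump from $2d$ to $4d$ is purely notational, reflecting the historical convention recorded in \cref{SubSec:PreliminariesSheraliAdams} that $\Bound{f}{\PSA}{\constraints}{2d'}$ refers to $\HierPut{\sSA}{\constraints}{d'}$, so that superscript $2d$ in $\HierPut{\sSA}{\constraints}{\cdot}$ corresponds to the bound at level $4d$.

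The heart of the argument is the conversion of the SONC multipliers into SA multipliers, which is where \cref{Lemma:CircuitsAreJuntas} enters. Take a SONC certificate $f-\lambda = \sum_i s_i g_i$ with $s_i \in \sSONC$, $g_i \in \constraints$, and $\deg(s_i g_i) \le 2d$, and write each multiplier as $s_i = \sum_k \mu_k p_k$ with $\mu_k \ge 0$ and $p_k$ nonnegative circuit polynomials. By \cref{Lemma:CircuitsAreJuntas} each $p_k$ of degree $\delta_k$ depends on at most $\delta_k$ variables, and since $p_k$ is globally nonnegative it is in particular nonnegative on $\{0,1\}^n$; hence $p_k$ is a nonnegative $\delta_k$-junta, so its multilinear representative $\tilde p_k$ lies in $\sSA$ by the correspondence in \cref{SubSec:PreliminariesSheraliAdams}. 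Reduction modulo the ideal is linear and never raises degree, so $\tilde s_i := \sum_k \mu_k \tilde p_k$ is the reduction of $s_i$; as $\sSA$ is a convex cone (closed under nonnegative scaling and addition), $\tilde s_i \in \sSA$, with multilinear degree at most $\deg s_i$.

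Reassembling in the quotient gives $f-\lambda \equiv \sum_i \tilde s_i g_i$ with $\tilde s_i \in \sSA$, where each term $\tilde s_i g_i$ has the same multilinear degree as $s_i g_i$, hence at most $\deg(s_i g_i) \le 2d$; this exhibits $f-\lambda \in \HierPut{\sSA}{\constraints}{2d}$ and yields $\Bound{f}{\PSONC}{\constraints}{2d} \le \Bound{f}{\PSA}{\constraints}{4d}$. The Schm\"udgen-type inequality is obtained verbatim by replacing the single constraints $g_i \in \constraints$ with products $G_i \in \preprime{\constraints}$, since the conversion acts only on the multipliers $s_i$ and leaves the $G_i$ untouched. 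The main obstacle is the reconciliation at the junction of the two conventions: one must check that passing from $p_k$ to $\tilde p_k$ stays within the degree budget (it does, because reduction modulo $x_j^2-x_j$ only lowers degree and the junta bound $\delta_k \le \deg p_k$ keeps the active variables controlled) and, crucially, that the correction terms generated by this reduction are absorbed rather than spoiling the certificate. The latter is exactly what the passage to the multilinear quotient achieves, the correction terms lying in $\langle x_j^2-x_j\rangle$ and thus vanishing identically there; once this is granted, the only genuine input is \cref{Lemma:CircuitsAreJuntas} together with the elementary fact that a globally nonnegative polynomial is \emph{a fortiori} a nonnegative junta on the cube.
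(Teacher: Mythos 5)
Your proof is correct and takes essentially the same route as the paper's: decompose each SONC multiplier into nonnegative circuit polynomials, invoke \cref{Lemma:CircuitsAreJuntas} to conclude each one is a nonnegative $\deg$-junta and hence lies in $\sSA$, reassemble the certificate, and observe that the $2d$-versus-$4d$ gap is purely the historical SA degree convention. Your explicit treatment of the reduction modulo $\langle x_1^2-x_1,\ldots,x_n^2-x_n\rangle$ and the absorption of correction terms is a detail the paper's proof leaves implicit, not a different argument.
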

	
	\begin{proof}
        Assume that there exists a degree-$2d$ SONC certificate for $f$ of the form
        	$f = \sum_{j = 1}^m s_i g_i$
        such that the $s_i$ are SONCs and $g_i \in \constraints$. For every $i \in [m]$ we write $s_i = \sum_{j = 1}^{k_i} c_{ij}$ where $k_i \in \N$ and every $c_{ij}$ is a nonnegative circuit polynomial satisfying $\deg(c_{ij}g_i) \leq 2d$. By \cref{Lemma:CircuitsAreJuntas}, every $c_{ij}$ is a $\deg(c_{ij})$-junta, which yields the first inequality. For the second inequality the proof works analogously with $g_i \in \preprime{\constraints}$. 
	\end{proof}
	
	\subsection{SONC vs. SDSOS}
		\label{sec:sonc_vs_sdsos_0/1}
	In \cref{Sec:SDSOSvsSONC}, we saw that in general every SDSOS certificate is a SONC certificate, but not vice versa. Here, we show that the situation is more special on the Boolean hypercube: It turns out that the relation of the two certificates depends on the type of hierarchy, which we allow for SDSOS. We show the following result:
	
	\begin{theorem}
		\label{thm:sonc_vs_sdsos_0/1}
        Let $\constraints$ be as before with $\feasible \subseteq \{0,1\}^n$. Then for all $d \in \N$ we have the following dependencies $\HierSch{\PSDSOS}{\constraints}{2d} = \HierSch{\PSONC}{\constraints}{2d}$;  $\HierPut{\PSDSOS}{\constraints}{2d} \subsetneq\HierSch{\PSONC}{\constraints}{2d}$;  $\HierPut{\PSDSOS}{\constraints}{2d} \subseteq\HierPut{\PSONC}{\constraints}{2d}$. This implies that SONC$^{~*}$ is polynomially equivalent with SDSOS$^{~*}$, SONC$^{~*}$ strictly contains SDSOS and SONC contains SDSOS. 
	
	\end{theorem}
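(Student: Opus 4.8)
The plan is to establish the three displayed set relations one at a time and then read off the four comparability statements. Two of the inclusions are immediate. By \cref{lem:f_sdoss_is_sonc} every scaled diagonally dominant polynomial is a nonnegative circuit polynomial, so $\sSDSOS \subseteq \sSONC$; feeding this into the definition of the hierarchies exactly as in the proof of \cref{cor:SONCcontainsSDSOS} gives at once $\HierPut{\PSDSOS}{\constraints}{2d} \subseteq \HierPut{\PSONC}{\constraints}{2d}$ (the third relation) and $\HierSch{\PSDSOS}{\constraints}{2d} \subseteq \HierSch{\PSONC}{\constraints}{2d}$ (one half of the first). Since every $g_i \in \constraints$ is itself a degree-one element of $\preprime{\constraints}$, we also have $\HierPut{\PSDSOS}{\constraints}{2d} \subseteq \HierSch{\PSDSOS}{\constraints}{2d}$, which together with the first relation yields the inclusion part of the second relation.

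The heart of the argument is the reverse inclusion $\HierSch{\PSONC}{\constraints}{2d} \subseteq \HierSch{\PSDSOS}{\constraints}{2d}$. Every element of $\HierSch{\PSONC}{\constraints}{2d}$ is a sum of terms $s_i G_i$ with $s_i \in \sSONC$ and $G_i \in \preprime{\constraints}$, and each $s_i$ is a sum of nonnegative circuit polynomials; so, after distributing, it suffices to prove the claim that every nonnegative circuit polynomial $c$ with $\deg c \le 2d$ already lies in $\HierSch{\PSDSOS}{\constraints}{2d}$. I would prove this in two steps. First, I observe that modulo the ideal $I = \langle x_i^2 - x_i : i \in [n]\rangle$ the Schm\"udgen--SDSOS hierarchy is very flexible: because $\pm(x_i^2 - x_i) \in \constraints$ and every monomial $m$ is a difference of two binomial squares, $m = \tfrac14(m+1)^2 - \tfrac14(m-1)^2$, the multiplier of a Boolean constraint may be taken to be an \emph{arbitrary} polynomial (written as a difference of two $\sSDSOS$ terms). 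Hence any element of $I$ of controlled degree can be added for free, and it is enough to represent the \emph{multilinear reduction} of $c$ as a sum of binomial squares modulo $I$. Second, I use the circuit structure: the outer monomials $\mathbf{x}^{\boldsymbol{\alp}(j)}$ are even, so on $\{0,1\}^n$ they collapse to squarefree products $\mathbf{x}_{S_j}$ with $S_j = \operatorname{supp}(\boldsymbol{\alp}(j))$, all contained in $T = \operatorname{supp}(\boldsymbol{\beta}) = \bigcup_j S_j$, while the single interior term $f_{\boldsymbol{\beta}}\mathbf{x}^{\boldsymbol{\beta}}$ reduces to $f_{\boldsymbol{\beta}} \mathbf{x}_T$. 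Using the circuit-number bound $|f_{\boldsymbol{\beta}}| \le \Theta_c$ from \cref{Thm:CircuitPolynomialNonnegativity} together with the weighted AM--GM identity behind it, I would distribute the (possibly negative) reduced interior term among binomial squares of the shape $(a\,\mathbf{x}^{\mu} - b\,\mathbf{x}^{\nu})^2$, whose reductions $a^2 \mathbf{x}_A + b^2 \mathbf{x}_B - 2ab\,\mathbf{x}_{A\cup B}$ supply exactly the monomials of the reduced circuit, with the certificate closing because $\Theta_c$ never exceeds the capacity of these squares.

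The main obstacle is precisely this construction: one must choose the pairing and the coefficients so that the reduced circuit is matched exactly while every binomial square and every ideal correction stays within degree $2d$. Here \cref{Lemma:CircuitsAreJuntas}, which bounds the number of variables of $c$ by its degree, is what keeps the supports, and hence the degrees of the squares and of the ideal corrections, under control. Granting the claim, the first relation $\HierSch{\PSDSOS}{\constraints}{2d} = \HierSch{\PSONC}{\constraints}{2d}$ follows, and with the inclusions of the first paragraph so does $\HierPut{\PSDSOS}{\constraints}{2d} \subseteq \HierSch{\PSONC}{\constraints}{2d}$. For the strictness in the second relation I would exhibit a separating polynomial: a circuit polynomial whose Newton polytope is an $M$-simplex, such as the generalized Motzkin polynomial of \cref{prop:GeneralizedMotzkin}, lies in $\HierSch{\PSONC}{\constraints}{2d}$ at its natural degree yet is never a sum of squares; since $\HierPut{\PSDSOS}{\constraints}{2d} \subseteq \HierPut{\PSOS}{\constraints}{2d}$, certifying it in the Putinar--SDSOS hierarchy forces the use of the constraints and hence a strictly larger degree, exactly in the spirit of \cref{cor:sdsos_vs_sonc}. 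This gives $\HierPut{\PSDSOS}{\constraints}{2d} \subsetneq \HierSch{\PSONC}{\constraints}{2d}$.

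Finally I would read off the comparability statements. The equality $\HierSch{\PSDSOS}{\constraints}{2d} = \HierSch{\PSONC}{\constraints}{2d}$ for every $d$ gives $\BoundSch{f}{\PSDSOS}{\constraints}{2d} = \BoundSch{f}{\PSONC}{\constraints}{2d}$, so SONC$^{*}$ and SDSOS$^{*}$ are polynomially (indeed levelwise) equivalent; the strict inclusion $\HierPut{\PSDSOS}{\constraints}{2d} \subsetneq \HierSch{\PSONC}{\constraints}{2d}$, taken over the separating family, shows that SONC$^{*}$ strictly contains SDSOS; and $\HierPut{\PSDSOS}{\constraints}{2d} \subseteq \HierPut{\PSONC}{\constraints}{2d}$ shows that SONC contains SDSOS.
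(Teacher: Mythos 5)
Your treatment of the two easy inclusions ($\HierPut{\PSDSOS}{\constraints}{2d} \subseteq \HierPut{\PSONC}{\constraints}{2d}$ and $\HierSch{\PSDSOS}{\constraints}{2d} \subseteq \HierSch{\PSONC}{\constraints}{2d}$ via \cref{lem:f_sdoss_is_sonc} and \cref{cor:SONCcontainsSDSOS}) matches the paper. For the key reverse inclusion $\HierSch{\PSONC}{\constraints}{2d} \subseteq \HierSch{\PSDSOS}{\constraints}{2d}$ you take a genuinely different route: you try to build, by hand, a binomial-square representation of the multilinear reduction of each nonnegative circuit polynomial, using the Boolean constraints to absorb ideal corrections. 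The paper instead invokes the structural results of \cite{DresslerKW18}: every polynomial nonnegative on $\feasible \subseteq \{0,1\}^n$ has a certificate built from Kronecker deltas and SONC multipliers (Theorem 12 there), the deltas lie in $\preprime{\constraints}$ (Lemma 14), and the SONC multipliers occurring in that certificate are in fact binomial squares (Case 2 of the proof of Theorem 16). Your idea is morally the content of those cited results, but as written it is a sketch whose degree bookkeeping does not close: writing a multiplier monomial $m$ as $\tfrac14(m+1)^2 - \tfrac14(m-1)^2$ \emph{doubles} its degree, and a square producing the reduced inner monomial $x_T$ must have degree at least $|T|$ and in general up to $2|T|$, so your construction yields certificates of degree $O(\deg c)$ rather than $\deg c$. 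That is enough for the polynomial-equivalence conclusions, but not for the levelwise set equality you assert, and the "capacity" argument matching $\Theta_c$ against the squares is never actually carried out (note also that what you need is the per-circuit bound $c \in \HierSch{\PSDSOS}{\constraints}{\deg c}$, not merely membership at level $2d$, for the multiplication by $G_i$ to stay within degree $2d$).

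The genuine gap is your strictness argument, i.e., the claim $\HierPut{\PSDSOS}{\constraints}{2d} \subsetneq \HierSch{\PSONC}{\constraints}{2d}$ and the statement that SONC$^{*}$ strictly contains SDSOS. Your separating example does not separate. The logic "certifying $M_n$ in the Putinar hierarchy forces the use of the constraints and hence a strictly larger degree" is borrowed from \cref{cor:sdsos_vs_sonc}, where it works only because there all constraints are chosen of degree larger than $t$. In the present section $\constraints$ is \emph{required} to contain the degree-two Boolean constraints $\pm(x_i^2 - x_i)$, so using constraints is cheap. Concretely, modulo the ideal generated by the $x_i^2 - x_i$ one has $M_n \equiv 1 - x_1\cdots x_n \equiv (1 - x_1 \cdots x_n)^2$, and the ideal correction $M_n - (1-x_1\cdots x_n)^2$ can be written as $\sum_i h_i(x_i^2 - x_i)$ and then absorbed into Putinar--SDSOS terms by splitting each $h_i$ into a difference of binomial squares and using both signs $\pm(x_i^2-x_i) \in \constraints$. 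This places $M_n$ in $\HierPut{\PSDSOS}{\constraints}{D}$ with $D = O(n)$, a constant factor times $\deg M_n = 2n+2$; so $M_n$ admits Putinar--SDSOS certificates of degree $O(d)$ and, under \cref{Def:PolynomialEquivalent}, witnesses nothing. The paper's strictness rests on a qualitatively stronger fact: the example of \cite[Theorem 19]{DresslerKW18} behind \cref{Corollary:ThereExistsNoPutinarPositivstellensatzForSDSOS} is defined over the Boolean hypercube and admits \emph{no} Putinar--SDSOS certificate of any degree, while it does admit a Schm\"udgen--SONC certificate. Non-existence at every degree is what simultaneously gives the levelwise strict inclusion and the proof-system statement; a merely constant-factor degree penalty, even if you could prove one for $M_n$, would not.
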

	
	\begin{proof}
			Following~\cite{DresslerKW18}[Definition 10], for every $\mathbf{v} \in \{0,1\}^n$ the function
		\begin{eqnarray}
		\struc{\delta_\mathbf{v} (\mathbf{x})} \ := \ \prod_{j \in [n]:~v_j=-1} \left( 1-x_j     \right) \cdot
		\prod_{j \in [n]:~v_j=1}  x_j    \label{Equ:DefKroneckerDelta}
		\end{eqnarray}
		is called the \struc{\emph{Kronecker delta (function)}} of the vector $\mathbf{v}$.
        By \cite[Theorem 12]{DresslerKW18} $f$ has a certificate of the form
        {\small	\begin{eqnarray}
\hspace*{1cm} f(\mathbf{x}) & = & \sum_{\mathbf{v} \in \feasible}  c_\mathbf{v} \delta_\mathbf{\mathbf{v}}(\mathbf{x})   +    \sum_{\mathbf{v} \in \{0,1\}^n \setminus \feasible}  c_\mathbf{v}  \delta_\mathbf{\mathbf{v}}(\mathbf{x}) p_\mathbf{v}(\mathbf{x})    +    \sum_{j = 1}^n s_j(\mathbf{x}) g_j(\mathbf{x}) + \sum_{j = 1}^n s_{n + j}(\mathbf{x}) (-g_j(\mathbf{x})),\label{Equ:PolynomialNonnegativityCertificate}
\end{eqnarray}}
	where $s_1,\ldots, s_{2n}$ are SONCs of degree at most $n-2$, $c_\mathbf{v} \in \mathbb{R}_{\geq 0}$, $g_i \in \preprime{\constraints}$, and $p_\mathbf{v} \in \constraints$.
	
	Part (1): By \cite[Lemma 14]{DresslerKW18}, $\delta_{\Vector{v}}(\Vector{x}) \in \preprime{\constraints}$. Thus, by \cref{cor:SONCcontainsSDSOS}, it only remains to show that every SONC involved in the certificate is a binomial square. But this was shown in Case 2 of the proof of \cite[Theorem 16]{DresslerKW18}.
	
	Part (2): Follows immediately from the fact that the example from \cite[Theorem 19]{DresslerKW18} to prove \cref{Corollary:ThereExistsNoPutinarPositivstellensatzForSDSOS} is an example defined over the Boolean hypercube.
	
	Part (3): Follows from the fact that, by \cref{Lemma:CircuitsAreJuntas}, every SDSOS certificate can be rewritten as the SONC certificate. 
	\end{proof}

	\subsection{SDSOS vs SA}
	\label{sec:sa_vs_sdsos_0/1}
	

	In this section we show that for BCPOPs, SA contains SDSOS.


	\begin{theorem}
		\label{thm:SDSoSvsSA_0/1}
        Let $\constraints$ be as before with $\feasible \subseteq \{0,1\}^n$. Let $f \in \R[\Vector{x}]$. For every $d \in \N^*$ we have $\Bound{f}{\PSDSOS}{\constraints}{2d} \leq \ \Bound{f}{\PSA}{\constraints}{4d}$.
        		
	\end{theorem}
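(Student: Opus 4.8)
The plan is to derive the inequality by composing two containment results already proved in the paper, routing SDSOS through SONC on the way to SA. Concretely, I would establish the chain
\begin{align*}
\Bound{f}{\PSDSOS}{\constraints}{2d} \ \leq \ \Bound{f}{\PSONC}{\constraints}{2d} \ \leq \ \Bound{f}{\PSA}{\constraints}{4d},
\end{align*}
and then read off the claim from the outermost terms. Both intermediate inequalities are available in the present Boolean hypercube setting, so the argument is essentially a bookkeeping of degrees.

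First I would obtain the left inequality directly from \cref{cor:SONCcontainsSDSOS}. Its content is \cref{lem:f_sdoss_is_sonc}: every scaled diagonally dominant summand is a nonnegative circuit polynomial, hence a SONC. Consequently any degree-$2d$ SDSOS certificate $f - \lambda = \sum_i s_i g_i$ is \emph{verbatim} a degree-$2d$ SONC certificate, since replacing each $s_i$ by the same polynomial viewed as a SONC leaves the identity and all degrees unchanged. This step is a relabeling of the ground set and introduces no new constraints, so the degree stays fixed at $2d$.

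Next I would apply \cref{cor:sonc_vs_sa_0/1}, which holds in exactly this Boolean hypercube setting, to get the right inequality. This is the only step with a genuine transformation: writing each SONC summand as $s_i = \sum_j c_{ij}$ with the $c_{ij}$ nonnegative circuit polynomials satisfying $\deg(c_{ij} g_i) \leq 2d$, \cref{Lemma:CircuitsAreJuntas} shows that each $c_{ij}$ of total degree $\deg(c_{ij})$ is a $\deg(c_{ij})$-junta and therefore lies in the SA ground set. The passage from $2d$ to $4d$ is solely the degree-convention artifact noted after \eqref{eq:intro_PP_9}, whereby the SA program of degree $4d$ internally admits juntas of degree $2d$.

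I do not anticipate a real obstacle, as both component inequalities are already in hand. The single point requiring care is the degree accounting through the SONC-to-SA step: one must verify that the factor of two is correctly absorbed by the historical SA convention and that the constraint degrees $\deg(g_i)$ are carried along faithfully, so that the resulting junta certificate for $f - \lambda$ never overflows the degree-$4d$ budget. Granting this, combining the two displayed inequalities yields $\Bound{f}{\PSDSOS}{\constraints}{2d} \leq \Bound{f}{\PSA}{\constraints}{4d}$, which is precisely the assertion.
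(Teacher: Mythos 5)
Your proposal is correct, and it is exactly the one-line derivation that the paper itself records immediately before its proof: the theorem ``follows immediately from \cref{cor:sonc_vs_sa_0/1} and \cref{thm:sonc_vs_sdsos_0/1}.'' Your first inequality $\Bound{f}{\PSDSOS}{\constraints}{2d} \leq \Bound{f}{\PSONC}{\constraints}{2d}$ is \cref{cor:SONCcontainsSDSOS} (equivalently Part (3) of \cref{thm:sonc_vs_sdsos_0/1}, both resting on \cref{lem:f_sdoss_is_sonc}), your second is \cref{cor:sonc_vs_sa_0/1} verbatim, and your degree bookkeeping is consistent with the paper's conventions: the factor of two is purely the historical SA indexing, under which $\Bound{f}{\PSA}{\constraints}{4d}$ admits certificates in $\HierPut{\sSA}{\constraints}{2d}$, so a degree-$2d$ SONC certificate with circuits replaced by juntas of the same degree fits the budget.

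Note, however, that the proof the paper actually writes out is deliberately a different one: an independent argument on the dual side that never touches SONC. There one takes a pseudoexpectation feasible for the degree-$4d$ SA dual, reformulates the SA constraints as positive semidefiniteness of the principal submatrices ${M_g^{4d}}_{\big|\mathcal{P}(K)}$ of the moment/localizing matrices indexed by power sets of variable subsets $K$ with $|K| \leq \lfloor (4d-\deg(g))/2 \rfloor$ --- via the M\"obius-matrix congruence $Z_K^{-1}\, {M_g^{4d}}_{\big|\mathcal{P}(K)}\, (Z_K^{-1})^\top$, which diagonalizes each such block with diagonal entries equal to SA junta evaluations --- and then observes that any $2 \times 2$ submatrix ${M_g^{2d}}_{\big|\{K,L\}}$ required to be PSD by the SDSOS dual \eqref{eq:intro_PP_8_2} is a principal submatrix of ${M_g^{4d}}_{\big|\mathcal{P}(K \cup L)}$, since $|K \cup L| \leq 2\lfloor (2d-\deg(g))/2 \rfloor \leq \lfloor (4d-\deg(g))/2 \rfloor$; Sylvester's criterion then closes the argument. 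Your route is shorter and purely primal, a certificate-level composition of results already in hand. The paper's route buys independence from the SONC machinery (in particular from \cref{Lemma:CircuitsAreJuntas}), shows exactly where the degree doubling arises on the moment side, and exhibits the direct linear-algebraic relationship between the LP relaxation (SA) and the SOCP relaxation (SDSOS), which is why the authors call it individually interesting. Both are valid proofs of the stated inequality.
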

	
	The theorem follows immediately from \cref{cor:sonc_vs_sa_0/1} and \cref{thm:sonc_vs_sdsos_0/1}. We provide, however, an independent proof here, which works on the dual side, and which we consider to be individually interesting.

	For every set $K \subseteq [n]$ we denote its \struc{\textit{power set}} by $\struc{\mathcal{P}(K)}$. 

   	\begin{proof}
   		Consider the problem
   		\begin{equation}
   		\label{eq:SDSoSvsSA_SA_2}
   		\struc{\PSAD^{{'}4d}}:\min\left\{\tilde{\mathbb{E}}[f]~ \middle|~ \tilde{\mathbb{E}}[1] = 1,~ {M_g^{4d}}_{\big|\mathcal{P}(K)}   \succeq 0,~  g \in \mathcal{G},~ K \subseteq [n],~  |K|\leq  \left\lfloor \frac{4d-\deg(g)}{2} \right\rfloor \right\}.
   		\end{equation}  
   		This problem is a relaxation of the problem $\PSOSD^{4d}$, see~\eqref{eq:intro_PP_6}, since, by Sylvester's Theorem, a symmetric matrix $M$ is PSD if and only if all the principal submatrices are PSD. 
   		
   		First, we show that the degree $4d$ Sherali Adams is equivalent to~\eqref{eq:SDSoSvsSA_SA_2}, a similar reasoning can be found also e.g., in~\cite[Section 3.2, Equation (19)]{Laurent03}. Consider a constraint $g \in \mathcal{G}$ and a set $K \subseteq [n]$, such that $|K| \leq \lfloor\frac{4d-\deg(g)}{2} \rfloor$. A \struc{\emph{M\"obius matrix} $Z_K^{-1}$}, $Z_K^{-1}\in \{-1,0,1\}^{2^{|K|} \times 2^{|K|}}$, is a square matrix indexed by subsets $I,J$ of $K$ such that $Z_K^{-1}(I,J)=(-1)^{|J \setminus I| }$ if $I \subseteq J$ and $Z_K^{-1}(I,J)=0$ otherwise. Compute a matrix:
   		\begin{equation}
   		\struc{{D_g^{4d}}_{\big|\mathcal{P}(K)}} \ := \ Z_K^{-1}     {M_g^{4d}}_{\big|\mathcal{P}(K)}    \left( Z_K^{-1} \right)^\top.
   		\end{equation}
   		One can check that ${D_g^{4d}}_{\big|\mathcal{P}(K)}$ is a diagonal matrix with entries 
   		$$
   		{D_g^{4d}}_{\big|\mathcal{P}(K)}(I,I) = \sum_{I \subseteq H \subseteq K} (-1)^{|H \setminus I|} \tilde{\mathbb{E}}\left[ g \cdot  \prod_{x \in H} h_h \right] = \tilde{\mathbb{E}}\left[ g\cdot  \prod_{i \in I} x_i \prod_{j \in K \setminus I} \left( 1- x_j \right)      \right].
   		$$
   		see. e.g.~\cite[Lemma 2]{Laurent03}.

   		Finally, since $Z_K^{-1}    {M_g^{4d}}_{\big|\mathcal{P}(K)}    \left( Z_K^{-1} \right)^\top$ is a congruent transformation of ${M_g^{4d}}_{\big|\mathcal{P}(K)}$ that preserves eigenvalues, we get that ${M_g^{4d}}_{\big|\mathcal{P}(K)}  \succeq 0 \Leftrightarrow {D_g^{4d}}_{\big|\mathcal{P}(K)}   \succeq 0 \Leftrightarrow  {D_g^{4d}}_{\big|\mathcal{P}(K)}(I,I) \geq 0$, for all $I \in \mathcal{P}(K)$. Setting $J =K \setminus I$ we get a one-to-one mapping to functions in $\PSA_\mathcal{G}^{4d}$.
   		
   		Next, we show that every linear operator $\tilde{\mathbb{E}}[\cdot]$ that is feasible for $\PSAD^{{'}4d}$ is also feasible for $\PSDSOSD^{{'}2d}$, that is, for every  $g \in \mathcal{G}$ and every $K,~L \subseteq [n]$, such that $|K|,~|L| \leq \lfloor \frac{2d-\deg(g)}{2} \rfloor$ the matrix ${M_g^{2d}}_{\big|\{K,L\}} $ is PSD.

   		
   		 Consider the set $H =K \cup L$. Since $|H| \leq 2 \lfloor \frac{2d-\deg(g)}{2} \rfloor$, for $d,\deg(g) \in \mathbb{N}_+$ we have $|H| \leq \lfloor \frac{4d-\deg(g)}{2} \rfloor$. Now consider a submatrix ${M_g^{4d}}_{\big|\mathcal{P}(H)} $. Every linear operator $\tilde{\mathbb{E}}[\cdot]$ that is feasible for $\PSAD^{{'}4d}$ has to satisfy ${M_g^{4d}}_{\big|\mathcal{P}(H)} \succeq 0 $. Finally, since ${M_g^{2d}}_{\big|\{K,L\}} $ is a principal submatrix of 
   		${M_g^{4d}}_{\big|\mathcal{P}(H)} $, by the Sylvester's criterion the linear operator $\tilde{\mathbb{E}}[\cdot]$ that is feasible for $\PSAD^{{'}4d}$ has to satisfy also ${M_g^{2d}}_{\big|\{K,L\}} \succeq 0 $. This finishes the proof.
   	\end{proof}

	
%
%
	\bibliographystyle{amsalpha}
	\bibliography{Hier_cont}
	
	
	

\end{document}